\newtheorem{fact}{Fact}
\newcolumntype{C}{>{\centering\arraybackslash}X}
\newcommand{\ours}{{\sc Orq}\xspace}
\newfontfamily\stixmath{STIXTwoMath}[
  Path = ./stix/,
  Extension = .otf,
]
\newcommand{\leftouterjoin}{\mathrel{\text{{\stixmath ⟕}}}}
\newcommand{\rightouterjoin}{\mathrel{\text{{\stixmath ⟖}}}}
\newcommand{\fullouterjoin}{\mathrel{\text{{\stixmath ⟗}}}}
\renewcommand{\@algocf@capt@plain}{bottom}
\let\oldnl\nl
\newcommand{\nonl}{\renewcommand{\nl}{\let\nl\oldnl}}
\newcommand{\Com}[1]{\Comment*{\small #1}}
\newcounter{protocol}
\newcounter{parentAlgoLine}
  \protected@edef\theparentequation{\theequation}%
\newcommand{\share}[1]{\ensuremath{\llbracket #1 \rrbracket}\xspace}
\definecolor{codegreen}{rgb}{0,0.6,0}
\definecolor{codegray}{rgb}{0.5,0.5,0.5}
\definecolor{codepurple}{rgb}{0.58,0,0.82}
\definecolor{backcolour}{rgb}{0.95,0.95,0.92}
\lstdefinestyle{apistyle}{
    backgroundcolor=\color{white},
    commentstyle=\color{codegreen},
    keywordstyle=\color{magenta},
    numberstyle=\tiny\color{codegray},
    stringstyle=\color{codepurple},
    basicstyle=\ttfamily\small,
    breakatwhitespace=false,         
    breaklines=true,                 
    captionpos=b,                    
    keepspaces=true,                 
    numbers=left,                    
    numbersep=3pt,                  
    showspaces=false,                
    showstringspaces=false,
    showtabs=false,                  
    tabsize=2,
    frame=lines,	
	escapechar=\%
}
\newcommand{\camera}[1]{#1}
\newcommand{\tocs}[1]{#1}
\newcommand{\tocsCam}[1]{#1}
\newcommand{\protoPrefixAgg}{\text{\sc AggNet}}
\newcommand{\genperm}{\texttt{genShardedPerm}\xspace}
\newcommand{\genpermpair}{\texttt{genShardedPermPair}\xspace}
\newcommand{\applyperm}{\texttt{applyShardedPerm}\xspace}
\newcommand{\applyinvperm}{\texttt{applyInverseShardedPerm}\xspace}
\begin{document}

\title{ORQ: Complex Analytics on Private Data with Strong Security Guarantees}

\addauthor[email = {elibaum@bu.edu},
           orcid = {0009-0003-7311-1705},
           inst = {1},
           onclick = {https://elibaum.com},
]{Eli Baum}

\addauthor[email = {sambux@bu.edu},
           orcid = {0000-0003-0627-5091},
           inst = {1},
           onclick = {https://sambux.org},
]{Sam Buxbaum}

\addauthor[email = {nitinm@cs.utexas.edu},
           orcid = {0009-0000-4928-6386},
           inst = {2},
           onclick = {https://nitinm25.github.io/},
           footnote = {Work done while at Boston University}
]{Nitin Mathai}

\addauthor[email = {mfaisal@bu.edu},
           orcid = {0009-0005-0040-9134},
           inst = {1},
           onclick = {https://cs-people.bu.edu/mfaisal/},
]{Muhammad Faisal}

\addauthor[email = {vkalavri@bu.edu},
           orcid = {0000-0001-8219-4862},
           inst = {1},
           onclick = {https://cs-people.bu.edu/vkalavri/},
]{Vasiliki Kalavri}

\addauthor[email = {varia@bu.edu},
           orcid = {0000-0002-7460-3272},
           inst = {1},
           onclick = {https://www.mvaria.com/},
]{Mayank Varia}

\addauthor[email = {liagos@bu.edu},
           orcid = {0000-0003-4692-3022},
           inst = {1},
           onclick = {https://cs-people.bu.edu/liagos/index.html},
]{John Liagouris}

\addaffiliation[ror = {05qwgg493},
                city = {Boston, MA},
                country = {USA}]{Boston University}

\addaffiliation[ror = {00hj54h04},
                city = {Austin, TX},
                country = {USA}]{University of Texas at Austin}
                
\authorrunning{Baum, Buxbaum, Mathai, Faisal, Kalavri, Varia and Liagouris}

\genericfootnote{Extended version of \cite{orq-conf} (SOSP 2025).}

\maketitle

\begin{abstract}
We present \ours, a system that enables collaborative analysis of large private datasets using cryptographically secure multiparty computation (MPC). \ours protects data against semi-honest or malicious parties and can efficiently evaluate relational queries with multi-way joins and aggregations that have been considered notoriously expensive under MPC. 
To do so, \ours eliminates the quadratic cost of secure joins by leveraging the fact that,  in practice,  the structure of many real queries allows us to join records and apply the aggregations ``on the fly'' while keeping the result size bounded. 
On the system side, \ours contributes generic oblivious operators, a data-parallel vectorized query engine, a communication layer that amortizes MPC network costs, and a dataflow API for expressing relational analytics --- all built from~the~ground~up.

We evaluate \ours in LAN and WAN deployments on a diverse set of workloads, including complex queries with multiple joins and custom aggregations. When compared to state-of-the-art solutions, \ours significantly reduces MPC execution times and can process one order of magnitude larger datasets. For our most challenging workload, the full TPC-H benchmark, we report results entirely under MPC with Scale Factor 10 --- a scale that had previously been achieved only with information leakage or the use of trusted compute.
\end{abstract}\

\section{Introduction}\label{sec:intro}

Secure multiparty computation (MPC) \cite{10.1145/3387108} is a cryptographic technique that distributes trust across non-colluding parties, which perform a computation collectively.  
Thanks to its decentralized nature, it offers a robust solution that protects data both from individual parties and from powerful adversaries, who may have compromised a subset of the parties. 

To avoid leaking information about the data, MPC programs are \emph{oblivious}, i.e., they perform the same operations and memory accesses for all inputs of the same size.  In practice, oblivious execution hides access patterns and the data distribution by generating worst-case outputs. For example, a secure MPC filter on a table with $n$ records returns a table of the same size. This guarantees that the filter selectivity remains hidden from computing parties and external adversaries.
However, hiding the data distribution for binary operators incurs a quadratic blowup in the worst case.  An oblivious relational join on two input tables with $n$ records will return the Cartesian product of size $n^2$. 
The problem gets worse for large $n$ and multi-way join queries, due to a cascading effect: in general, a naive oblivious evaluation of a query with $k$ joins has $O(n^{k+1})$ time and space complexity.

Secure join queries have been extensively studied in the context of peer-to-peer MPC, where data owners also act as computing parties~\cite{Volgushev2019Conclave, Poddar2021Senate, fang2024secretflow, bater2018shrinkwrap, bater2020saqe, Bater2017SMCQL, peng2024mapcomp, luo2024secure, Wang2021Secure}.
In this setting, performance can be improved by introducing controlled leakage~\cite{djoin,bater2018shrinkwrap,bater2020saqe}, offloading computation to trusted parties~\cite{Volgushev2019Conclave},  or by applying optimizations tailored to specific data ownership schemes~\cite{fang2024secretflow, Wang2021Secure, luo2024secure, Poddar2021Senate, Bater2017SMCQL}.
For example, SecretFlow~\cite{fang2024secretflow} assumes the participation of exactly two data owners and Senate~\cite{Poddar2021Senate} expects each party to be a data owner who contributes an entire table to the analysis. 
Despite their particular differences, all of these works are restricted to a single (often custom) MPC protocol and threat model,  offload expensive operations to data owners for plaintext execution in their trusted compute, and do not scale beyond a small number of participants.

To lift the restriction on the number of data owners and the need for trusted resources, MPC has also been deployed in the outsourced setting~\cite{mozilla, google-sec-agg, prio, bwwc-privacy, FengScape2022,crypten,Asharov2023Secure, sefspu, secrecy, tva, rathee2024private}, where computation is performed by a small set of untrusted servers.
Although promising, this approach makes oblivious join queries even more challenging, as servers never get access to plaintext data and cannot perform operations in the clear. Prior works on outsourced systems for relational analytics suggest that avoiding the cascading effect is inherently incompatible with strong security guarantees: Secrecy~\cite{secrecy} ensures no information leakage but its join operator has quadratic cost, while Scape~\cite{FengScape2022} reduces the complexity to $O(n \log^2n)$ but leaks the join result size to the servers.

\paragraph{Core problem.} Making secure analytics practical in the outsourced setting requires tackling a hard open problem: \emph{how to efficiently evaluate queries that involve joins where both input tables may contain duplicates}. This is the crux of the cascading effect and becomes the norm when multiple data owners contribute data to a joint analysis. 
To evaluate such queries, all prior MPC works that support fully oblivious execution either compute the Cartesian product of the input tables (which is expensive), enforce a subquadratic upper bound on the join output (which silently drops joined records), or leak the join result size (which is insecure).
In fact, the problem is not specific to MPC but manifests itself in the broader area of oblivious analytics. When it comes to joins on duplicate keys, even state-of-the-art approaches for trusted hardware either require an upper bound on the join result~\cite{opaque, DBLP:conf/eurosys/DaveLPGS20} or introduce volume leakage~\cite{cryptoeprint:2025/183}.

This paper describes the design and implementation of \ours, an \emph{Oblivious Relational Query engine} that enables efficient analytics in the outsourced MPC setting with strong security guarantees.  \ours takes a holistic approach that considers the entire workload and supports a broad class of analytics without suffering from the limitations of prior approaches.
We make the observation that all relational queries used in prominent MPC works (even those that include multi-way joins on potentially duplicate keys) produce \emph{results whose size is worst-case bounded by the input size}; that is, there exists an upper bound $O(n)$ on the query output that is independent of the data distribution. 
By analyzing 31 real and synthetic workloads we collected from the standard TPC-H benchmark~\cite{tpc-h} and prior MPC works, we find that the cascading effect is avoidable in all of them, even in the  queries that have been used to pinpoint the problem and motivate the need for leakage~\cite{bater2018shrinkwrap}. \ours leverages the data-independent upper bound on the query result size to achieve fully oblivious tractable execution --- entirely under MPC --- with no leakage and without relying on trusted compute.

\paragraph{A new scalable approach to oblivious analytics.}
In a typical MPC scenario, data owners would only agree to participate in analyses that return aggregated results (to preserve their privacy), and,  \camera{oftentimes, the aggregation function is amenable to partial evaluation}. 
Driven by this insight, \ours's core novelty is a technique that avoids materializing the Cartesian product of joins by decomposing and applying the aggregation \camera{eagerly} (whenever possible) to bound the size of intermediate results.
In practice, \ours performs a series of composite join-aggregation operations resembling a MapReduce-style evaluation~\cite{pig-latin}. To achieve competitive performance, \ours generalizes recent oblivious shuffling and sorting techniques~\cite{ahi22-radixsort,permutation-correlations} to different MPC protocols, optimizes them for tabular data (to avoid redundant column permutations), and integrates them with a 
butterfly-style control flow~\cite{B_goodrich2011data,A_blanton2012private} (to further reduce communication rounds between parties). It evaluates all queries we have collected from prior MPC works with \emph{asymptotic cost equal to that of sorting the input tables}: $O(n\log n)$ operations, $O(\log n)$ rounds, and $O(n)$ memory, where $n$ is the total number of input records. 

\paragraph{Contributions.} We make the following contributions:

\begin{itemize}

\item We develop \ours, a system that enables complex analytics on large private datasets using cryptographically secure MPC.  \ours provides a novel system runtime that employs data-parallel vectorized execution,  a high-level dataflow API for composing secure analytics on tabular~data, and an efficient communication layer that amortizes MPC costs in LAN and WAN settings.

\item We design an oblivious join-aggregation operator that supports all join types (inner, outer, semi- and anti-join), a broad class of aggregation functions, and composition with itself or other operators (e.g., user-defined filters) to form \emph{arbitrary} data pipelines.  Using this operator and known transformations from relational algebra, \ours can evaluate all queries we have collected from prior MPC works, without suffering from the quadratic blowup of intermediate results. 

\item We design all \ours operators to make black-box use of MPC primitives (e.g., $+, \times, \texttt{div}, \oplus, \land$).  As such, they can be easily instantiated with \emph{any} MPC protocol to support use cases with different threat models and security requirements. To showcase generality, we instantiate $\ours$ with three protocols: two semi-honest (for honest and dishonest majorities) and one maliciously secure protocol in the honest-majority setting.

\item We present a comprehensive experimental evaluation on diverse workloads, including the \emph{complete} TPC-H benchmark (for the first time under MPC)\footnote{We do not yet support substring operations.} and nine queries from prior works inspired by real use cases. We show that \ours significantly outperforms state-of-the-art relational MPC systems while also scaling complex analytics to one order of magnitude larger inputs.

\end{itemize}

\urlstyle{tt}

\noindent Additionally, we contribute data-parallel vectorized implementations of oblivious shuffling, quicksort, and radixsort that achieve better concrete performance than existing approaches. For oblivious sorting, we report results with up to half a billion rows --- $10\times$ larger than the best published results ~\cite{ahi22-radixsort, 10.1145/3460120.3484560} (whose implementations are not publicly available). We have released \ours as open-source at:
\begin{center}
    \url{https://github.com/CASP-Systems-BU/orq}.
\end{center}

\subsection{Overview}

\tocsCam{
This extended version of \ours provides further background material and additional benchmarks. In Section \ref{sec:system}, we introduce the \ours system, its supported workloads, and its security guarantees. Section \ref{sec:secure-comp} provides readers with a gentle introduction to secure computation, with a focus on multiparty protocols. Next, Section \ref{sec:approach} introduces \ours's major operators, while Section \ref{sec:sort} describes our oblivious sorting framework. Section \ref{sec:join} illustrates our new composite join operator. Finally, Section \ref{sec:impl} describes our implementation, and Section \ref{sec:eval} evaluates \ours on a series of microbenchmarks and real-world queries. We discuss related work in Section \ref{sec:related}. We also include appendices with more technical details on oblivious shuffle (\S\ref{apdx:shuffle}), sorting (\S\ref{sec:radixsort-analysis} and \S\ref{sec:bound-quicksort}), and joins (\S\ref{apdx:join}). Bandwidth measurements for the experiments in \S\ref{sec:eval} are included in Appendix \ref{apdx:bw}.
}
\section{The \ours System}
\label{sec:system}

\begin{figure}[t]
    \centering
            \begin{subfigure}{0.6\textwidth}
            \centering
            \includegraphics[width=\textwidth]{figs/overview}
            \end{subfigure}
         \caption{\ours targets the typical outsourced setting that uses a small set of computing parties (whose number depends on the MPC protocol) to support any number of data owners and analysts. }\label{fig:overview}
\end{figure}

\ours targets the outsourced setting in Figure~\ref{fig:overview} that has recently gained popularity in industry~\cite{mozilla,google-sec-agg, sefspu, FengScape2022, crypten},  academia \cite{prio,waldo,rathee2024private,secrecy,tva}, and non-profit organizations \cite{un-case-studies}. In our setting, \emph{data owners} distribute \emph{secret shares} of their data to a set of untrusted non-colluding \emph{computing parties}, in practice, servers managed by different infrastructure providers. Computing parties execute a query on the input shares, following a specified MPC protocol, and send shares of the result only to the designated \emph{data analysts}.  \ours currently supports three MPC protocols and can be instantiated with semi-honest or malicious security. We describe \ours's secret-sharing techniques and protocols in \S\ref{sec:secret-sharing}--\ref{sec:model}.

\ours is designed for the typical outsourced setting with no access to a Trusted Compute Base (TCB)~\cite{secrecy, FengScape2022}, but can also be deployed on premises in scenarios where data owners want to participate in the computation~\cite{Volgushev2019Conclave, Bater2017SMCQL, Poddar2021Senate} (and in that case it could use optimizations from prior work that leverage trusted resources). Naturally, \ours also supports the scenario in Flock~\cite{kaviani2024flock},  where a single data owner wants to securely outsource some computation~to~the~cloud.

\subsection{Supported workloads}\label{sec:workloads}
\ours supports a rich class of relational workloads, allowing an arbitrary composition of the following operators to form query plans: \texttt{SELECT}, \texttt{PROJECT}, \texttt{INNER\_JOIN}, \texttt{(RIGHT} / \texttt{LEFT} / \texttt{FULL)-OUTER\_JOIN}, \texttt{SEMI-JOIN},  \texttt{ANTI-JOIN}, \texttt{GROUP BY}, \texttt{DISTINCT}, \texttt{ORDER BY} and \texttt{LIMIT}. It also supports all common aggregations (\texttt{COUNT}, \texttt{SUM}, \texttt{MIN}, \texttt{MAX}, \texttt{AVG}) and user-defined aggregations, which can be constructed with its secure primitives: addition, multiplication, bitwise operations, comparisons, and division with private or public divisor.
\camera{\ours does not impose any restrictions on the database schema,  such as the existence of integrity constraints (e.g.,  primary or foreign keys),  but analysts can leverage these constraints, if they exist, to improve execution performance. }

\paragraph{What can \ours compute efficiently?} \ours's focus is on complex, multi-way join-aggregation queries that have so far been considered impractical in the outsourced MPC setting~\cite{FengScape2022,Asharov2023Secure}. Specifically, \ours can efficiently compute acyclic conjunctive queries~\cite{yannakakis1981algorithms} that include:
(i) only one-to-many joins, i.e., joins where at least one input has unique keys, or (ii) many-to-many joins (with duplicate keys in both inputs) as long as there exists a decomposable (or algebraic) aggregation function~\cite{10.1145/1629575.1629600, Gray1996Data}, and any group-by keys appear in a single input table.
Interestingly, all queries we have collected from prior relational MPC works~\cite{Volgushev2019Conclave,secrecy,Poddar2021Senate,bater2018shrinkwrap,fang2024secretflow, Wang2021Secure,Bater2017SMCQL,luo2024secure, bater2020saqe} fall in one of these two categories.

\camera{We note that many-to-many joins (also known as cross joins) are particularly common in collaborative data analysis because integrity constraints (e.g., the existence of unique join keys) may not be public.  Moreover, even if integrity constraints are known, it is hard to enforce them outside MPC,  unless data owners rely on a trusted third party or are willing to reveal some information about their data.}

\paragraph{What can \ours not compute efficiently?} There are queries outside this class that are inherently difficult to perform under MPC, and for these queries \ours falls back to an oblivious $O(n^2)$ join algorithm, like prior work. Examples of such queries (in SQL syntax) are: \texttt{SELECT COUNT(*) FROM A,B,C WHERE A.X=B.Y AND B.Y=C.W AND C.W=A.Z} (cyclic), or  \texttt{SELECT COUNT(*) FROM A,B WHERE A.X=B.Y AND A.W>B.Z}, when \texttt{A.X} contains duplicates (aggregation across tables).

\subsection{Programming model}
\ours users write queries in a dataflow-style API,  similar to the ones provided by Apache Spark~\cite{DBLP:journals/cacm/ZahariaXWDADMRV16} and Conclave~\cite{Volgushev2019Conclave}.  Relational operators are modeled as transformations on input tables and can be chained to construct complex queries.  Users submit their queries to the \ours engine, which compiles them into secure MPC programs for the computing parties.

\begin{lstlisting}[label={example}, caption={TPC-H Q3 with the \ours API},language=C++, float]
// Initialize C (Customers), O (Orders), LI (Lineitem)
...

// Apply filters
C.filter("MktSegment" == SEGMENT); %\label{line:filter-start}%
O.filter("OrdDate" < DATE);
LI.filter("ShipDate" > DATE); %\label{line:filter-end}%

// Calculate the revenue
LI["Revenue"] = LI["Price"] * (100 - LI["Discount"]) / 100;

// Joins and group-by with aggregation
auto RES = C.inner_join(O, {"CustKey"}) %\label{line:join-custkey}%
            .inner_join(LI, {"OrdKey"}, { %\label{line:join-ordkey}%
                {"OrdDate",  "OrdDate",  copy},
                {"Priority", "Priority", copy}})
            .aggregate({"OrdKey", "OrdDate", "Priority"}, {%\label{line:agg}%
                {"Revenue", "TotalRevenue", sum}})
            .project({"OrdKey", "TotalRevenue",
                      "OrdDate", "Priority"})
            .sort({{"TotalRevenue", DESC}, {"OrdDate", ASC}})
            .limit(10);
\end{lstlisting}

Listing~\ref{example} shows an implementation of TPC-H Q3 in the \ours API (see the benchmark specification~\cite{tpc-h} for the SQL definition). The query combines filters, joins, and aggregations to compute the shipping priority and total revenue of customer orders that had not been shipped as of a given date.  

Filters (lines~\ref{line:filter-start}--\ref{line:filter-end})  expect logical predicates that are constructed with \ours secure primitives (e.g., $\land$, $=$, $\neq$, $\geq$, $\leq$).  Joins expect a key and an optional list of aggregation functions (each with an input and output column)
that will be applied on the same key as the join.  Line~\ref{line:join-custkey} joins tables \texttt{C} and \texttt{O} on column \texttt{CustKey}. The result is a new (anonymous) table that is then joined with table \texttt{LI} on column \texttt{OrdKey}, in line~\ref{line:join-ordkey}. 
The join propagates \texttt{OrdDate} and \texttt{Priority} values from the left input to the output by applying the function \texttt{copy}.

The \texttt{aggregate} operator (line~\ref{line:agg}) expects a list of columns as grouping keys (\texttt{\{"OrdKey", "OrdDate", "Priority"\}}) and a list of functions along with their input and output columns (\texttt{\{"Revenue", "TotalRevenue", sum\}}).  \ours can apply multiple aggregation functions in a single control flow (see \S\ref{sec:composition}) and can also update columns in place (e.g., \texttt{\{"C",~"C",~sum\}}) to reuse memory.  \ours provides a set of built-in aggregation functions (e.g., \texttt{count}, \texttt{sum}, \texttt{min}, \texttt{max}, and \texttt{avg}) but also allows users to define custom ones.

Our rationale for exposing a dataflow-style API in \ours instead of a SQL interface was motivated by a recent critique of SQL~\cite{DBLP:conf/cidr/0001L24}.  We also found that multi-way join queries with nested sub-queries written with the \ours API were often more concise and easier to understand than their SQL counterparts.  For this paper,  we implemented 31 queries of varying complexity in a few hours. 

\subsection{Threat models and security guarantees}\label{sec:model}
Like in prior works~\cite{Volgushev2019Conclave,Poddar2021Senate,secrecy,tva},  data owners and analysts who want to use \ours must agree beforehand on the computation to execute.  The query and data schema are public and also known to the computing parties.  To ensure data privacy,  \ours uses end-to-end oblivious computation and protects all input, intermediate, and output data --- including the sizes of intermediate and final results --- throughout the analysis.

\paragraph{Threat models.}
MPC protocols protect data against computing parties and external adversaries who may control $T$ out of $N$ computing parties, where $T<N$.  \ours can withstand adversaries who can be \emph{semi-honest} (``honest-but-curious'') or \emph{malicious}.  A semi-honest adversary passively attempts to break privacy without deviating from the protocol (by monitoring the state of the parties it controls, e.g., via inspection of messages and access patterns).
A malicious adversary actively attempts to break privacy or correctness and can deviate arbitrarily from the protocol for the parties it controls.

\paragraph{Supported protocols.} 
\ours operators are protocol-agnostic, that is, they use the underlying MPC functionalities ($+, \times, \oplus, \land$) as black boxes \cite{DBLP:conf/crypto/DamgardN03}.
This way, they can be instantiated for any MPC protocol by simply replacing the corresponding functionalities. In its current version, \ours supports three state-of-the-art protocols with different threat models and guarantees: (i) the semi-honest ABY protocol \cite{aby} for dishonest majorities ($T=1$, $N=2$), (ii) the semi-honest protocol by Araki et al.~\cite{ArakiFLNO16} for honest majorities ($T=1$, $N=3$), and (iii) the malicious-secure Fantastic Four protocol~\cite{fantastic4} for honest majorities ($T=1$, $N=4$). 

\paragraph{Security guarantees.} 
\ours inherits the security guarantees of its underlying MPC protocols, ensures a fully oblivious control flow, and always operates over secret-shared data. It provides (i) \emph{privacy}, which means that computing parties and adversaries do not learn anything about the data, and (ii) \emph{correctness}, which means that all participants are convinced that the computation result is accurate.
Our malicious-secure protocol provides security \emph{with abort}, meaning that honest parties  halt the computation if they detect malicious behavior (to protect data privacy). That said, \ours can easily be amended to support robust execution if desired.
\section{Background on secure computation}
\label{sec:secure-comp}

The field of secure computation investigates efficient ways to execute \textit{public programs} on \textit{private data}. We wish to run computations in such a way that nothing more than the output is revealed: inputs and intermediate values remain hidden. There are three general approaches to solving this problem:

\begin{enumerate}
    \item \textbf{Trusted Execution Environments} (TEEs) provide security via hardware protections. Information is encrypted at rest, and only decrypted within a secure enclave which performs the actual computations.

    \item \textbf{Multiparty Computation} (MPC) distributes trust among multiple distinct parties and assumes that an adversary can only compromise a strict subset of the parties. Programs are decomposed into distributed protocols, where parties only exchange masked values which reveal nothing about the data.
    
    \item \textbf{Fully Homomorphic Encryption} (FHE) bases secure computation on encryption schemes which are designed such that operations can be performed \textit{directly} on ciphertexts and fully outsourced to an untrusted party, with clients performing only (relatively) cheap encryption and decryption operations.
\end{enumerate}

The current state of affairs is such that TEEs provide the lowest overhead for secure computation --- but carry with them the greatest risks. FHE represents the holy grail of secure computation, but current approaches are orders of magnitude slower. MPC represents a middle ground, where trust assumptions are minimal but protocols are efficient enough to evaluate real-world workloads. We do not address in this paper the related, but much more challenging, goal of hiding the program itself.

\subsection{Approaches to multiparty computation}

There are two main kinds of multiparty computation: \textit{garbled circuits} and \textit{secret sharing}. We focus exclusively on the latter in \ours, but for completeness we provide a brief overview of garbled circuit protocols.
The \textit{garbler} party ``encrypts'' the truth table for each gate in the computation circuit (represented as a directed acyclic graph of logic gates)
in such a way that the \textit{evaluator} party can obliviously execute the circuit. 
Garbled circuit protocols execute in a constant number of rounds but typically have higher bandwidths: for every bit of data in a circuit, they must transmit multiple ciphertexts. While recent theoretical works bring this communication overhead down \cite{liu2025bitgc} at the expense of increased computation, garbling is usually not the practical approach for real-world workloads, where we deal with very wide circuits (i.e., vectorized inputs) and relatively low-latency connections between all parties.

\label{sec:secret-sharing}

\begin{figure}
    \centering
    \includegraphics[width=0.35\columnwidth]{figs/sharing.pdf}
    \caption{\tocs{Example of replicated boolean secret sharing}}
    \label{fig:ss-example}
\end{figure}

By contrast, \ours protects data using \emph{secret sharing}~\cite{shamir}. Given $N$ computing parties, a secret $\ell$-bit string $s$ is encoded using randomized $\ell$-bit shares $s_1, s_2, \cdots, s_N$ that individually are uniformly distributed over all possible $\ell$-bit strings and collectively are sufficient to reconstruct the secret $s$. 
\ours supports two types of secret sharing: \emph{arithmetic} (for numbers) and \emph{boolean} (for strings or numbers). In arithmetic sharing, the secret value $s$ is encoded with $N$ random shares such that $s = s_1 + s_2 + \cdots + s_N\bmod 2^\ell$, where $\bmod$ handles overflow. Similarly, in boolean sharing, $s$ is encoded such that $s = s_1 \oplus s_2 \oplus \cdots \oplus s_N$, where $\oplus$ is bitwise XOR. \ours also provides efficient primitives to convert between the two representations without relying on data owners. In this paper, when not clear from context, we will use $\share{x}$ to denote a secret sharing of $x$. Figure \ref{fig:ss-example} shows an example: the data owner, with secret $s=50$, randomly generates three secret shares $(203, 13, 244)$ and shares two with each of the three parties.

Data owners distribute secret shares to non-colluding computing parties, which in practice are \ours servers running in different trust domains. Each computing party receives a strict subset of the shares (per secret) so that no single party can reconstruct the original data. 
Given random shares, parties can execute a secure MPC protocol to jointly perform arbitrary computations on the shares --- as if they had access to the secrets --- and end up with shares of the result.
Operations on shares break down into a set of low-level secure primitives (e.g., $+, \times, \oplus, \land$), which are more expensive than the respective plaintext primitives because they require communication: for two $\ell$-bit shares, secure bitwise AND requires exchanging $O(\ell)$ bits between computing parties. XOR is a local operation which parties can apply directly to their secret shares:
\begin{gather*}  
x=x_1 \oplus x_2 \oplus x_3\qquad y=y_1 \oplus y_2 \oplus y_3\\
\mathcal P_1 : z_1 := x_1 \oplus y_1 \\
\mathcal P_2 : z_2 := x_2 \oplus y_2 \\
\mathcal P_3 : z_3 := x_3 \oplus y_3 \\
(x_1 \oplus y_1) \oplus (x_2 \oplus y_2) \oplus (x_3 \oplus y_3)=x\oplus y
\end{gather*}

\begin{figure}[t]
    \centering
    \includegraphics[width=0.8\columnwidth]{figs/and.pdf}
    \caption{\tocs{Overview of secure AND protocol by Araki et al.~\cite{ArakiFLNO16}. Example values are highlighted in orange. (1) Parties perform local computation on their shares. (2) They create a replicated sharing by sending one share to their neighboring party. The protocol requires a random sharing of zero, which can be generated non-interactively using pairwise pseudorandom generators with pre-shared~keys.}}
    \label{fig:and}
\end{figure}

AND gates require communication, and thus a more complex protocol. \ours supports multiple underlying MPC protocols (each of which targets a different threat model); Figure \ref{fig:and} and the example below discuss the AND protocol for our replicated three-party (semi-honest, honest majority) protocol \cite{ArakiFLNO16}.

So-called \textit{replicated} secret-sharing protocols distribute multiple shares to each party. Observe that since all shares are random, each party still has no information about the secret: the third missing share always masks the true value. In the following discussion, we write $ab$ to represent the componentwise (bitwise) AND $a \wedge b$.
$$
\mathcal P_1 : (x_1, x_2)
\qquad
\mathcal P_2 : (x_2, x_3)
\qquad \mathcal P_3 : (x_3, x_1)
$$
To compute a sharing of $\share{x\wedge y}$, we consider the expansion of the product:
$$
\begin{matrix}
x\wedge y = & x_1 y_1 & \oplus & x_1 y_2 & \oplus & x_1 y_3& \oplus \\
& x_2 y_1 & \oplus & x_2 y_2 & \oplus & x_2 y_3 &\oplus \\
& x_3 y_1 & \oplus & x_1 y_2 & \oplus & x_3 y_3
\end{matrix}
$$
Each party has sufficient information to compute a single share of the product.
\begin{align*}
\mathcal P_1 &: z_1 := x_1 y_1 \oplus x_1 y_2 \oplus x_2 y_1\\
\mathcal P_2 &: z_2 := x_2 y_2 \oplus x_2 y_3 \oplus x_3 y_2\\
\mathcal P_3 &: z_3 := x_3 y_3 \oplus x_3 y_1 \oplus x_1 y_3
\end{align*}
However, while $\bigoplus_i z_i = x\cdot y$, the parties no longer possess a replicated sharing. Therefore, the final step is a communication phase, where parties mask their calculated shares with a random sharing of zero (i.e., $\bigoplus r_i = 0$) and send one to their neighbor. This communication is inherent to all MPC protocols and becomes the major bottleneck for large computations.

We did not yet discuss how to implement OR gates. Taking advantage of De Morgan's Law and the fact that bitwise negation is a local operation under boolean secret sharing (one party takes the negation of their shares), an OR gate has exactly the same cost as an AND gate: $x\vee y = \neg [(\neg x) \wedge (\neg y)]$. Together with the AND and XOR gates described above, this gives a complete boolean basis.
When working over the ring of integers $\bmod\ 2^k$, the same protocols apply: XOR becomes the addition protocol; the AND gate is analogous to multiplication.

Historically, secret-sharing-based MPC was considered to be constrained by its higher number of communication rounds as compared to garbled circuits. However, \ours demonstrates a viable approach to evaluate complex analytics in only $O(\log n)$ rounds; experimental evidence, even in the challenging WAN setting, confirms that we pay only a modest overhead compared to LAN.

\subsection{From secure primitives to more complex operations}\label{sec:inequality-example}

\begin{figure}[t]
    \centering
    \includegraphics[width=0.75\textwidth]{figs/inequality.pdf}
    \caption{\tocs{Example inequality circuit that computes $x>y$, where $x$, $y$ are 4-bit unsigned integers. (1) Compute the bitwise XOR between $x$ and $y$. (2) Use a prefix-OR circuit to identify the location of the first difference between $x$ and $y$. (3) XOR this value with itself, offset by one, to generate a one-hot difference vector marking the location of the difference. (4) AND this value with the input $x$, to find the value of $x$ at the bit-position where $x$ and $y$ first differ. If $x$ is a one, then $y$ must be a zero, so $x > y$. Otherwise, $x \leq y$. (5) Compute the parity of the outputs. \textbf{Orange gates} (AND, OR) require communication. XOR gates are local. Dashed red barriers denote dependent computational stages, and thus sequential rounds of communication. Gates in the same stage are independent and can be computed in parallel; the entire computation takes three rounds. OR gates have the same cost as AND gates.}}
    \label{fig:inequality}
\end{figure}

In this section, we discuss an example of a multiparty computation problem. We show how to compute the secure comparison of two secrets, by decomposing the problem into XOR and AND gates, as introduced above. Figure \ref{fig:inequality} shows a circuit representation of the procedure described in the text.

We assume $x$ and $y$ are 4 bits wide and secret shared in the boolean domain (although we show plaintext values below for simplicity). For illustrative purposes, we will set $x=7$ and $y=5$. In binary, these are:
$$
\begin{matrix}
x = & 0 & 1 & 1 & 1\\
y = & 0 & 1 & 0 & 1\\
\end{matrix}
$$
First, we compute a bitstring, $\mathsf{sameBits}$, which consists of all zeros up until the first bit that differs between $x$ and $y$. This can be seen as $\mathsf{prefixOR(x\oplus y)}$ (and therefore computed by any efficient parallel-prefix circuit, which decomposes into a series of XORs and ANDs).
$$
\begin{array}{rcccccccc}
x \oplus y =        & 0 & 0 & 1 & 0\\
\mathsf{sameBits} = & 0 & 0 & 1 & 1\\
\end{array}
$$
At this point, observe that the LSB of $\mathsf{sameBits}$ gives (a secret sharing of) the equality predicate. Next, we calculate the expression:
$
(\mathsf{sameBits} \oplus (\mathsf{sameBits} \gg 1)) \wedge x
$, which extracts the bit of $x$ coincident with the first difference between $x$ and $y$. If 1, $x$ must be greater than $y$ (since this implies $y$ is 0 at that location, and the same as $x$ at all prior locations). Otherwise, $x$ must be less than or equal to $y$.
$$
\begin{array}{rcccccccc}
\mathsf{sameBits}               = & 0 & 0 & 1 & 1\\
\mathsf{sameBits} \gg 1         = & 0 & 0 & 0 & 1\\
\dots \oplus \mathsf{sameBits}  = & 0 & 0 & 1 & 0 \\
\dots \wedge x                  = & 0 & 0 & 1 & 0
\end{array}
$$
To move this bit to the LSB, we compute the XOR of all bits (implemented with \texttt{popcount} or a parity-check instruction) and place the result in the LSB.
$$
\begin{array}{rcccccccc}
\share{x > y} = & 0 & 0 & 0 & 1
\end{array}
$$

An important optimization in the design of efficient systems for MPC is the observation that independent messages within a stage can be batched together and sent in a single communication round. In the above computation, we do not need to perform eight sequential calls to the AND functionality. Instead, all run in parallel, and we communicate in a single round trip. This general structure also applies to the \textit{vectorized} operations common in collaborative analytics.

To compute other inequalities, we use the standard identities:
$a > b = \neg(a \leq b) = \neg(b \geq a)$. Further handling is necessary for signed integers; the result must be inverted if $\mathsf{sign}(x)\neq\mathsf{sign}(y)$ and $y<0$. We omit a full description of this protocol for brevity and refer readers to our implementation for more details.

\subsection{Obliviousness}
\ours programs must be \textit{oblivious}. A computation is oblivious if its control flow and observable side effects (such as data access patterns and timing) are input-independent, so that a computing party learns nothing about input, output, or intermediate values.  Obliviousness prevents data reconstruction attacks \cite{DBLP:conf/ccs/GrubbsLMP18,10.1145/2810103.2813651,DBLP:conf/ndss/BlackstoneKM20,DBLP:conf/eurosp/KamaraKMSTY22,DBLP:journals/popets/KamaraKMDPT24} but has an inherent overhead compared to plaintext computation,  as it requires using worst-case sizes and transforming all data-dependent conditionals (if-then-else statements) into data-independent ones. For example, consider the expression
$$\texttt{if a > b then c = a else d = b}$$
As written, a computing party (or adversary) could infer the relationship \texttt{a > b} by observing whether memory location \texttt{c} or \texttt{d} was written to.  Such expressions exist in relational analytics, e.g., in filters and \texttt{CASE} statements. To avoid leaking information about the data, \ours transforms and evaluates this expression  as shown below:

\begin{lstlisting}[numbers=none, language=C++]
condition = (a > b);
c = (1 - condition) * c + condition * a;
d = (1 - condition) * b + condition * d;
\end{lstlisting}

We now write to both memory locations, irrespective of the condition, and follow no data-dependent branches.  In practice, \texttt{a},  \texttt{b}, \texttt{c}, and \texttt{d} are secret-shared values, operator $>$ triggers a secure inequality protocol like the one described in \S\ref{sec:inequality-example}, and operators $\mathtt{*,+,-}$ correspond to secure functionalities provided by the underlying MPC protocol.

\subsection{Dishonest-majority MPC in the preprocessing model}

The two-party dishonest majority protocol \cite{aby} requires the generation of input-independent preprocessing material. We commonly refer to this preprocessing step as the \textit{offline phase} and measure it separately. The \textit{online phase} refers to the actual execution of the computation, after the inputs to the program are available.

Three kinds of preprocessing material, or correlated randomness, are required for our protocols: (1) boolean Beaver triples, for evaluating AND gates; (2) arithmetic Beaver triples, for evaluating multiplication gates; and (3) permutation correlations, for performing oblivious shuffle.

\subsubsection{Beaver triples}\label{sec:triples}

Beaver \cite{Beaver91a} made the simple but groundbreaking observation that correlated randomness could allow for the efficient evaluation of circuits in the online phase. For an arbitrary finite ring $\mathcal R$, a Beaver triple is a secret sharing of three values $a,b,c\in\mathcal R$ such that $a\times_{\mathcal R} b = c$, where $\times_{\mathcal R}$ represents the multiplication operation in the ring, and $a,b$ are sampled randomly. We will often write such a triple as $(\share a, \share b, \share c)$. Protocol \ref{beaver-mult} shows how we can ``derandomize'' a single Beaver triple to perform an arbitrary secret-shared multiplication.

\begin{algorithm}[t]
    \caption{Beaver multiplication protocol}
    \label{beaver-mult}
    \Input{$\share x,\share y\in\mathcal R$}
    \Output{$\share{z := x\times y}$}

	$\share a, \share b, \share c \gets \mathsf{getTriple}(\mathcal R)$ \\

	$\alpha\gets\mathsf{open}(x + a)$\\
	$\beta\gets\mathsf{open}(y + b)$\\

    \Return{$\share z:=(\alpha\times \share y) - (\share a \times \beta) + \share c$}
\end{algorithm}

Correctness follows directly:
\begin{align*}
\alpha\share y - \share a\beta + \share c &= (x + a)\share y - \share a(y + b) + \share{ab}\\
&=\share{xy}+\share{ay}-\share{ay}-\share{ab}+\share{ab}\\
&=\share{xy}
\end{align*}

Security comes from the fact that $a$ and $b$ are uniform and thus perfectly mask $x$ and $y$ in the opened values $\alpha,\beta$. The triple acts like a one-time pad; accordingly, each triple can be used only once (and, in fact, using a triple twice leaks information about the secret).

\paragraph{Example of rings.} This general template applies over any (finite) ring.
\begin{itemize}
	\item When $\mathcal R = \mathbb Z_{2^k}$, the integers $\bmod\ 2^k$, we have standard integer arithmetic. Generating such triples is expensive, but this ring is often the most natural one to express real-world problems in collaborative analytics. We implement the approaches of Gilboa \cite{gilboa} and Doerner et al. \cite{doerner2025}.
	\item When $\mathcal R = \mathbb F_2$, we have the boolean triples. Here, operations are performed bitwise; addition is XOR $(\oplus)$ and multiplication is AND $(\land)$. These triples reduce down to a cryptographic primitive known as \textit{random oblivious transfer}; we use the primitives provided by \texttt{libOTe} \cite{libote}.
\end{itemize}

In the semi-honest setting, Beaver triples are generally generated non-interactively given \textit{oblivious linear evaluations} (OLE) --- correlations of the form $w_0 = x_1\cdot y_0 + z_1$, where one party has $(w_0, y_0)$ and the other has $(x_1, z_1)$.\footnote{The term ``oblivious linear evaluation'' refers to the view of $P_1$'s randomness as defining a random line, and $P_0$'s as a random point on that line. In a finite ring, exponentially many points lie on each line, and exponentially many lines pass through each point, so neither party can learn anything about the other's input.} Given two OLEs, we can construct a Beaver triple with no further communication. Subscripts denote holders of each secret share.
\begin{align*}
\text{First OLE} \qquad s_0 &= t_1 u_0 + v_1 \quad\implies s_0 - v_1 = t_1 u_0\\
\text{Second OLE} \qquad w_0 &= x_1 y_0 + z_1 \quad\implies w_0 - z_1 = x_1 y_0\\
\end{align*}
$$
\underbrace{(u_0 - x_1)}_{\share a}\underbrace{(y_0 - t_1)}_{\share b} =
\underbrace{u_0 y_0 - u_0 t_1 - x_1 y_0 + x_1 t_1}_{\share c}
$$
The first and last terms of $\share c$ can be computed locally by $P_0$ and $P_1$, respectively, and each OLE gives a secret sharing of the middle terms. The final shares of the Beaver triple are:
$$
\begin{array}{rcccc}
      &   \share a  & \share b  & \share c \\
P_0 : (&  u_0, & y_0, & u_0y_0-s_0-w_0&)\\
P_1 : (& -x_1, &-t_1, & x_1t_1+v_1+z_1&)
\end{array}
$$

In the evaluation section (\S \ref{sec:eval-preproc}), we consider two protocols for generating base OLEs: the original one, due to Gilboa \cite{gilboa}, which directly emulates the schoolbook multiplication circuit using oblivious transfer, and has communication complexity $O(\ell^2)$; and a more recent construction by Doerner et al. \cite{doerner2025} which uses the Chinese Remainder Theorem and a prime-field decomposition to evaluate a multiplication gate with only $O(\ell\log\ell)$ communication.

\subsubsection{Permutation correlations}

A thematically similar construction allows us to perform oblivious shuffles of large vectors. Let $\vec x\in \mathcal R^n$ be a vector of length $n$ over some finite ring. A length-$n$ \textit{permutation correlation} is a pair of tuples $(\pi, \vec B_0), (\vec A, \vec B_1)$, where $\pi\in S_n$ (the symmetric group of size $n$) and $\vec A, \vec B_0, \vec B_1\in \mathcal R^n$. They are correlated such that $\pi(\vec{A}) = \vec{B}_0 + \vec{B}_1$ for a suitable definition of the addition operation.

Say we have two parties, $P_0$ and $P_1$, and $P_1$ has the secret vector $\vec{x}$. We will show an example of how to use the correlation $(\mathcal P_0: (\pi, \vec{B}_0), \mathcal P_1: (\vec{A}, \vec{B}_1))$ to permute $\vec{x}$ by $P_0$'s secret permutation $\pi$. First, $P_1$ masks the secret vector $\vec{x}$ with the random vector $\vec{A}$, and sends $\vec{\Delta} = \vec{x} - \vec{A}$ to $P_0$. $P_1$'s share of the shuffled vector is $\vec{B}_1$. $P_0$, upon receiving $\vec{\Delta} = \vec{x} - \vec{A}$ from $P_1$, permutes $\vec{\Delta}$ by $\pi$ to obtain $\vec{C}' = \pi(\vec{\Delta}) + \vec{B}_0$, and $P_1$'s share of the shuffled vector is $\vec{C}'$. We can observe that the output is a secret sharing of $\pi(\vec{x})$, where the final step follows from the fact that $\pi(\vec{A}) = \vec{B}_0 + \vec{B}_1$.
$$\vec{B}_1 + \vec{C}' = \vec{B}_1 + \pi(\vec{x} - \vec{A}) + \vec{B}_0 = (\vec{B}_0 + \vec{B}_1) - \pi(\vec{A}) + \pi(\vec{x}) = \pi(\vec{x})$$

A single permutation correlation suffices for one party (the party holding the permutation $\pi$) to permute a secret-shared vector according to $\pi$. To perform an oblivious shuffle, where \textit{neither} party knows the permutation that has been applied, we use two permutation correlations, $\pi_0$ and $\pi_1$, where each party acts as the sender (the party holding the permutation) in one of the correlations. The resulting permutation $\pi' = \pi_0 \circ \pi_1$, where $\circ$ represents composition, is random and hidden from both parties.

\section{\tocs{Oblivious building blocks}}\label{sec:approach}
\tocs{In this section, we review the oblivious primitives used as building blocks in \ours.}

\paragraph{Protocol primitives.} Protocol developers are expected to implement the basic cryptographic operations of a multiparty computation protocol: $+,\times,\oplus,\wedge$, along with utility primitives $\mathtt{open}$ and $\mathtt{secret\_share}$. From there, \ours provides implementations of comparison protocols ($=,\neq,>$), boolean circuits for addition (both ripple-carry and parallel-prefix) and division (non-restoring), along with conversion operators \texttt{a2b} and \texttt{b2a} to change between secret-sharing domains.

\paragraph{Table operations.} To hide true table sizes,  \ours tables (input, intermediate, or output) have a special \emph{validity column} of secret-shared bits, encoding which rows are ``valid'' $(1)$ or ``dummy'' $(0)$.  Dummy rows are those invalidated by oblivious operators and can also exist in input tables as a result of padding by data owners.  Since valid bits are secret-shared, \ours servers cannot distinguish real from invalid rows and operate on worst-case table sizes during the entire execution. Validity columns are never exposed to parties; they are managed by \ours operators transparently. Invalid rows are masked and shuffled before opening to prevent leakage of ``deleted'' data.

\ours provides efficient implementations of core oblivious operators for relational tables. An oblivious \textbf{filter} applies a predicate to each row in the input table and outputs a secret-shared bit that denotes whether the row passes the filter. Oblivious \textbf{deduplication} (the ``distinct'' operator) is applied to one or more columns treated as a composite \emph{key}. It obliviously sorts the table on these columns and then compares adjacent rows to mark the first occurrence of each distinct key with a secret-shared bit.
Oblivious \textbf{multiplexing} is used in sorting, join, and aggregation operators. An arithmetic multiplexer $\texttt{Mux}(b, x, y)$ implements $\texttt{b\,?\,y\,:\,x}$ by evaluating $(1-b)\cdot x + b \cdot y$, where $b \in \{0, 1\}$ and $x$, $y$ are the elements to multiplex (boolean multiplexers are defined similarly).

\paragraph{Oblivious aggregation.} Aggregation identifies groups of records based on a (composite) key and applies one or more aggregation functions to each group, updating the input table in place.
\ours implements oblivious aggregation using sort followed by a butterfly-style network \cite{B_goodrich2011data, A_blanton2012private, mpc-sorting} which requires $O(n\log n)$ operations and messages, $O(\log n)$ communication rounds, and $O(n)$ space per party, where $n$ is the cardinality of the input table. \camera{Protocol~\ref{alg:prefixagg} shows the  control flow of the aggregation, which resembles the Hillis-Steele network \cite{hillis1986data}.  We give a proof of correctness in Appendix \ref{apdx:agg}}. \ours also provides an implementation of the asymptotically-optimal Brent-Kung network \cite{brentkung}.

The algorithm expects a table $T$ sorted on column $K$ (the grouping key) and compares keys at distance $d$, while doubling the distance after every iteration.  Each step applies an aggregation function $f$ to pairs of input values and multiplexes the result (in-place) into column $G$, based on whether the respective records belong to the same group $(b=1)$.
For simplicity, the pseudocode assumes a single key column $K$ and a single aggregation function $f$. In practice, however, \ours's operator expects one or more key columns and can evaluate more than one aggregation function \emph{on the fly} (on the same or different columns) by reusing $b$ (line~\ref{line:bit_b}). 
In~\S\ref{sec:join}, we show how \ours leverages this functionality to reduce the cost of joins followed by aggregation.

\begin{algorithm}[t]
\caption{\camera{\textsc{Aggregation Network (\protoPrefixAgg{})}}}\label{alg:prefixagg}
\Input{Table $T$ with n rows, group key column $K$, function~$f$, input column $A$, result column $G$}
\Result{Table $T$ with one value in $G$ per unique $K$}
$r.G \gets r.A$ \Com{\ Copy input into result}
\For{$d = 1;\;d < n - d;\;d\gets 2\cdot d$}{
	\For{each pair of rows ($r_i$, $r_{i+d}$), $0 \leq i < n-d$}{
		$b~\gets~r_i.K == r_{i+d}.K$\Com{\ Compare keys}\label{line:bit_b}
		$g~\gets~f(r_i.G,~r_{i+d}.G)$\Com{\ Aggregate}
		$r_{i+d}.G~\gets~\texttt{Mux}(b,~r_{i+d}.G, ~g)$\Com{\ Update}
	}
}    
\end{algorithm}

\section{Sorting protocols for tabular data}\label{sec:sort}

\tocs{

We implement two oblivious sorting protocols, quicksort and radixsort, each competitive with the state of the art in terms of performance.
Both algorithms rely on state-of-the-art oblivious shuffling~\cite{ahi22-radixsort, permutation-correlations}, which we generalize to work with all MPC protocols in \ours.

In \S \ref{sec:shuffling-primitives}, we give background on the oblivious shuffling primitives upon which our sorting protocols are based.
In \S \ref{sec:base-sorting}, we describe the ``base'' version of each protocol as an in-place, ascending-order sorting protocol that requires unique keys.
In \S \ref{sec:wrapper}, we describe a wrapper protocol that removes these restrictions and allows us to extract the sorting permutation applied to the input as a secret-shared permutation, which we then use to sort multiple columns in a table through our \textsc{TableSort} protocol in \S \ref{sec:table-sort}.
In Appendix \ref{sec:radixsort-analysis}, we compare our radixsort protocol in detail with the prior state of the art, and Appendix \ref{sec:bound-quicksort} discusses a probabilistic bound on the amount of preprocessing required for quicksort when using the two-party ABY \cite{aby} protocol.

\paragraph{Note on sorting complexity in MPC}
The lower bound on the complexity of general-purpose comparison-based sorting algorithms for input size $n$ is $\Omega(n \log n)$ comparisons. In MPC, comparisons are expensive operations, typically requiring $\Omega(\ell)$ communication, where $\ell$ is the bitwidth of the elements.
Hence, quicksort in MPC requires $\Omega(n \log n)$ comparison operations over $\ell$-bit strings, for a total of $\Omega(\ell n \log n)$ bits of communication.
Alternatively, sorting networks require $O(n \log^2 n)$ comparisons and $\Omega(\ell n \log^2 n)$ communication \cite{Batcher68}.
Finally, radixsort~\cite{ahi22-radixsort} has either $O(\ell^2 n)$ or $O(\ell n \log n)$ communication depending on subtleties in the protocol that we discuss in Appendix \ref{sec:radixsort-analysis}.

In the general case, where the plaintext vector might contain unique elements, our input space must have a bitwidth of at least $\ell = \Omega(\log n)$. Both quicksort and radixsort thus require $\Omega(n \log^2 n)$ communication, while sorting networks require $\Omega(n \log^3 n)$ communication. Hence, $\Omega(n \log^2 n)$ communication is the gold standard of general MPC sorting protocols.

\subsection{Oblivious shuffling primitives}\label{sec:shuffling-primitives}
While a detailed discussion of our oblivious shuffling functionality is presented in Appendix~\ref{apdx:shuffle}, here we outline the basic primitives and notation that are necessary to understand our sorting protocols.

\tocsCam{
We support two forms of secret sharing for permutations: \textit{sharded permutations}, where the permutation is split into individually random permutations that compose to form the underlying permutation ($\langle \pi \rangle=\pi_n\circ\cdots\circ\pi_1$), and \textit{elementwise permutations}, written as $\share \pi$, where the destination indices of the permutation are secret shared as any other vector (either arithmetic or boolean).
}
Additionally, we support $\texttt{applyElementwisePerm}$ to apply an elementwise permutation to a vector, $\texttt{invertElementwisePerm}$ for in-place inversion of the underlying permutation, and $\texttt{convertElementwisePerm}$ to convert between a boolean and arithmetic secret sharing. We will generally use $\pi$ to refer to random or unstructured permutations, and we will use $\sigma$ to refer to sorting permutations.

\paragraph{Example.} We briefly walk through an example of oblivious shuffling using a sharded permutation. Consider the two-party setting with parties $P_0$ and $P_1$, and let the input be a secret sharing of the vector $(1, 2, 3, 4)$. Each party generates a local random permutation: $P_0$ generates $\pi_0 = (3, 1, 2, 4)$, and $P_1$ generates $\pi_1 = (1, 4, 3, 2)$. The parties will take turns applying their permutation to the secret-shared input obliviously (that is, without revealing to the other party which permutation has been applied) using the protocol by Peceny et al. \cite{permutation-correlations}. The outcome is that first we apply $\pi_0 = (3, 1, 2, 4)$ to the vector $(1, 2, 3, 4)$, yielding $(2, 3, 1, 4)$, and then we apply $\pi_1 = (1, 4, 3, 2)$ to the intermediate vector to obtain the final shuffled vector $(2, 4, 1, 3)$.

\subsection{Base sorting protocols}\label{sec:base-sorting}

\ours provides two base sorting protocols: an iterative version of quicksort and a radixsort protocol that combines features of Asharov et al.~\cite{ahi22-radixsort} and Bogdanov et al.~\cite{sharemind-radixsort}.
The protocols only sort in ascending order, and quicksort requires unique keys in the sorting column; these restrictions simplify the protocols and allow for more effective optimization.

\subsubsection{Quicksort} \label{sec:quicksort}

Quicksort is typically considered to be the most efficient general sorting algorithm in practice. However, it typically has a recursive and data-dependent control flow, both of which make a na\"ive application of quicksort to the MPC setting problematic in both efficiency and security. The typical instantiation of the algorithm would incur $O(n)$ rounds of comparisons, which is prohibitively slow. 
To address this challenge, we devise an iterative control flow that applies all independent comparisons at a given recursion depth simultaneously, facilitating vectorization and reducing the number of rounds to $O(\log n \log \ell)$.

To benefit from the efficiency of quicksort in the MPC setting, we use the \textit{shuffle-then-sort} paradigm by Hamada et al.~\cite{shuffle-then-sort,10.1145/3460120.3484560} in which we obliviously shuffle the input vector first. As a result, the subsequent data-dependent sorting protocol only reveals the (meaningless) order of the shuffled vector. Thus, these random bits provably leak nothing about the data, and the control flow of the comparisons in quicksort is entirely independent of the input vector (this should not be confused with one-bit leakage protocols, such as the protocol by Zhang et al.~\cite{DBLP:conf/sp/ZhangGYZYW24}).

The resulting protocol is described in Protocol \ref{iterative-quicksort} below.
All sorting happens in place, and the protocol alternates between local computation and rounds of full vector comparisons.
The protocol initializes a plaintext set of pivots $S$ with the first element of the vector and iterates until $S$ contains all elements in the vector.
In each iteration, we compare every element at index $i$ with one of the pivots: namely, the highest pivot index $j$ such that $j \leq i$ via a function $j \leftarrow \texttt{prevPivot}(i)$ that can be computed locally by each party. We create a secret-shared vector $\share{\vec{c}}$ of all previous-pivot elements, perform all comparisons between $\vec{x}$ and $\vec{c}$ in parallel, and omit comparisons for pivot elements (which would be compared with themselves).

After each round of secure comparisons, we reveal the result vector $\vec{r}$ and then partition the vector based on the desired pivot locations.
Concretely, for each pivot $p$,
we define a partitioning subroutine $\texttt{partition}(\share{\vec{x}}, p, i, \vec{r})$ that takes as input the vector $\share{\vec{x}}$, the start index $p$ where the pivot is located, an end index $i$, and $\vec{r}$.
The subroutine moves items in the subvector according to standard quicksort: all items less than the pivot, then the pivot itself, and all items greater than the pivot. It returns a set $S'$ of up to three locations for the pivot position and the next-round pivots for the left and right halves.
This step is entirely local since $S'$ and $\vec{r}$ are common knowledge.

To improve concrete efficiency, we can represent $S$ as a scaled indicator vector, $S_i=i\cdot\mathds I[i\text{ is a pivot}]$; i.e., if index $i$ is a pivot, index $i$ of $S$ contains $i$; otherwise, 0. Then, \texttt{prevPivot} is implementable via a single call to a prefix-max function (which computes a running maximum of a list). Likewise, the result of \texttt{partition} is no longer used to compute $S\cup S'$; instead, we update the new indices to contain their value, $\forall j \in S':S_j=j$.

\begin{algorithm}[ht]
    \caption{Quicksort}
    \label{iterative-quicksort}
    \SetKwInOut{Input}{input}
    \SetKwInOut{Output}{output}
    \Input{$\share{\vec{x}}$ containing unique elements}
    \Output{$\share{\vec{y}} = \sigma(\share{\vec{x}})$}

    $\share{\vec{x}} \gets \texttt{shuffle}(\share{\vec{x}})$\Com{\ Oblivious shuffle}
    $S \leftarrow \{1\}$\Com{\ Initial set of pivots}
    \While{$|S| \neq n$}{
        $\share{\vec{c}} \leftarrow [\,]$\Com{\ List of prev pivots to compare}
        \For{$i$ from $1$ to $n$}{
            $j \leftarrow \texttt{prevPivot}(i)$\Com{\ Previous pivot}
            $\share{\vec{c}}_i \leftarrow \share{\vec{x}}_j$\Com{\ Set value to compare with}
        }
        $\share{\vec{r}} \leftarrow \texttt{compare}(\share{\vec{x}}, \share{\vec{c}})$\Com{\ Perform the oblivious comparisons}
        $\vec{r} \leftarrow \texttt{open}(\share{\vec{r}})$\Com{\ Reveal the results}
        $i \leftarrow n$\Com{\ End index of current partition}
        \For{$p \in \text{\tt reversed}(S)$}{
            $S' \leftarrow \texttt{partition}(\share{\vec{x}}, p, i, \vec{r})$\Com{\ Partition subvector around the pivot}
            $S \leftarrow S \cup S'$\Com{\ Update pivot set}
            $i \leftarrow p - 1$\Com{\ Update index}
        }
    }
    \Return{$\share{\vec{x}}$}
\end{algorithm}

To see (informally) why this is secure, consider the information revealed to the adversary through the opening of comparison results. After the oblivious shuffle, each comparison reveals the relationship between two randomly chosen (and unknown) elements in the input list. Combining the revealed comparison results from each iteration together amounts to revealing the permutation applied to the shuffled list. However, the random and unknown shuffling permutation perfectly masks this information, so the adversary learns nothing about the ordering of elements in the original list.

\paragraph{Example.} We will now look at a toy example of the first iteration. Consider the input list $\vec{x} = (40, 50, 20, 80, 70, 30, 10, 60)$. The first step of the algorithm is to obliviously shuffle the list. Say the permutation applied is $\pi = (6, 1, 2, 3, 4, 8, 5, 7)$, which gives the shuffled vector $\pi(\vec{x}) = (50, 20, 80, 70, 10, 40, 60, 30)$. In the first iteration, we always choose the first element, $50$, as our initial pivot and compare it with all elements, giving us the vector $(\pi(\vec{x}) \geq 50) = (1, 0, 1, 1, 0, 0, 1, 0)$. We then (locally) rearrange the elements according to the indicator vector, so values greater than or equal to $50$ are moved to the back, yielding $(20, 10, 40, 30, 50, 80, 70, 60)$. The list is now correctly partitioned, where all elements preceding $50$ are less than $50$, and all elements following $50$ are greater than $50$. In the next iteration, our set of pivot elements will additionally contain $20$ and $80$, as they are the first elements in the newly created partitions.

We remark that Protocol \ref{iterative-quicksort} is only secure if all input elements in $\share{\vec{x}}$ are unique, so that each comparison check in $\vec{r}$ is equally likely to be true or false based only on the random shuffle and not the data $\vec{x}$. We remove this limitation in \S \ref{sec:wrapper} by making each element unique.

\subsubsection{Radixsort} \label{sec:radixsort}

Radixsort sorts a vector one bit at a time, moving from the least to most significant bit. 
Bogdanov et al.~\cite{sharemind-radixsort} permute the entire vector after each round of bit comparisons, while the later approach by Asharov et al.~\cite{ahi22-radixsort} composes the permutations bit by bit and applies them to the vector at the end, which is similar to what we do in \textsc{TableSort} at the table level (\S \ref{sec:table-sort}).
For radixsort, we have found that a hybrid approach which combines the eager technique by Bogdanov et al.~\cite{sharemind-radixsort} with the shuffling primitives by Asharov et al.~\cite{ahi22-radixsort} reduces the number of rounds for a small increase in bandwidth and outperforms both protocols by up to $44\%$ in practice. \ours implements a vectorized version of this hybrid approach. We provide a detailed analysis in Appendix~\ref{sec:radixsort-analysis}.

The radixsort algorithm is shown in Protocol \ref{radix-sort}. Our protocol takes two inputs beyond the data vector: the total number of bits to sort, $\ell$, and the number of least significant bits to skip, $\ell_s$.
The basic idea is simple: for each bit of the sorting key, we invoke the $\texttt{genBitPerm}$ protocol of Asharov et al., which returns an elementwise sharing of the sorting permutation for that bit, $\sigma_i$. Although $\texttt{genBitPerm}$ was proposed in the honest-majority three-party setting, we observe that it is actually agnostic to the setting and the number of parties, and it only makes black-box use of basic MPC primitives; therefore, it applies just as well in our dishonest-majority and malicious settings. We then apply the permutation to the full vector by invoking the $\texttt{applyElementwisePerm}$ protocol.

\begin{algorithm}[ht]
    \caption{Radixsort}
    \label{radix-sort}
    \SetKwInOut{Input}{input}
    \SetKwInOut{Output}{output}
    \Input{$\share{\vec{x}}$, $\ell$, $\ell_s$}
    \Output{$\share{\vec{y}} = \sigma(\share{\vec{x}})$}

    $\share{\vec{y}} \leftarrow \share{\vec{x}}$ \\
    \For{$i$ from $1$ to $\ell$}{
        $\share{\sigma_i} \leftarrow \texttt{genBitPerm}(\share{\vec{y}} \gg (i + \ell_s))$
        \Com{\ Single bit sort}
        $\share{\vec{y}} \leftarrow \texttt{applyElementwisePerm}(\share{\vec{y}}, \share{\sigma_i})$\Com{\ Permute according to sort}
    }

    \Return{$\share{\vec{y}}$}
\end{algorithm}

\paragraph{Example.} We demonstrate the radixsort protocol with a small example. Consider the following vector of two-bit elements, written as bitvectors: $(01, 10, 01, 11, 10, 00, 10, 00)$.
First, we isolate the least significant bit to obtain $10110000$. We sort the bitvector to obtain $00000111$, which gives us the applied permutation $\sigma_0 = (6, 1, 7, 8, 2, 3, 4, 5)$. We then apply $\sigma_0$ to the original input to sort the input by the least significant bit: $(10, 10, 00, 10, 00, 01, 01, 11)$. Next, we do the same to the most significant bit. We isolate the most significant bit: $11010001$. Then, we sort the bit to obtain $00001111$, giving us the applied permutation $\sigma_1 = (5, 6, 1, 7, 2, 3, 4, 8)$. We apply $\sigma_1$ to the current full vector to obtain our output: $(00, 00, 01, 01, 10, 10, 10, 11)$. We can see that the total permutation that has been applied from input to output is $\sigma_1 \circ \sigma_0$.

In the above example, we assumed that we could efficiently sort a single-bit vector, so we now demonstrate how that can be achieved. The insight (following from \cite{ahi22-radixsort}) is that a single-bit sort can be decomposed into two prefix sums and a multiplexing. First, we negate every bit in the vector, and a prefix sum of the result yields a vector of destination indices for each value that was original a zero. Consider the example in the prior paragraph of the least significant bit: $10110000$. We invert each bit to obtain $01001111$, and a prefix sum yields $(0, 1, 1, 1, 2, 3, 4, 5)$. This matches the permutation $\sigma_0$ for indices where the original bit was a zero. We apply another prefix sum to the original (non-negated) vector to obtain $(1, 1, 2, 3, 3, 3, 3, 3)$. Here, each element corresponding to a one in the input is the destination index of that element if we index starting from the first position that will contain a one. That is, we add $5$ (the number of zeros, obtained from the last element of the first prefix sum) to all elements, and we obtain $(6, 6, 7, 8, 8, 8, 8, 8)$. Finally, we multiplex between the two vectors, so if bit $i$ in the input is a zero, we choose the $i^\mathrm{th}$ element from the first list, and if bit $i$ is a one, we choose the $i^{th}$ element from the second list. The output is the permutation $\sigma_0 = (6, 1, 7, 8, 2, 3, 4, 5)$.

\subsection{Wrapper sort protocol} \label{sec:wrapper}

We now describe a general sorting wrapper protocol, \textsf{VectorSort}, that takes as input a boolean secret-shared vector $\share{\vec{x}}$ to sort, the order in which to sort (either ascending (\texttt{ASC}) or descending (\texttt{DESC})), and a base sorting protocol (either quicksort or radixsort). As output, it returns the sorted vector $\share{\vec{y}} = \sigma(\share{\vec{x}})$ and an elementwise sharing of the sorting permutation $\share{\sigma}$.
The wrapper protocol performs two important operations: it pads the input before sorting, and it extracts the sorting permutation after sorting

\newcommand{\brackets}[1]{\share{#1}\xspace}

\paragraph{Input padding.}
Following Hamada et al.~\cite{shuffle-then-sort}, for each element of the input vector $\vec{x}$, we append its index. This padding ensures that the elements are unique, that we can extract the sorting permutation, and that the sort is \emph{stable} in the sense that if $x_i = x_j$ with $i < j$, then $x_i$ will remain before $x_j$ in the sorted vector.
Concretely, to sort an input vector $\brackets{\vec{x}}$ in ascending order, we form the concatenated string $\brackets{x_i'} = \brackets{x_i} \; || \; \brackets{i}$, with the sharing of $i$ being a local operation using a protocol $\texttt{publicShare}$, which secret-shares a publicly known vector without communication. To ensure that all $n$ elements in $\vec{x'}$ are unique, we must add at least $\lceil \lg n \rceil$ bits of padding; our implementation always adds 32 bits of padding.

\paragraph{Permutation extraction.}
After the base sort of $\vec{x'}$ is completed, the $i^{\mathrm{th}}$ element in the input vector, with the padded value $i$, winds up at position $\sigma_i$. We then extract the sorting permutation from the padding bits so that we can apply it to other columns in a table; to the best of our knowledge, no prior works have detailed how to extract the sorting permutation from the padding. We separate the sorted list $\vec{y'}$ into the data $\vec{y} = \sigma(\vec{x})$ and a permutation $\pi$. The padding bits contain the result of applying the sorting permutation $\sigma$ to the identity permutation, so $\pi = \sigma(\mathds{I}_n)$, where $\mathds{I}_n$ represents the identity permutation $(1, \dots, n)$. By Fact \ref{perm-fact} (see Appendix \ref{apdx:shuffle}), $\sigma(\mathds{I}_n) = \mathds{I}_n \circ \sigma^{-1} = \pi$, so $\sigma = \pi^{-1}$. Hence, we obtain the sorting permutation $\sigma$ by inverting $\pi$ using the $\texttt{invertElementwisePerm}$ protocol.

\paragraph{Descending order.}
Sorting an input vector $\vec{x}$ in descending order can be done similarly.
As a thought experiment: it almost works to use the ascending-sort protocol above and locally reverse the output vector, except it breaks stability by reversing the tiebreak condition. That is, if $x_i = x_j$ and $i < j$, then ascending-sort preserves stability, but the local reverse ensures that $x_i$ always comes after $x_j$ instead.

To solve this issue, we negate the padding for each element. That is, each secret-shared element $\brackets{x_i}$ now gets padded to $\brackets{x_i'} = \brackets{x_i} \; || \; \brackets{-i}$.
Since we are attaching the padding bits as the least significant bits, we need any two values $-i$ and $-j$ where $i < j$ to satisfy $-i > -j$ even when interpreted as an unsigned integer. Our implementation uses $\pm\{1,\dots,n\}$ for the padding.

\paragraph{Putting it all together.}
We write the general sorting wrapper in Protocol \ref{sorting-wrapper} and summarize it here. We start by padding the input $\brackets{\vec{x}}$ with $\mathds{I}_n$ in order to perform an ascending sort for input size $n$, or with $-1 \cdot \mathds{I}_n$ to sort in descending order. The resulting padded input vector $\share{\vec{x'}}$ is fed into the base quicksort or radixsort protocol from \S \ref{sec:base-sorting}, which results in the padded sorted vector $\share{\vec{y'}}$.
Finally, we locally split $\share{\vec{y'}}$ into the unpadded result $\share{\vec{y}}$ and a permutation $\share{\pi}$, and invert $\share{\pi}$ to obtain the sorting permutation $\share{\sigma}$.
If sorting in descending order, before inverting $\share{\pi}$, we convert it from boolean to arithmetic sharing to negate each element in the permutation.
We return both $\share{\vec{y}}$ and $\share{\sigma}$.

\begin{algorithm}[t]
    \caption{\textsf{VectorSort}}
    \label{sorting-wrapper}
    \SetKwInOut{Input}{input}
    \SetKwInOut{Output}{output}
    \Input{$\share{\vec{x}}$, $\texttt{order} \in \{\texttt{ASC}, \texttt{DESC}\}$, $\texttt{sort} \in \{\texttt{quicksort}, \texttt{radixsort}\}$}
    \Output{$\share{\vec{y}} = \sigma(\share{\vec{x}})$, $\share{\sigma}$}

    $\vec{p} \leftarrow \mathds{I}_n$\Com{\ Padding is the identity permutation}
    \If {$\texttt{order} == \texttt{DESC}$}{
        $\forall i \in [n], p_i \leftarrow -p_i$\Com{\ Negate padding if sorting in reverse}
    }
    $\share{\vec{p}} \leftarrow \texttt{publicShare}(\vec{p})$\Com{\ Secret-share padding, known to both parties}
    $\forall i \in [n], \share{\vec{x'}}_i \leftarrow \share{\vec{x}}_i \; || \; \share{\vec{p}}_i$\Com{\ Update data vector with padding bits}
    $\share{\vec{y'}} \gets \texttt{sort}(\share{\vec{x'}})$\Com{\ Do the ascending-order sort}
    $\share{\vec{y}} \; || \; \share{\pi} \leftarrow \texttt{split}(\share{\vec{y'}})$\Com{\ Separate sorted data from sorted padding}
    \If {$\texttt{order} == \texttt{DESC}$}{
    \Comment{\ Type conversion on $\pi$ so we can negate it}
        $\share{\pi} \leftarrow \texttt{convertElementwisePerm}(\share{\pi}, \texttt{B}\to\texttt{A})$\\
        \Comment{\ $\pi$ represents permuted padding, convert back to domain $[n]$}
        $\forall i \in [n], \share{\pi}_i \leftarrow -\share{\pi}_i$\\
            }
    \Comment{\ Permuted padding is the inverse of the sorting permutation}
    $\share{\sigma} \leftarrow \texttt{invertElementwisePerm}(\share{\pi})$\\    \Return{$\share{\vec{y}}$, $\share{\sigma}$}
\end{algorithm}

Next, we describe how to use Protocol~\ref{sorting-wrapper} to efficiently sort a table with multiple columns.

}

\subsection{Table sort}\label{sec:table-sort}

\tocs{\ours sorts tables in place in Protocol~\ref{alg:sorting-wrapper} (\textsc{TableSort}) using the sorting wrapper protocol \textsf{VectorSort}. Each call to $\texttt{sort}$ executes the sorting wrapper protocol, and the output is the secret-shared sorting permutation. The protocol expects a table and a subset of columns ($1$ to $k$) to use as sorting keys, along with directions to sort each sorting key.}
The strawman approach is to sort the entire table for each key, which would incur expensive communication under MPC. \camera{This strawman is commonly used in other MPC systems that sort multiple columns, such as Secrecy \cite{secrecy}.} \textsc{TableSort} avoids this overhead as follows: it extracts sorting permutations (i.e., mappings from old to new positions) from key columns in line \ref{line:single-column-sort}, composes them from right to left (line 6), and applies the final permutation to all columns of the table once at the end~(lines~7-8).

\tocs{
Unlike prior works, \textsc{TableSort} uses different base sorting protocols depending on the key bitwidth.}
By default, it uses quicksort for large bitwidths (e.g., $\ell = 64$) and radixsort for small bitwidths ($\ell \leq 32$).

\begin{algorithm}[t]
    \caption{\textsc{TableSort}}
    \label{alg:sorting-wrapper}
    \Input{Table $T$ with $p$ columns $\{C_1, \dots, C_p\}$, integer $k \leq p$, orders $o =\{o_1, \dots, o_k\}$, $o_i \in \{\texttt{ASC}, \texttt{DESC}\}$}
    \Output{Table $T$ sorted on $\{C_1, \dots, C_k\}$}

    $\sigma \gets \mathbb{I}$\Com{\ Identity permutation}
    \For{$i$ from $k$ to $1$ descending}{
    	$t~\gets~C_i$\Com{\ Temp copy of column i}
        $\texttt{applyPerm}(t,~\sigma)$\Com{\ Permutation so far}
        $\_, \sigma' \gets  \textsf{VectorSort}(t,~\texttt{o}_i)$\Com{\ Sort temp column} \label{line:single-column-sort}
        $\sigma \gets \texttt{composePerm}(\sigma,~\sigma')$\Com{\ Build iteratively}
    }
    \Comment{\ Apply final permutation to all columns}
    \For{$i$ from $1$ to $p$}{
        $\texttt{applyPerm}(C_i,~\sigma)$
    }
\end{algorithm}

\section{A composite join-aggregation operator}\label{sec:join}

\tocs{
    In this section, we introduce our approach to evaluating the composite join-aggregation operator, its extensions (\S\ref{sec:generalization}--\ref{sec:duplicates}), and walk through a step-by-step example of a multi-way join query (\S\ref{sec:full-example}).
} \tocsCam{Our new operator runs in $O(n \log n)$ time, a significant speedup over the quadratic-time Cartesian product, and, crucially, allows us to bound the size of intermediate results and chain multiple joins together to execute complex analytics.}

\tocs{
\paragraph{Notation.}
To simplify the presentation, we avoid secret-sharing notation and describe the control flow of operators as if they are applied to plaintext data. We denote a column $C$ in table $T$ as $T.C$. For a row $r \in T$, we denote its $C$ value as $r.C$. When saying \emph{``an operator is applied to table $T$,''} we mean that \ours parties perform MPC operations on the secret-shared table $T$,  and produce secret shares~of~the~result.
}

Let $L$, $R$ be two tables with dimensions $n \times p$ and $m \times q$, respectively.
$L.V$ and $R.V$ are the special validity columns.  
Without loss of generality, we assume that $L$ and $R$ are joined on $j$ common columns $K = \{K_1, \dots, K_j$\}, $ 0 < j \leq p, q$.
We collectively refer to the columns in $K$ (and their values) as the \emph{join keys} $L.K$ and $R.K$. The join keys can be arbitrary columns and should not be confused with Primary Keys~(PK) or Foreign Keys~(FK). Our operator is more general than a PK-FK join, in that it handles cases where rows in either input may not have matching keys on the other side. It is also not equivalent to a sort-merge join, in that it computes a join and an (optional) aggregation within the same oblivious control flow (to reduce MPC costs). Instead, be build a fused join-aggregation operator based on oblivious sort and an aggregation network.

\begin{figure*}[t]
    \centering
            \begin{subfigure}{\textwidth}
            \centering
            \includegraphics[width=\textwidth]{figs/approach}
            \end{subfigure}
         \caption{Steps of the basic \ours operator. The input tables are concatenated and sorted to bring valid rows with the same join key $K$ (e.g., \texttt{x} and \texttt{y}) next to each other. Values in $O.C$ are copied to matching rows from $R$. Values in $O.A$ are aggregated per key $K$ and stored in $O.G$.}\label{fig:approach}
\end{figure*}

\paragraph{Overview of basic operator.} 
We first describe the basic skeleton for a simple equality join that requires $L.K$ to be unique, but allows $R.K$ to contain duplicates.
Later on, we discuss how to further generalize the operator to other join types, including joins with duplicate keys in both inputs.
The operator joins $L$ with $R$ (denoted as $L \bowtie_K R$) to identify all pairs of rows from $L$ and $R$ with the same key $K$.
The result is a new table $O$ that includes all columns from $R$ along with \tocs{the specified} columns $C$ from $L$. 
Given that keys in $L$ are unique, the cardinality of table $O$ is at most $m=|R|$, enabling \ours to chain multiple such join operators while keeping memory bounded. 
If an aggregation function is provided, the operator also applies the function to each group of rows in $O$ with the same key~$K$.
Protocol~\ref{alg:inner-join} shows the pseudocode and Figure~\ref{fig:approach} depicts the internal steps of the operator.

\paragraph{Step 1: Concatenation.} In the first step, the operator initializes the output table $O$ with $n+m$ rows and $p+q-1$ columns from the two input tables. It combines the validity columns $L.V$ and $R.V$ into $O.V_{LR}$ and the join key columns $L.K$ and $R.K$ into $O.K$. Next, it appends an auxiliary column $O.T_{id}$ and initializes rows $[1 \dots n]$ with $0$ and rows $[n+1 \dots n+m]$ with $1$, indicating the origin table ($L$ and $R$, respectively) of each row.  Finally, it expands the non-key columns from $L$ and $R$ with $n$ (resp. $m$) \texttt{NULL} values and appends them to $O$. Figure~\ref{fig:approach} shows a concatenation example.  For simplicity, we assume one key ($K$) and one non-key column per table ($C$ and $A$ resp.). The resulting table $O$ has $p+q-1=5$ columns.

\paragraph{Step 2: Sort and mark group boundaries.}
The operator sorts table $O$ on the composite key $K_{s} = \{O.V_{LR}$,~$O.K$,~$O.{T_{id}}\}$ using the table sort protocol from~\S\ref{sec:sort}. 
This way, valid rows with the same key $K$ are clustered together, and each row from $L$ becomes the first row of its respective $K$ group.
Each group of rows in $O$ has at most one row from $L$ and zero or more rows from $R$.
Next, the operator calls oblivious deduplication from~\S\ref{sec:approach} with key $K_a = \{O.V_{LR}$,~$O.K\}$ to mark the first row of each group. 
It then flips the resulting bit and ANDs it with $O.V_{LR}$ to update the validity column $O.V$.  
The first row of each group now has $O.V = 0$.

\begin{algorithm}[t]
\caption{\textsc{Join-Aggregation (JoinAgg)}}\label{alg:inner-join}
\Input{Tables $L$, $R$; keys $K = \{K_1, \cdots, K_j\}$, columns $C = \{C_1, \cdots, C_k\}$ to propagate from $L$ to $R$, 
and an aggregation function $f_\mathrm{agg}$ to apply to column $A$}
\Result{Table $O$}\vspace{1mm} 
\Comment{\ Step 1: Initialize table $O$}
$O \gets \textsc{Concat}(L, R)$ \Com{\ Schema shown in Fig.~\ref{fig:approach}}
\Comment{\ Sorting \& Aggregation keys}
let $K_{s} = \{O.V_{LR},~O.K,~O.T_{id}\}$, $K_{a} = \{O.V_{LR},~O.K\}$\\
\Comment{\ Step 2: Sort table $O$ on $K_s$ (ASC)}
$\textsc{TableSort}(O, K_{s})$\\ \label{line:sort}
$O.V~\gets~O.V_{LR}~\land~\neg \textsc{Distinct}(O,~K_a)$\Com{\ Mark groups}\label{line:valid}
\Comment{\ Step 3: Join and aggregate per group}
\protoPrefixAgg{}($O,~\{K_{a},~C,~f_\mathrm{copy}\},$\Com{\ Copy $C$ into $R$}\nonl\label{line:prefixagg}
{$\quad\{K_{s},~V,~f_\mathrm{copy}\},$\Com{\ Propagate $V$}}\nonl
$\quad\{K_{s},~A,~f_\mathrm{agg}\})$\Com{\ Aggregate}
\Comment{\ Step 4: Remove redundant rows and columns}
$\textsc{Finalize}(O)$
\end{algorithm}

\paragraph{Step 3: Propagate values from $L$, invalidate rows, and apply aggregation.} 
In this step, the operator calls the \protoPrefixAgg{} protocol from~\S\ref{sec:approach} to apply two functions: $f_\mathrm{copy}$ and $f_\mathrm{agg}$. Function $f_\mathrm{copy}$ is an internal function that takes as input a pair of values from $O.C$ and simply returns the first one, i.e., $f_\mathrm{copy}(x,y) = x$. It is used by \ours to 
copy non-key values from $L$ into $R$ by obliviously populating the \texttt{NULL} values, and also to invalidate rows from $R$ that did not match with any row from $L$.
Function $f_\mathrm{agg}$ is a user-provided aggregation function. The operator applies $f_\mathrm{agg}$ to column $A$ and stores its result in a new column $O.G$. 
The function $f_\mathrm{agg}$ can be one of the common aggregations from~\S\ref{sec:workloads} or a custom function constructed as we explain in Appendix \ref{apdx:join}.

\paragraph{Step 4: Finalize result.} 
At this point, $O$ has $n+m$ rows; however, we know that the actual output of the join contains at most $m$ valid rows.
In the final step, \ours removes redundant columns (e.g., $O.V_{LR}$, $O.A$, $O.T_{id}$ in Figure~\ref{fig:approach}) and trims all unnecessary rows which came from $L$.
To remove these rows, we sort $O$ on $O.V$ (in descending order), placing valid rows above invalid ones, and trim the last $n$ rows from the sorted table.

In practice, we observe that the trimming operation is cheap but not costless, even when using our single-bit radixsort protocol: if $n\ll m$, the overhead of sorting by $O.V$ may be higher than the improvement gained in subsequent operations with the trimmed table. To address this tradeoff, we develop a heuristic that \emph{automatically} estimates whether trimming pays off, based on the sizes of the input tables. 
For example, in our semi-honest honest-majority protocol, the operator trims when $9m<n \lg n \lg \ell$, where $\ell$ is the share representation bitwidth. Using this (conservative) heuristic has improved end-to-end execution times modestly across all queries of \S\ref{sec:eval} and bounds the maximum working table size in queries with many joins. 
For a detailed analysis, see Appendix \ref{apdx:join}.

\subsection{Generalizing to different join types}\label{sec:generalization}

\textsc{JoinAgg} computes inner joins with equality predicates; however, its control flow 
can support different join types with only minor changes to the way we invalidate~rows. 
All operators of this section require unique keys on the left input (one-to-many) like the basic operator.  
We describe how to handle duplicates in both inputs in \S\ref{sec:duplicates}.  
For all join types we discuss, we provide correctness proofs in Appendix \ref{apdx:join}.

\paragraph{Semi-join.} A semi-join $L \ltimes_K R$ returns all rows in $L$ that matched with any row from $R$ on $K$. \ours implements the semi-join by simply running \textsc{JoinAgg} with the two input tables swapped.
After sorting $O$, rows from $R$ will now precede rows from $L$ in all groups that contain rows from both tables. The operator will therefore invalidate (and later trim) all rows from $R$ and will only invalidate those rows from $L$ that do not match with any row from $R$ (i.e., they appear at the beginning of a group). In all other cases, the validity of rows from $L$ remains as is (equal to $O.V_{LR}$), matching the semantics of semi-join without any changes in Protocol~\ref{alg:inner-join}.

\paragraph{Outer joins.} 
A left-outer join ($L\leftouterjoin_K R$) returns all pairs of rows from the two inputs that have the same key $K$ plus all remaining rows from $L$ that did not match with any row from $R$. To support a left-outer join, we only need to replace line~\ref{line:valid} in \textsc{JoinAgg} with:
$$
O.V~\gets~O.V_{LR} \land \neg (O.T_{id}~\land~\textsc{Distinct}(K_a))
$$
This will invalidate the first row of each group only if it comes from $R$. If the row comes from $L$, its $O.V$ is set equal to $O.V_{LR}$. This ensures that valid rows from $L$ do not get invalidated, even if they have no matches in $R$, but unmatched rows from $R$ will be invalidated as in the basic operator.

A right-outer join ($L\rightouterjoin_K R$) works the opposite way: it returns all pairs of rows from the two inputs that matched on $K$ plus all remaining rows from $R$ that did not match with any row from $L$.
In this case, we only need to replace line~\ref{line:valid} in Protocol~\ref{alg:inner-join} with $O.V \gets O.V_{LR} \land O.T_{id}$. The line invalidates all rows from $L$, since they have $O.T_{id}=0$, and leaves the validity of rows from $R$ intact (equal to $O.V_{LR}$). A right-outer join does not need to propagate valid bits and therefore omits this step while executing the \protoPrefixAgg{} protocol in line~\ref{line:prefixagg}.

Finally, the full-outer join ($A\fullouterjoin_K B$) does not invalidate any rows from either input. In this case, the operator sets $O.V \gets O.V_{LR}$ in line~\ref{line:valid} of Protocol~\ref{alg:inner-join} and does not propagate any valid bits, as in the right-outer join above. All outer joins return $n+m$ rows and never trim rows from $O$.

\paragraph{Anti-join.} An anti-join $L \rhd R$ returns all rows from $L$ that do not match any row from $R$. \ours implements the anti-join by executing Protocol~\ref{alg:inner-join} with the two input tables swapped and line~\ref{line:valid} replaced with $O.V \gets O.V_{LR} \land O.T_{id}$, as in right-outer join. However, here we must propagate valid bits from $R$ to $L$, which is done by invoking $f_{\mathrm{copy}}$. When a row exists in $R$, it will be sorted above the rows from $L$ and invalidated. Then, its valid bit ($=0$) will be copied down to all other rows, including those from $L$, in its group.

\paragraph{Theta-join.} \ours's operator can support one-to-many joins with $\theta$-predicates, as long as $\theta$ is conjunctive and contains one or more equality conditions, e.g., $$L.\texttt{Name} = R.\texttt{Name}~\land~L.\texttt{Time} \leq R.\texttt{Time}$$
In this case, \ours executes Protocol~\ref{alg:inner-join} using the equality predicate(s) that bound the output size and reduces all other predicates in $\theta$ into~oblivious~filters. 
\subsection{Supporting aggregations}\label{sec:composition}
In line~\ref{line:prefixagg} of \textsc{JoinAgg}, we execute an aggregation in the same oblivious control flow (\protoPrefixAgg{}) as the join.
In fact, \ours can evaluate multiple aggregation functions $f_\mathrm{agg}$ with keys $K_a$ on any combination of columns from $L$ and $R$.  If the query includes aggregations with keys different than $K_a$,  an additional invocation of \textsc{TableSort} followed by \protoPrefixAgg{} is necessary.  In case the join or aggregation keys are a prefix of the other,  we can sort on the largest prefix in line~\ref{line:sort} of \textsc{JoinAgg} to avoid the extra \textsc{TableSort} call.

\ours supports decomposable aggregation functions of the form
$$f_\mathrm{agg}(X)=f_\mathrm{final}(f_\mathrm{post}(f_\mathrm{pre}(X)))$$
where $f_\mathrm{pre}$ is a partial aggregation function over a set $X$, $f_\mathrm{post}$ combines partial aggregates, and $f_\mathrm{final}$ generates the final result. Self-decomposable functions are those with $f_\mathrm{final}=\mathbb{I}$, the identity function. Function $f_\mathrm{pre}$ must itself be self-decomposable, $f_\mathrm{pre}((X, Y)) = f_\mathrm{post}(f_\mathrm{pre}(X), f_\mathrm{pre}(Y))$. Examples include $\mathtt{min}$, $\mathtt{max}$, $\mathtt{sum}$ (where $f_\mathrm{post} = f_\mathrm{pre}$), and $\mathtt{count}$ (where $f_\mathrm{post}=\mathtt{sum}$).  Note that self-decomposability implies commutativity and associativity~\cite{jesus2014survey}, which are necessary for \protoPrefixAgg{} correctness. Decomposable aggregations enable efficient incremental computation and data-parallelism, and have been studied extensively in the context of sensor networks~\cite{10.1145/1061318.1061322} and dataflow systems~\cite{10.1145/1629575.1629600}. In \ours, we employ similar ideas in oblivious computation to bound the size of intermediate results, as we explain next.

\subsection{Supporting joins on duplicate keys}\label{sec:duplicates}
We can use \textsc{JoinAgg} to handle many-to-many joins (with duplicate keys on both sides), as long as the join is followed by a decomposable aggregation and the aggregation keys belong to one input table (see \S\ref{sec:workloads}). This ensures that the worst-case cardinality of the output table is equal to the cardinality of the table with the aggregation key(s). 

Decomposable functions are amenable to partial evaluation; therefore, we can apply them without having to materialize the Cartesian product of the two input tables.  As such, we apply $f_\mathrm{pre}$ before the join,  $f_\mathrm{post}$ after the join,  and, optionally, $f_\mathrm{final}$ to compute the final result.
Let $K_j$ and $K_a$ be the join and aggregation keys, respectively.
\ours applies function $f_\mathrm{pre}$ to one of the input tables to compute a partial aggregation on $K_j$. The result is a table with unique join keys that can now be joined with the second table using the basic operator. Function $f_\mathrm{post}$ is evaluated on the output table $O$ with aggregation key $K_a$. 

As an example, consider a join $L \bowtie_{K_j} R$ on $K_j$ with duplicate keys in both input tables followed by $f_\mathrm{agg} = \mathtt{count}$ that outputs the number of rows in the join result. We can obliviously compute the aggregation without a quadratic blowup by first evaluating function $f_\mathrm{pre} = \mathtt{count}$ on $L$ to count the number of rows per key $K_j$ (multiplicity). Then, we can join $L$ with $R$ using \textsc{JoinAgg} and apply $f_\mathrm{post} = \mathtt{sum}$ to add the multiplicities. The idea also applies to acyclic queries with multi-way~joins like in plaintext query evaluation~\cite{DBLP:conf/pods/JoglekarPR16}.
\subsection{Putting it all together}\label{sec:full-example}

\begin{figure}[h!]
    \centering
        \centering
        \includegraphics[width=0.8\columnwidth]{figs/q3}
        \caption{Step-by-step oblivious evaluation of TPC-H Q3 assuming no PK-FK constraints (all input tables have duplicate join keys).}\label{fig:q3}
\end{figure}

Figure~\ref{fig:q3} shows an example with many-to-many joins,  using a generalization of Q3 (see Listing~\ref{example}),  where we allow duplicates in all columns. 
That is, we assume no public knowledge of the PK-FK constraints in the TPC-H specification of this query. 
We omit filter operations to keep the example simple.

To evaluate $C \bowtie_\texttt{CustKey} O$, we first apply a pre-aggregation to $C$ that computes the multiplicity of each \texttt{CustKey} using Protocol~\ref{alg:prefixagg}. The pre-aggregation makes \texttt{CustKey} in $C$ unique and stores multiplicities in a new column \texttt{M}. Next, we join $C$ with $O$ on \texttt{CustKey} using Protocol~\ref{alg:inner-join} to produce a new table $CO$. The second join $LI \bowtie_\texttt{OrdKey} CO$ is evaluated similarly: we first apply a pre-aggregation to $LI$ that computes the \texttt{Revenue} per \texttt{OrdKey} and stores it in column \texttt{RevPre}. Then, we compute the join using Protocol~\ref{alg:inner-join} and apply the post-aggregation to sum the product \texttt{RevPre}*\texttt{M} per key $K_a = \{\texttt{OrdKey}, \texttt{OrdDate}, \texttt{Priority}\}$ using Protocol~\ref{alg:prefixagg}.
Listing~\ref{example2} shows the implementation using the \ours API.

\begin{lstlisting}[label={example2}, caption={The example of Figure~\ref{fig:q3} implemented in \ours},language=C++,float]
// Pre-aggregate and apply first join
auto CO = C.aggregate (   {"CustKey"}, {"M", "M", count})
           .inner_join(O, {"CustKey"}, {"M", "M", copy});
// Pre-aggregate and apply second join
auto COL = LI.aggregate (
                    {"OrdKey"}, {"Revenue", "RevPre", sum}
            ).inner_join(
                CO, {"OrdKey"}, {"RevPre",  "RevPre", copy});
// Apply post-aggregation
COL["TotalR"] = COL["RevPre"] * COL["M"];
auto RES = COL.aggregate(
                    {"OrdKey", "OrdDate", "Priority"},
                    {"TotalR", "TotalR", sum});
\end{lstlisting}

\section{System implementation}\label{sec:impl}

We have implemented \ours from scratch in C++. \tocs{Figure \ref{fig:stack} shows the software stack run by each computing party, which} includes (i) an MPC layer with vectorized implementations of secure primitives, (ii) a library of  oblivious relational operators,  (iii) a custom communication layer, and (iv) a data-parallel execution engine.  Users can configure the share size $2^{\ell}$ ($\ell=64$ by default), the MPC protocol, and the sorting protocol (quicksort by default), according to their application needs.

We use \texttt{libOTe}~\cite{libote} to produce random oblivious transfers, from which we generate OLE (Oblivious Linear Evaluation) correlations and Beaver triples, as described in \S \ref{sec:triples}. For permutation correlations, we use the technique by Peceny et al.~\cite{permutation-correlations}. For random number generation, we use \texttt{AES} from \texttt{libsodium}~\cite{libsodium}.

\paragraph{Protocol-agnostic secure operator abstractions.}
All relational operators, sorting, multiplexing, and division with private denominator are protocol-agnostic.
\tocs{
The protocol-specific primitives are the functionalities defined by each MPC protocol (e.g., $\times$, $\land$, etc.), division by public denominator, and two shuffling primitives that we use to generate correlated randomness (random sharded permutations).
Higher-level shuffle operators, such as composing, inverting, converting, and applying permutations, are protocol agnostic, and make use of these protocol-specific correlations. For more details, see Appendix \ref{apdx:shuffle}.
}

\paragraph{Vectorization and data parallelism.}
\ours uses a columnar data format that allows for efficient vectorized execution. 
Recall that each secure multiplication and bitwise AND operation incurs a message exchange between parties.
With vectorization, \ours can efficiently group independent messages within operations and exchange them in the same network call. All low-level primitives and oblivious operators in \ours are vectorized to work this way.

Each \ours server runs as a single process with the same (configurable) number of threads $\tau$. 
Execution within a server is data-parallel: a coordinator thread assigns tasks to workers, which operate on disjoint partitions of the input vector.
\ours's communication layer establishes a static mapping between workers, so that thread with id $t$ communicates only with thread $t$ in other servers, $0 \leq t < \tau$.

\paragraph{Communication layer.}
In \ours, we implement a TCP-based communicator, tailored to I/O-intensive MPC execution. The communicator manages network sockets via a configurable number of dedicated threads.  A high number of communication threads can significantly improve performance in WAN settings, where multiple parallel connections between parties can be used to leverage available bandwidth.

Worker threads interact with communication threads using lock-free ring buffers to avoid blocking computation. Communication threads are decoupled from worker threads and process ring buffers in a round-robin fashion. Data copies are avoided both ways; worker threads pass offsets in vectors they operate on, and the communicator directly pulls from (on send) and pushes to (on receive) these vectors.

\begin{figure}
    \centering
    \includegraphics[width=0.8\columnwidth]{figs/stack.pdf}
    \caption{\tocs{The \ours stack. Analysts interact with the framework through its C++ dataflow interface. All oblivious primitives and relational operators inherit the security guarantees of the underlying MPC protocols.}}\label{fig:stack}
\end{figure}

\section{Experimental evaluation}\label{sec:eval}
Our experimental evaluation is structured into five parts:

\paragraph{\ours's performance on complex analytics.} In \S\ref{sec:eval-queries}, we present a comprehensive performance evaluation of \ours in LAN and WAN with three MPC protocols. Our results demonstrate that \ours provides excellent performance and can compute complex multi-way join queries on millions of input rows within minutes.

\paragraph{Comparison with state-of-the-art.} In \S\ref{sec:eval-comparison},  we compare \ours with the following state-of-the-art systems:
\begin{enumerate}[label=(\roman*)]
    \item \camera{Secrecy~\cite{secrecy},  a relational MPC framework targeting the 3-party outsourced setting with semi-honest security and no leakage.  Secrecy is the only open-source relational system for this setting.}
    \item \camera{SecretFlow~\cite{fang2024secretflow}, a relational MPC framework targeting the 2-party peer-to-peer setting with semi-honest security.  SecretFlow is the only system we are aware of that scales some TPC-H queries to millions of rows per~table,  albeit by leaking information to  parties.}
    \item \camera{MP-SPDZ~\cite{keller2020mp}, a general-purpose MPC compiler used in many prior works.  While MP-SPDZ does not support relational analytics, we compare with its sorting implementation, as sorting is the most expensive operator in relational queries.}
\end{enumerate}
\ours provides up to $827\times$ lower query latency than Secrecy and its secure sorting implementations are up to $5.9\times$ and $189.1\times$ faster than the ones provided by SecretFlow and MP-SPDZ, respectively. \camera{We also achieve modest speedups over SecretFlow on queries with joins (up to $1.5\times$),  despite the fact that SecretFlow introduces leakage while \ours does not.}

\paragraph{Scalability evaluation.} In \S\ref{sec:eval-scalability}, we show that \ours's operators scale to hundreds of millions of rows under all protocols.  When configured with our most expensive protocol for malicious security,  \ours sorts $2^{29}$ rows in less than $2$ hours. \tocs{We also report results on \ours's performance in a geodistributed setting in \S \ref{apdx:geo}, where we show that \ours can be practically deployed over the Internet.

\paragraph{Ablation study.} In \S\ref{sec:eval-ablation}, we consider each of our major optimizations separately and measure how they affect end-to-end runtime on a representative query.

\paragraph{Preprocessing.} Finally, we benchmark protocols for generating boolean and arithmetic Beaver triples in \S\ref{sec:eval-preproc}.
}

\subsection{Evaluation setup}\label{sec:eval-setup}
We use \texttt{c7a.16xlarge} AWS instances with Ubuntu \texttt{22.04.5 LTS} and \texttt{gcc~11.4.0}, in two configurations: (i) \texttt{LAN} is an unconstrained setting in the \texttt{us-east-2} (Ohio) region,  with a maximum bandwidth of $25$Gbps and a $0.3$ms round-trip time (RTT),  (ii) \texttt{WAN} is a restricted setting in \texttt{us-east-2}, with symmetric RTT of $20$ms and $6$Gbps bandwidth \camera{(as measured between \texttt{us-west-2} and \texttt{us-east-1} with 16 parallel \texttt{iperf3} streams)}.  Unless otherwise specified,  \ours is configured with $16$ compute threads and $4$ parallel connections in \texttt{LAN} and $16$ connections in \texttt{WAN}.

\paragraph{Workloads.} We use 31 queries for evaluation. Our most comprehensive benchmark is the full set of TPC-H queries~\cite{tpc-h}, implemented with the \ours dataflow API.
We replace floating point values with equivalent integral values and pattern matching expressions (\verb|X like `Y%'|) in six queries with oblivious (in)equality. To perform fully private averages, we implement a \textit{non-restoring division} circuit inspired by the hardware literature \cite{lu2005arithmetic}. Various prior works have implemented selected TPC-H queries~\cite{secrecy,Poddar2021Senate,bater2018shrinkwrap,fang2024secretflow}, but to our knowledge, we are the first to report performance numbers for \textit{all} TPC-H queries (22 in total) entirely under MPC.  
\camera{We also implement all other queries we could find in prior relational MPC works~\cite{Volgushev2019Conclave,secrecy,Poddar2021Senate,bater2018shrinkwrap,fang2024secretflow, Wang2021Secure,Bater2017SMCQL,luo2024secure, bater2020saqe}:}
\begin{itemize}
\item \camera{\textit{Aspirin}, \textit{C. Diff}, \textit{Password}, \textit{Credit Score}, \textit{Comorbidity}, and \textit{SecQ2}, used in Secrecy~\cite{secrecy}, Conclave~\cite{Volgushev2019Conclave}, and Senate~\cite{Poddar2021Senate}, among others.}
\item \camera{\textit{Market Share} from Conclave.}
\item \camera{\textit{SYan}, Example 1.1 from Wang et al.~\cite{Wang2021Secure}.}
\item \camera{\textit{Patients}, a 3-way join query used in Shrinkwrap \cite{bater2018shrinkwrap} to showcase the cascading effect (which \ours avoids).}
\end{itemize}

\noindent Additionally, we implement the five peer-to-peer variations of TPC-H queries in SecretFlow~\cite{fang2024secretflow}, denoted as \textit{S1}--\textit{S5}.  As an extra correctness check, we also implemented all queries in SQLite~\cite{sqlite} \tocs{and compared its output with \ours's}.

\paragraph{Inputs. } In all experiments, we use $\ell=64$ bit shares by default. However, because our sorting protocols pad inputs to guarantee uniqueness and compute permutations, sorting is actually performed over $128$-bit shares. We standardize a notion of query size: in the TPC-H specification \cite{tpc-h}, input tables grow according to the \textit{scale factor} (SF) parameter, where SF1 corresponds to an (approximately) 1GB database. The smallest table (\texttt{Supplier}) has $10\mathrm{k}\cdot\mathrm{SF}$ rows, while the largest (\texttt{LineItem}) has $6\mathrm{M}\cdot \mathrm{SF}$ rows. TPC-H queries have total input sizes between $810\mathrm{k}\cdot\mathrm{SF}$ (Q11) to $8.5\mathrm{M}\cdot\mathrm{SF}$ (Q9) rows, and average $5.8\mathrm M\cdot\mathrm{SF}$ rows.  For non-TPC-H queries, we accordingly define input sizes with $\approx 5\mathrm{M}$ rows per SF.

\paragraph{Protocols.}  We label results according to the MPC protocol: \textbf{SH-DM} (semi-honest, dishonest majority) for ABY \cite{aby}, \textbf{SH-HM} (semi-honest, honest majority) for Araki et al. \cite{ArakiFLNO16}, and \textbf{Mal-HM} (malicious-secure, honest majority) for Fantastic Four \cite{fantastic4}.  In all \emph{SH-DM} experiments, we report online time, as in prior works \cite{fang2024secretflow}.
\tocsCam{
    A recent attack \cite{bs26} was published on the security guarantees of the Fantastic Four protocol. The authors notified us of their attack while this paper was under submission; we include notes below on how the fixed, secure implementation affects our results. Notably, Br\"uggemann et al.~\cite{bs26} attack the protocol itself, as opposed to any particular implementation, so MP-SPDZ \cite{keller2020mp}, which we compare to, was likewise vulnerable. For an in-depth discussion of how we patched our implementation, see our technical report~\cite{cryptoeprint:2026/1152}.
}

\subsection{Performance on complex analytics}\label{sec:eval-queries}

In this experiment, we run \ours at Scale Factor 1 (SF1) on the full workload,  which includes queries of varying complexity.  
Queries like Q5,  Q7,  Q8,  Q9 and Q21 are expensive, involving 4-7 joins each, with Q21 also performing a self-join on the largest input table, \verb|LineItem|.  Other queries include multiple filters, semi-joins, outer joins,  group-by aggregations,  distinct, and order-by operations, covering a wide range of query patterns.  Q6 is the least expensive query, as it does not require sorting. \tocsCam{We observe than the fixed version of Mal-HM \cite{cryptoeprint:2026/1152} has an overhead of $\approx2\times$ on some queries at Scale Factor 1, primarily due to the additional malicious checks in sorting, but is up to $\approx3\times$ faster in other cases, due to the better resource utilization of batched malicious checks. Our results here correspond to the \texttt{custom} protocol in that paper.}

\begin{figure}[t]
\centering
\smaller
\begin{tabular}{r|rr|rr}
& \multicolumn{2}{c}{TPC-H} & \multicolumn{2}{c}{Other} \\
       & Median & Max & Median & Max \\ \hline\hline
SH-DM  LAN &  3.8 &  15.0 &  1.6 &  3.1 \\
       WAN & 13.5 &  52.7 &  6.2 & 11.7 \\ \hline
SH-HM  LAN &  4.4 &  17.4 &  2.0 &  3.7 \\
       WAN & 10.9 &  41.4 &  4.8 &  9.1 \\ \hline
Mal-HM LAN & 10.9 &  42.3 &  4.9 &  8.3 \\
       WAN & 27.1 & 108.4 & 11.9 & 21.7 \\
\end{tabular}
\vspace{1mm}

\includegraphics[width=0.7\columnwidth]{figs/plot-queries-sf1.png}

\caption{Query execution time (min) at SF1 in \texttt{LAN} (solid) and \texttt{WAN} (hatched).  Bars are overlapped, not stacked.  TPC-H queries shown on the left and other queries on the right. Q6 times are annotated.}\label{fig:sf1-comparison} 
\end{figure}

Figure~\ref{fig:sf1-comparison} shows the end-to-end times in \texttt{LAN} and \texttt{WAN} and summarizes the result statistics.  Overall, \ours computes complex multi-way join queries on millions of input rows in a few minutes.  The most expensive query, Q21, calls the sorting operator 12 times and, under malicious security, completes in $42$ minutes over \texttt{LAN}.  In the same setting, \ours executes all other queries from prior work in under $10$ minutes.  Our results also demonstrate the effectiveness of \ours's vectorization and message batching.  In the \texttt{WAN} environment,  we see $1.2\times$-$6.9\times$ higher execution times for a $75\times$ higher~RTT.

\subsection{Comparison with state-of-the-art systems}

In this section, we compare \ours with three state-of-the-art MPC frameworks on  relational queries and oblivious sort.  

\begin{figure}[t]
    \centering
    \begin{minipage}{0.45\columnwidth}
        \centering
        \includegraphics[width=\linewidth]{figs/secrecy-comp.png}
    \end{minipage}
    \begin{minipage}{0.35\columnwidth}
        \centering
        \includegraphics[width=\linewidth]{figs/secflow-comp.png}
    \end{minipage}
    \caption{\ours query times compared to Secrecy and SecretFlow. Secrecy operates in the outsourced setting without leakage. SecretFlow is a peer-to-peer system that offloads operations to the data owners' trusted compute and leaks the result of the join to parties.
    }\label{fig:secrecy-comparison}
\end{figure}

\paragraph{Comparison on queries.}\label{sec:compare-queries}
We first compare \ours with Secrecy~\cite{secrecy}, the only open-source relational MPC system without leakage that operates in the outsourced setting. We configure \ours with the SH-HM protocol, which is the one used by Secrecy, and we execute all queries from the Secrecy paper, using the maximum input size they report~\cite[Fig.~4]{secrecy}. 

Figure~\ref{fig:secrecy-comparison} (left) shows the results. 
For the most expensive queries (Aspirin,  Q4,  and Q13) that perform a join or semi-join, \ours achieves $478\times$ --- $760\times$ lower latency.
Password, Credit, Comorbidity, and C.Diff include group-by and deduplication operators.  For these queries, \ours's optimized sorting protocols also improve latency by $17\times$ --- $42\times$.
Q6 does not have any join or sorting, yet \ours outperforms Secrecy by~$3\times$. 
\camera{The significant improvements come from better asymptotic complexity of our join and sorting algorithms.  Secrecy uses an $O(n^2)$ join and an  $O(n\log^2n)$ bitonic sort,  while \ours evaluates join-aggregation queries in $O(n\log n)$.  Also,  Secrecy's runtime is single-threaded.}

Next, we compare with SecretFlow~\cite{fang2024secretflow}, a recent relational MPC framework that operates in the peer-to-peer setting. To our knowledge, SecretFlow is the only open-source system of this type that scales TPC-H-based queries to millions of input rows per table. 
However, we emphasize that SecretFlow \emph{leaks which rows match} to the parties, while \ours does not introduce any leakage. Nevertheless, we include this comparison to show that \ours can achieve competitive performance even in a peer-to-peer setting.
We configure \ours with the SH-DM protocol that uses the same underlying MPC primitives as SecretFlow (ABY~\cite{aby}), and we allow both systems to perform operations in the data owners' trusted compute, otherwise SecretFlow's optimizations are not applicable.

Figure~\ref{fig:secrecy-comparison} (right) shows results for the five queries used in the SecretFlow paper,  run on 16M input rows per table.  \ours delivers superior performance in all cases,  while protecting all data and intermediate result sizes.  It achieves $58-85\times$ lower latency for simple queries without joins (S1, S2) and $1.1-1.5\times$ lower latency for join queries that include aggregations (S3, S4) or oblivious group-by (S5). \tocsCam{S1 and S2 see such significant speedups because --- under SecretFlow's peer-to-peer model --- those queries largely reduce to local SQL operations rather than communication-intensive MPC evaluations.} \camera{Despite SecretFlow's leakage,  \ours still wins on performance due to its more efficient sorting and fused join-aggregation operator. Both systems use radixsort,  but SecretFlow cannot leverage parallelism.  In contrast, \ours's operators are vectorized and data-parallel.}

\begin{figure}[t]
    \centering
    \includegraphics[width=0.75\columnwidth]{figs/secretflow-comparison.png}
    \caption{Performance of oblivious sort in SecretFlow and \ours.}\label{fig:secretflow-comparison}
\end{figure}

\paragraph{Comparison on oblivious sort.}\label{sec:compare-sort}
We now compare \ours's oblivious radixsort against two publicly available state-of-the-art implementations. In Figure~\ref{fig:secretflow-comparison}, we compare our SH-DM radixsort with SecretFlow's \texttt{SBK} ($64$-bit radixsort) and \texttt{SBK\_valid} protocols.  \texttt{SBK\_valid} is a modified radixsort that relies on data owners locally mapping plaintext data to use at most $\lceil\lg n\rceil$ bits, where $n$ is the input size ($n = 17, 20, 24$ bits in this experiment). Even though it favors the baseline, we compare \texttt{SBK\_valid}  to our $32$-bit radixsort and \texttt{SBK} to our $64$-bit protocol. 
We see that \ours is up to $4.4\times$ and $5.5\times$ faster when sorting $1\mathrm M$ and $10\mathrm M$ elements, respectively.

\begin{figure}[t]
    \centering
    \includegraphics[width=0.8\columnwidth]{figs/mpspdz-compare.png}
    \caption{Performance of oblivious radixsort protocols in MP-SPDZ and \ours . We run MP-SPDZ until it crashes or runs out of memory.}\label{fig:mpspdz-comparison}
\end{figure}

Our next comparison is with MP-SPDZ \cite{keller2020mp}, a state-of-the-art general-purpose MPC library that provides secure sorting implementations for all MPC protocols supported by \ours. We compare with its radixsort implementation, which is its most efficient sorting algorithm.
Figure~\ref{fig:mpspdz-comparison} shows the results.
We vary the input size by powers of two, starting from $2^{16}$ elements and increasing the input size until MP-SPDZ either runs out of memory or crashes: up to $2^{22}$ elements with SH-DM,  $2^{25}$ with SH-DM, and $2^{20}$ with Mal-HM. In the SH-DM and Mal-HM settings,  \ours's radixsort is $189\times$ and $134\times$ faster than the corresponding MP-SPDZ implementations.  Under the SH-HM protocol, which is heavily optimized in MP-SPDZ,  \ours achieves a $8.5 \times$ speedup when sorting $2^{24}$ elements, however, MP-SPDZ could not complete the experiment for larger inputs.
\camera{The large performance improvements over MP-SPDZ come from data-parallelism; although MP-SPDZ supports parallelism and advanced vectorization,  it does not parallelize sorting.}
\tocsCam{Due to the attack by Br\"uggemann et al.~\cite{bs26}, our Mal-HM results are expected to be slightly slower. However, we note that the implementation of Fantastic Four in MP-SPDZ was \textit{also} vulnerable to the same attack, so is expected to pay a similar, if not greater, cost. All experiments were run before the attack was made public. With our initial fixed implementation \cite{cryptoeprint:2026/1152}, we observe a sorting overhead of about $3$-$6\times$ in this regime, which quickly amortizes with larger input sizes.}

\paragraph{Bandwidth consumption.} \camera{We also evaluate bandwidth consumption in \ours and the baselines.  When compared to Secrecy,  \ours exhibits up to two orders of magnitude lower bandwidth consumption, due to our asymptotic improvements in sorting and join. Likewise,  our sorting implementations achieve $1.6\times$ lower bandwidth than MP-SPDZ SH-HM and more than an order of magnitude improvement against MP-SPDZ SH-DM and Mal-HM.  \ours
also achieves a $1.8\times$ lower bandwidth than SecretFlow in sorting.}
\tocsCam{
On queries, SecretFlow has a lower bandwidth than \ours, as it leaks matching rows in its join operator and performs subsequent operations locally. We report detailed bandwidth measurements for all experiments in Appendix \ref{apdx:bw}.
}
\label{sec:eval-comparison}
\subsection{\ours scalability}
In this section, we show that \ours can deliver practical performance for complex analytics on large inputs, without sacrificing security.  To the best of our knowledge,  we are the first to report results at this scale, for end-to-end MPC execution without leakage or the use of trusted parties.

\paragraph{TPC-H at SF10.} 
We repeat the experiment of \S\ref{sec:eval-queries} on the TPC-H benchmark at \camera{Scale Factor 10.  In this setting}, queries operate on inputs of $58\mathrm M$ rows, on average. The most expensive query, Q21, has a total input of size $75\mathrm M$ rows and performs two of the largest joins,  sorting tables with $120\mathrm M$ rows each time.

\begin{figure}[t]
    \centering
    \includegraphics[width=0.8\columnwidth]{figs/2pc-scaling.png}
    \caption{Ratio of TPC-H query execution times at SF10 over SF1 for the SH-DM protocol in \texttt{LAN}. The 22 queries are sorted from left to right by increasing SF10 latency.}\label{fig:lan-scaling}
\end{figure}

Figure~\ref{fig:lan-scaling} plots the ratio of execution times at SF10 over SF1 for the SH-DM protocol in \texttt{LAN}.
Assuming that query latency is dominated by sorting and aggregation, we would expect the runtime to scale as $O(n\log n)$. If the queries consisted \textit{only} of ideal sorts, then theoretical scaling would be $10\log(10n)/\log n\approx 11.5$, when $n\approx 5.8\mathrm M$ (SF1).
Overall, this is indeed the trend we see in Figure~\ref{fig:lan-scaling}, with some outliers.  In practice,  queries are more complex and consist of operations with different costs. Some overheads are sublinear: for example, oblivious division is round-constrained for most inputs, so scaling appears lower for queries with division, like Q22. Other queries suffer slightly suboptimal scaling, due to the power-of-two padding required in \protoPrefixAgg{}.  
For example, Q12 aggregates over a table of size $7.5$M at SF1, which is padded to $2^{23}\approx8\text{M}$ rows. However, at SF10, this same aggregation occurs on a table of size \textit{75M}, which now must be padded to $2^{27}\approx134\text{M}$ rows, $16\times$ larger than the table at SF1.

\begin{figure}[t]
    \centering
    \includegraphics[width=0.8\columnwidth]{figs/wan-query-scaling.png}
    \caption{\ours performance at SF10 in \texttt{WAN}. Bars are labeled with the scaling ratio compared to SF1 in the same environment.  Q21 is the most expensive query in the TPC-H benchmark across protocols.}\label{fig:query-scaling}
\end{figure}

Next, we evaluate \ours's scalability in \texttt{WAN},  using the two outlier queries mentioned above, Q22 and Q12,  and the most expensive query, Q21. Figure~\ref{fig:query-scaling} shows the results for all protocols, where each bar is annotated with the scaling ratio compared to SF1.  \ours's scaling behavior in this environment is consistent with our previous findings.  Under malicious security,  Q22 completes in $31$ minutes and Q21 in $18$ hours.

\paragraph{Scalability of sorting protocols.}
In the previous experiment, we showed that \ours provides practical performance on complex queries that repeatedly sort tables with over $100$M rows. We now stress \ours's oblivious sorting protocols further and evaluate their performance on even larger inputs. We run radixsort and quicksort in \texttt{LAN} with 32 threads, increasing the input size from $2^{20}$ to $2^{29}$ elements.  Figure~\ref{fig:sorting} shows the results.  
The slowest protocol, Mal-HM radixsort, can sort $134$ million ($n=2^{27}$) elements in about $35$ minutes, while the fastest protocol, SH-HM quicksort, sorts $537$ million ($n=2^{29}$) in just over $70$ minutes. \tocsCam{For Mal-HM, the results on large inputs do not noticeably change after fixing the attack by Br\"uggemann et al.~\cite{bs26}, since the fully-secure checks only incur a constant-communication overhead $O(\log n)$ times. Concretely, we observe an overhead on Quicksort of less than $1\%$ at $n=2^{28}$ rows \cite{cryptoeprint:2026/1152}.}

\begin{figure}[t]
    \centering
    \includegraphics[width=0.8\columnwidth]{figs/plot-sort.png}
    \caption{Scaling \ours's oblivious sorting protocols to very large inputs.  Quicksort scales to larger inputs than radixsort, which has higher time and memory requirements for permutation generation.}
    \label{fig:sorting}
\end{figure}

We find that quicksort and radixsort are competitive across settings.  Quicksort is consistently faster under the malicious protocol, while radixsort is slightly faster in the SH-DM setting.
Overall, quicksort (\ours's default sorting protocol) scales to larger inputs in our experimental setting, as radixsort has higher memory demands: it requires computing and storing $\ell+2$ secret-shared permutations of length $n$.

\label{sec:eval-scalability}
\tocs{
\subsection{Performance in a geodistributed setting}\label{apdx:geo}

In this section, we evaluate \ours's ability to amortize network costs in a geodistibuted setting.  We create a deployment spanning four AWS regions: \texttt{us-east-1} (N. Virginia),  \texttt{us-east-2} (Ohio), \texttt{us-west-1} (N. California),  and \texttt{us-west-2} (Oregon). The deployment is constrained by minimum link bandwidth between $4.23-8.47$ Gbps, and RTT between $50-61$ ms.  To make the setting challenging, we deploy parties so that at least one link exists between the U.S. east and west coasts.

We show results for five TPC-H queries: two of the fastest (Q11, Q21), the median (Q12), and the two slowest (Q8, Q21).  These five queries account for approximately 30\% of the total time to run the entire TPC-H benchmark. 
Figure~\ref{fig:sf1-real-wan-comparison} shows the execution time, along with the overhead compared to the symmetric \texttt{WAN} environment used in the experiments of Fig.~\ref{fig:sf1-comparison}.  Latency increases by up to $2.4\times$ for SH-DM, $1.7\times$ for SH-HM, and $1.9\times$ for Mal-HM. These results demonstrate that \ours's overhead is lower than the corresponding $3\times$ increase in RTT between the two WAN environments,  indicating its ability to effectively amortize network costs and achieve competitive performance. \tocsCam{The current implementation of the fixed malicious checks \cite{bs26, cryptoeprint:2026/1152} are not round-optimized, so may increase WAN times slightly. With further engineering effort, however, we can lower the round complexity; our analysis shows that only four rounds are necessary to run each malicious-secure check.}

\begin{figure}
    \centering
    \includegraphics[width=0.8\columnwidth]{figs/RealWAN-Time.png}
    \caption{TPC-H query times at SF1 in a geo-distributed WAN deployment, with ratios over times in our symmetric WAN.}\label{fig:sf1-real-wan-comparison}
\end{figure}
}
\tocs{
\subsection{Ablation study}
\label{sec:eval-ablation}

We perform ablation studies in both LAN and WAN to measure the effectiveness of various optimizations, one at a time.
Our target workload is TPC-H Q15, which had a median runtime and includes all of our major operators. We focus on the SH-HM protocol and benchmark with 600k rows (SF0.1) on a cluster of 96-core AMD EPYC 9655P CPUs with 256 GB of memory each.
We choose SF0.1 because the baseline does not scale beyond that. At this size, the quadratic join must materialize a table with $600\mathrm M$ rows; at SF1, this jumps to $60\mathrm B$ rows, well beyond the memory capacity of our machines.

As part of the ablation study, we implement the pairwise sorting network \cite{pairwise}, which may be of independent interest. Like bitonic sort, it has $O(n\log^2n)$ asymptotic complexity, but it uses $O(n\log n)$ fewer comparisons --- the minimal number for any practical sorting network (and the same as Batcher's odd-even mergesort \cite{Batcher68}).

The baseline configuration uses no multithreading, no message batching, and asymptotically slower algorithms for join and sort (both implemented in \ours): a Cartesian product with cost $O(n^2)$ and a pairwise sorting network \cite{pairwise} with cost $O(n\log^2n)$, as used in prior work \cite{secrecy,FengScape2022}. 
Tables~\ref{tab:opts-lan} and \ref{tab:opts-wan} summarize the optimizations  and the resulting speedups in LAN and WAN, while Figure~\ref{fig:ablation} plots the end-to-end execution times. 

We do not evaluate the baseline join and sorting algorithms in WAN, as the network overhead of the Cartesian product and sorting network dwarfs the system optimizations that we wish to highlight. We also omit multiple connections in the LAN test, since a single connection is able to take full advantage of available network resources in that setting. We evaluate the effect of message batching in both LAN and WAN: in LAN, we compare no message batching at all with twelve batches per thread (our default), whereas for WAN, we compare twelve versus one batch per thread, to demonstrate how the decreased round complexity improves performance in that setting (batches execute sequentially per thread, so fewer batches corresponds with lower round complexity). We do not test without batching in WAN, as the increased network latency makes the experiment impractical to complete.

In LAN, we see that each of our asymptotic improvements (sorting and join algorithm) give about a $10\times$ and $50\times$ speedup, respectively. In both WAN and LAN, system optimizations (threading, multiple connections, and batching) also give sizeable speedups. In WAN, the greatest speedup comes from adding multiple parallel connections, since in this setting MPC execution is bound by communication, not computation. While multithreading is a powerful tool to improve the end-to-end running times of \ours queries, it only becomes beneficial at much larger inputs than SF0.1.

Overall, we observe about $1800\times$ higher query throughput due to our optimizations. Since our improvements are \textit{asymptotic} --- from $O(n^2)$ to $O(n\log n)$ --- this performance gap will only grow with larger input sizes. We observe similar speedups in the other two protocols supported by \ours (SH-DM, Mal-HM).

\begin{table}
\centering
\begin{tabular}{c|c|r}
    Baseline & Optimized & Speedup\\ \hline
    Sorting Network & Quicksort & $11.4\times$\\ 
    Cartesian Product & \textsc{JoinAgg} & $58.2\times$ \\
    No Batching & Multiple Batches & $1481.6\times$ \\
    Single Threaded & Multithreaded & $1851.0\times$
\end{tabular} 
\caption{\tocs{Ablation study optimizations in LAN}}
\label{tab:opts-lan}
\end{table}

\begin{table}
\centering
\begin{tabular}{c|c|r}
    Baseline & Optimized & Speedup\\ \hline
    Single Thread & Multithreaded & $1.2\times$ \\ 
    Multiple Batches & Single Batch & $1.7\times$\\
    Single Connection & Multiple Connections & $6.8\times$
\end{tabular}
\caption{\tocs{Ablation study optimizations in WAN}}
\label{tab:opts-wan}
\end{table}

\begin{figure}
    \centering
    \includegraphics[width=0.7\columnwidth]{figs/ablation.png}
    \caption{\tocs{Ablation study results in LAN and WAN. Above each bar we label the relative speedup compared to the baseline.}}\label{fig:ablation}
\end{figure}

}

\tocs{

\subsection{Preprocessing phase}\label{sec:eval-preproc}


\ours uses both boolean and arithmetic Beaver triples.
Our final experiment is an independent benchmark of the preprocessing phase for the SH-DM protocol, with a particular focus on the challenging case of generating arithmetic Beaver triples (over the integers $\bmod\ 2^\ell$). Such triples are needed for executing the secure multiplication protocol and the arithmetic multiplexer (\S\ref{sec:approach}). We are concerned with the \textit{runtime} and \textit{communication overhead} of generating Beaver triples. We require one Beaver triple per multiplication gate; a vector of length $n$ requires $n$ Beaver triples. For comparison purposes, we also include measurements for boolean (AND) triples. %
As described in \S \ref{sec:triples}, we construct Beaver triples via a local reduction to twice as many oblivious linear evaluations (OLE).
For all experiments of this section, we use the same hardware setup as in \S\ref{sec:eval-ablation}.

One might reasonably ask why we even bother with arithmetic triples, or, for that matter, arithmetic secret sharing in the first place. In this work, we optimize for a fast \textit{online} phase; our goal is to evaluate queries with as little latency as possible, but allow for (possibly expensive) preprocessing which may occur before input data --- or even the query itself --- has arrived. Without arithmetic triples, all multiplications and additions would have to be decomposed down to their expensive circuit representations (which require multiple rounds and have $O(\ell^2)$ and $O(\ell)$ communication complexity, respectively), leading to orders-of-magnitude slowdowns in the online phase.

We evaluate Gilboa's original OLE protocol \cite{gilboa}, which has communication cost $O(\ell^2)$ to generate an $\ell$-bit Beaver triple. We also implement the recent landmark protocol by Doerner et al. \cite{doerner2025} which uses a number-theoretic construction to break the two-decade-old quadratic barrier, and achieves $O(\ell \log \ell)$ communication per triple. To our knowledge, we are the first to publish an implementation for this protocol. Our results confirm that despite the heavier computational overhead of the subquadratic protocol, it is concretely more communication efficient at practical input sizes. In real-world deployments, such as wide-area networks, where bandwidth may be more expensive than compute, it is a viable alternative for arithmetic triple generation.

We base triple generation on the pseudorandom correlation generator (PCG) implementation (\texttt{SilentOTExt}) from \texttt{libOTe} \cite{libote}, in order to take advantage of sublinearities in the number of triples generated. AND triples directly follow from oblivious transfer, so those figures represent an effective benchmark of OT generation. However, we acknowledge that more engineering work is needed to make silent OT generators fully practical. Due to memory constraints, we are often unable to generate enough triples at once to fully amortize the computational overheads of the silent constructions.


In this experiment, we benchmark triple generation with up to 64 threads, and in batches of up to $2^{24}$ elements. We report \textit{normalized} throughput in Table~\ref{tab:triples}, which shows the results for the fastest batch size, and Figure~\ref{fig:triple-time}, which plots the fastest amortized execution time per bit. To enable a clearer comparison between protocols and bitwidths, we express throughput in terms of normalized megabits per second, and execution time in terms of normalized nanoseconds per generated bit. These values represent the best performance achievable with our implementation. The actual number of triples generated, in each case, is a factor of $\ell$ lower.

Gilboa's protocol achieves a maximum throughput of 165 Mbps at a width of 32 bits, corresponding to about 5M triples per second. The highest throughput observed for the CRT protocol was almost three times lower: 56 Mbps at 128 bits, or just over 400k triples per second. However, at this bitwidth, Gilboa only achieves 650k triples per second.

Figure~\ref{fig:triple-comm} shows how many bits of communication are required per output bit, for both 64-bit triples (our default bitwidth) and 128-bit triples (our sorting bitwidth; see~\S\ref{sec:sort}). In this experiment, we see the real benefit of the CRT protocol: it uses less than half the bandwidth of Gilboa's protocol at 64 bits, and $5\times$ lower bandwidth at 128 bits.
The amortized overhead of each Boolean triple in our implementation is around two communicated bits per output bit. We omit these values from Figure~\ref{fig:triple-comm} since they would not be clearly visible, but remark that this implies approximately $26\times$ greater communication overhead, per bit, for the CRT protocol compared to Boolean triples, and $128\times$ greater overhead for Gilboa, in the case of 128-bit outputs. With further work on batched silent preprocessing, we could, in principle, bring the communication cost of a Boolean triple close to zero.


The CRT protocol trades off substantially lower communication for greater computation. While we do not see lower end-to-end latencies, in practice, due to greater computational overheads, our results show that, with a more efficient use of computation resources, we could achieve a concrete improvement over Gilboa's protocol. This is an exciting avenue for future research, but will require new approaches to meeting the storage demands of correlated randomness.

\begin{table}
\tocs{
\centering
\begin{tabular}{cc|l|c}
Triple Type & Bitwidth $\ell$ & Max. Tput & Opt. Batch Size \\ \hline\hline
Boolean
&   8 &  87.4 Mbps & 524k \\
\texttt{SilentOT}~\cite{libote} &  16 & 130.2      & 262k \\
&  32 & 123.8      & 131k \\
&  64 & 119.7      & 65k \\
& 128 & 123.4      & 32k \\ \hline
Arithmetic
& 8 & 144.4 Mbps & 4M \\
Gilboa \cite{gilboa}
& 16 & 163.6 & 8M \\ 
& 32 & 165.1 & 4M \\
& 64 & 154.0 & 8M \\ 
& 128 & 82.7 & 16M \\ \hline
Arithmetic
& 8 & 11.0 Mbps & 4M \\
Doerner et al. (CRT)~\cite{doerner2025}
& 16 & 25.1 & 8M \\
& 32 & 37.1 & 8M \\
& 64 & 48.0 & 8M \\
& 128 & 56.0 & 8M
\end{tabular}
\vspace{2mm}
\caption{Maximum throughput observed for triple generation. Measurements were taken at 64 threads. Batch sizes are the closest powers of two.}
\label{tab:triples}
}
\end{table}

\begin{figure}[t]
    \centering
    \includegraphics[width=0.9\columnwidth]{figs/triple-runtime.png}
    \caption{\tocs{Normalized fastest runtime for our triple-generation implementations. Dashed line represents the minimum Gilboa time observed; it slows down slightly at 128 bits. Reciprocal of the values in Table \ref{tab:triples}.}}\label{fig:triple-time}
\end{figure}

\begin{figure}[t]
    \centering
    \includegraphics[width=0.7\columnwidth]{figs/triple-comm.png}
    \caption{\tocs{Normalized communication overhead of the two arithmetic triple algorithms, shown for 64 and 128 bit outputs.}}\label{fig:triple-comm}
\end{figure}

}

\section{Related Work}\label{sec:related}

There is a large body of work on systems for secure computation.
In this section, we only focus on works that are directly related to \ours. 

\paragraph{Relational MPC}
Most systems and protocols for secure relational analytics target \emph{peer-to-peer} settings and propose techniques tailored to a small, fixed number of semi-honest data owners, who also act as computing parties: either two \cite{fang2024secretflow, Wang2021Secure, peng2024mapcomp, Bater2017SMCQL, bater2018shrinkwrap, bater2020saqe} or three~\cite{Volgushev2019Conclave, luo2024secure}.
In such settings,  performance can be improved by splitting the query into (i) a plaintext part that data owners compute locally (e.g., filters, hashing, sorting, and grouping), and (ii) an oblivious part that is executed under MPC (e.g., joins,  aggregations).  
Some of these works report results for a small subset of TPC-H queries on a maximum input size of $10\mathrm M$ rows per table~\cite{fang2024secretflow} (including rows processed in plaintext).
Senate~\cite{Poddar2021Senate} proposes a custom malicious-secure \camera{circuit} decomposition protocol that supports any number of parties and provides strong security against dishonest majorities. Even though the codebase is not publicly available,  published results indicate that the protocol's scalability is limited:
the paper reports performance for a subset of TPC-H queries
with 16k input rows (Scale Factor $\sim$0.004), and three queries from ~\S\ref{sec:eval} (\texttt{Pwd}, \texttt{Comorb.}, and \texttt{Credit}) with up to 80k rows.

Contrary to \ours, join optimizations in the above works rely on the assumption that each party owns one or more input tables. 
For example, they can improve performance in a scenario where a hospital and an insurance company wish to identify common patients with high premiums, but they are not effective (or do not even apply) in cases where multiple hospitals contribute to the patients table. \ours's operators are designed for the more general outsourced setting, hence, they are independent of the number of participants and data schema.  Moreover,  while \camera{circuit} decomposition techniques pay off in early stages of complex queries, later stages still need to be evaluated under MPC by all parties.  As a result, \ours could also be used in the peer-to-peer setting to speed up the oblivious (and by far most expensive) part of complex queries. Other systems for outsourced MPC like Scape~\cite{FengScape2022} (not publicly available) and Secrecy \cite{secrecy} report results for queries with up to two joins and maximum input sizes of 2M rows (for joins) and 8M rows (for other operators).
\camera{Sharemind~\cite{student-taxes} (also not publicly available) reports results for queries with a single join and up to 17M input rows}.

\paragraph{Oblivious join operators.}
Several works from the cryptography community study individual join operators in the outsourced setting, building from research into private set intersection (PSI).
Mohassel et al.~\cite{Mohassel2020Fast} propose efficient one-to-one joins with linear complexity, but rely on the LowMC library that has been cryptanalyzed~\cite{10.1007/978-3-030-84252-9_13}.
For one-to-many joins, Krastnikov et al.~\cite{Krastnikov2020Efficient}, Badrinarayanan et al.~\cite{Badrinarayanan2022Secret}, and Asharov et al.~\cite{Asharov2023Secure} propose protocols based on sorting. They report results with up to 1M rows per input table and do not evaluate complex queries like in \ours.
These operators have the same asymptotic costs as our join-aggregation operator, but are tailored to 3-party computation protocols~\cite{Badrinarayanan2022Secret, Asharov2023Secure} or target Trusted Execution Environments (TEEs)~\cite{Krastnikov2020Efficient}. \tocsCam{DJoin \cite{djoin} hides intermediate result sizes using differentially-private noise, but is only able to evaluate cardinality queries with simple predicates.} \tocs{While the one-to-many join protocol of Badrinarayanan et al.~\cite{Badrinarayanan2022Secret} is reminiscent of ours (their ``aggregation trees'' are actually just prefix networks, like our \textsc{AggNet}), their many-to-many join protocol requires output-size leakage or a quadratic blowup.}
None of these works support many-to-many joins without suffering from the limitations we described in~\S\ref{sec:intro}.

State-of-the-art systems for relational analytics in TEEs~\cite{opaque, DBLP:conf/eurosys/DaveLPGS20, cryptoeprint:2025/183} also provide oblivious join and group-by operators. To protect access patterns from side-channel attacks,  join and aggregation  in these systems rely on bitonic sort, which requires $O(n\log^2 n)$ operations. Early systems perform the grouping using fast parallel scans~\cite{opaque}, while recent approaches rely on prefix sums~\cite{cryptoeprint:2025/183}. These TEE-based systems perform the join and the group-by one after the other. Consequently, they either leak intermediate result sizes~\cite{cryptoeprint:2025/183}, resort to worst-case padding~\cite{DBLP:conf/eurosys/DaveLPGS20}, or require users to provide an upper bound on the query result size~\cite{opaque}. In contrast,  \ours uses a \textit{composite} join-aggregation operator,  with oblivious quicksort and an \camera{aggregation network},  that requires $O(n\log n)$ operations in total.  The techniques used by \ours in the MPC context can also be used to achieve oblivious execution in enclaves, and this is an exciting direction for future~work.

\section{Conclusion}

We have presented \ours, a scalable, secure framework for evaluating complex relational analytics under MPC. 
We instantiate \ours in multiple threat models, contribute a unified join-aggregation operator to the literature, and implement state-of-the-art sorting protocols. 
We deploy \ours to obliviously execute queries with multi-way joins larger than any prior work. As presented, \ours requires data analysts to translate queries into our dataflow API; future work includes integrating \ours with an automatic query planner. Cryptographic advances in oblivious sorting will also directly lead to performance improvements in \ours.

\section*{Acknowledgements}

\camera{We thank the anonymous reviewers and artifact evaluators for their constructive feedback that substantially improved the paper. We also thank Selene Wu for implementing low-level optimizations in the \ours runtime, and Sakshi Sharma for testing our artifact. This work was developed in part with computing resources provided by the Chameleon~\cite{keahey2020lessons} and CloudLab~\cite{duplyakin2019design} research testbeds,  supported by the National Science Foundation.  Final results and evaluation used Amazon Web Services clusters through the CloudBank project (National Science Foundation Grant No.\ 1925001).  This material is based upon work supported by the National Science Foundation under Grant No.\ 2209194, REU supplement awards No.\ 2432612 and 2326580,  by the DARPA SIEVE program under Agreement No.\ HR-00112020021, and by a gift from Robert Bosch GmbH.}

\makeatletter
\@ifundefined{pdfdest}{%
  \def\pdfdest name#1XYZ{}%
}{}
\makeatother

\bibliography{references}

@article{Bater2017SMCQL,
  author    = {Johes Bater and
               Gregory Elliott and
               Craig Eggen and
               Satyender Goel and
               Abel N. Kho and
               Jennie Rogers},
  title     = {{SMCQL:} Secure Query Processing for Private Data Networks},
  journal   = {Proc. {VLDB} Endow.},
  volume    = {10},
  number    = {6},
  pages     = {673--684},
  year      = {2017},
  url       = {http://www.vldb.org/pvldb/vol10/p673-rogers.pdf},
  doi       = {10.14778/3055330.3055334},
  timestamp = {Sat, 25 Apr 2020 13:59:42 +0200},
  biburl    = {https://dblp.org/rec/journals/pvldb/BaterEEGKR17.bib},
  bibsource = {dblp computer science bibliography, https://dblp.org}
}

@inproceedings{DBLP:conf/pods/JoglekarPR16,
  author       = {Manas R. Joglekar and
                  Rohan Puttagunta and
                  Christopher R{\'{e}}},
  editor       = {Tova Milo and
                  Wang{-}Chiew Tan},
  title        = {{AJAR:} Aggregations and Joins over Annotated Relations},
  booktitle    = {Proceedings of the 35th {ACM} {SIGMOD-SIGACT-SIGAI} Symposium on Principles
                  of Database Systems, {PODS} 2016, San Francisco, CA, USA, June 26
                  - July 01, 2016},
  pages        = {91--106},
  publisher    = {{ACM}},
  year         = {2016},
  url          = {https://doi.org/10.1145/2902251.2902293},
  doi          = {10.1145/2902251.2902293},
  timestamp    = {Tue, 06 Nov 2018 16:58:02 +0100},
  biburl       = {https://dblp.org/rec/conf/pods/JoglekarPR16.bib},
  bibsource    = {dblp computer science bibliography, https://dblp.org}
}

@inproceedings{DBLP:conf/sp/ZhangGYZYW24,
  author       = {Wenhao Zhang and
                  Xiaojie Guo and
                  Kang Yang and
                  Ruiyu Zhu and
                  Yu Yu and
                  Xiao Wang},
  title        = {Efficient Actively Secure {DPF} and RAM-based 2PC with One-Bit Leakage},
  booktitle    = {{IEEE} Symposium on Security and Privacy, {SP} 2024, San Francisco,
                  CA, USA, May 19-23, 2024},
  pages        = {561--577},
  publisher    = {{IEEE}},
  year         = {2024},
  url          = {https://doi.org/10.1109/SP54263.2024.00205},
  doi          = {10.1109/SP54263.2024.00205},
  timestamp    = {Sat, 21 Sep 2024 14:24:56 +0200},
  biburl       = {https://dblp.org/rec/conf/sp/ZhangGYZYW24.bib},
  bibsource    = {dblp computer science bibliography, https://dblp.org}
}

@inproceedings{10.1145/1629575.1629600,
author = {Yu, Yuan and Gunda, Pradeep Kumar and Isard, Michael},
title = {Distributed aggregation for data-parallel computing: interfaces and implementations},
year = {2009},
isbn = {9781605587523},
publisher = {Association for Computing Machinery},
address = {New York, NY, USA},
url = {https://doi.org/10.1145/1629575.1629600},
doi = {10.1145/1629575.1629600},
abstract = {Data-intensive applications are increasingly designed to execute on large computing clusters. Grouped aggregation is a core primitive of many distributed programming models, and it is often the most efficient available mechanism for computations such as matrix multiplication and graph traversal. Such algorithms typically require non-standard aggregations that are more sophisticated than traditional built-in database functions such as Sum and Max. As a result, the ease of programming user-defined aggregations, and the efficiency of their implementation, is of great current interest.This paper evaluates the interfaces and implementations for user-defined aggregation in several state of the art distributed computing systems: Hadoop, databases such as Oracle Parallel Server, and DryadLINQ. We show that: the degree of language integration between user-defined functions and the high-level query language has an impact on code legibility and simplicity; the choice of programming interface has a material effect on the performance of computations; some execution plans perform better than others on average; and that in order to get good performance on a variety of workloads a system must be able to select between execution plans depending on the computation. The interface and execution plan described in the MapReduce paper, and implemented by Hadoop, are found to be among the worst-performing choices.},
booktitle = {Proceedings of the ACM SIGOPS 22nd Symposium on Operating Systems Principles},
pages = {247–260},
numpages = {14},
keywords = {cloud computing, concurrency, distributed programming},
location = {Big Sky, Montana, USA},
series = {SOSP '09}
}

@inproceedings{pig-latin,
author = {Olston, Christopher and Reed, Benjamin and Silberstein, Adam and Srivastava, Utkarsh},
title = {Automatic optimization of parallel dataflow programs},
year = {2008},
publisher = {USENIX Association},
address = {USA},
abstract = {Large-scale parallel dataflow systems, e.g., Dryad and Map-Reduce, have attracted significant attention recently. High-level dataflow languages such as Pig Latin and Sawzall are being layered on top of these systems, to enable faster program development and more maintainable code. These languages engender greater transparency in program structure, and open up opportunities for automatic optimization. This paper proposes a set of optimization strategies for this context, drawing on and extending techniques from the database community.},
booktitle = {USENIX 2008 Annual Technical Conference},
pages = {267–273},
numpages = {7},
location = {Boston, Massachusetts},
series = {ATC'08}
}

@INPROCEEDINGS{FengScape2022,
  author={Han, Feng and Zhang, Lan and Feng, Hanwen and Liu, Weiran and Li, Xiangyang},
  booktitle={2022 IEEE 38th International Conference on Data Engineering (ICDE)}, 
  title={Scape: Scalable Collaborative Analytics System on Private Database with Malicious Security}, 
  year={2022},
  volume={},
  number={},
  pages={1740-1753},
  keywords={Heart;Data privacy;Protocols;Costs;Databases;Conferences;Collaboration},
  doi={10.1109/ICDE53745.2022.00176}}

@inproceedings{DBLP:conf/cidr/0001L24,
  author       = {Thomas Neumann and
                  Viktor Leis},
  title        = {A Critique of Modern {SQL} and a Proposal Towards a Simple and Expressive
                  Query Language},
  booktitle    = {14th Conference on Innovative Data Systems Research, {CIDR} 2024,
                  Chaminade, HI, USA, January 14-17, 2024},
  publisher    = {www.cidrdb.org},
  year         = {2024},
  url          = {https://www.cidrdb.org/cidr2024/papers/p48-neumann.pdf},
  timestamp    = {Fri, 05 Apr 2024 17:17:09 +0200},
  biburl       = {https://dblp.org/rec/conf/cidr/0001L24.bib},
  bibsource    = {dblp computer science bibliography, https://dblp.org}
}

@misc{un-case-studies,
author = {{United Nations Global Working Group (GWG) Task Team on Privacy Preserving Techniques}},
title = {{Case study repository}},
year = {2023},
howpublished = {\url{https://unstats.un.org/wiki/display/UGTTOPPT/Case+study+repository}},
}

@inbook{Wang2021Secure,
author = {Wang, Yilei and Yi, Ke},
title = {Secure Yannakakis: Join-Aggregate Queries over Private Data},
year = {2021},
isbn = {9781450383431},
publisher = {Association for Computing Machinery},
address = {New York, NY, USA},
url = {https://doi.org/10.1145/3448016.3452808},
booktitle = {Proceedings of the 2021 International Conference on Management of Data},
pages = {1969–1981},
numpages = {13}
}

@inproceedings{10.1145/2810103.2813651,
author = {Naveed, Muhammad and Kamara, Seny and Wright, Charles V.},
title = {Inference Attacks on Property-Preserving Encrypted Databases},
year = {2015},
isbn = {9781450338325},
doi = {10.1145/2810103.2813651},
abstract = {Many encrypted database (EDB) systems have been proposed in the last few years as
cloud computing has grown in popularity and data breaches have increased. The state-of-the-art
EDB systems for relational databases can handle SQL queries over encrypted data and
are competitive with commercial database systems. These systems, most of which are
based on the design of CryptDB (SOSP 2011), achieve these properties by making use
of property-preserving encryption schemes such as deterministic (DTE) and order- preserving
encryption (OPE). In this paper, we study the concrete security provided by such systems.
We present a series of attacks that recover the plaintext from DTE- and OPE-encrypted
database columns using only the encrypted column and publicly-available auxiliary
information. We consider well-known attacks, including frequency analysis and sorting,
as well as new attacks based on combinatorial optimization.We evaluate these attacks
empirically in an electronic medical records (EMR) scenario using real patient data
from 200 U.S. hospitals. When the encrypted database is operating in a steady-state
where enough encryption layers have been peeled to permit the application to run its
queries, our experimental results show that an alarming amount of sensitive information
can be recovered. In particular, our attacks correctly recovered certain OPE-encrypted
attributes (e.g., age and disease severity) for more than 80% of the patient records
from 95% of the hospitals; and certain DTE- encrypted attributes (e.g., sex, race,
and mortality risk) for more than 60% of the patient records from more than 60% of
the hospitals.},
booktitle = {Proceedings of the 22nd ACM SIGSAC Conference on Computer and Communications Security},
pages = {644-655},
numpages = {12},
keywords = {encrypted databases, inference attacks, order-preserving encryption, property-preserving encryption, deterministic encryption},
series = {CCS '15}
}

@article{bater2018shrinkwrap,
  title={Shrinkwrap: efficient sql query processing in differentially private data federations},
  author={Bater, Johes and He, Xi and Ehrich, William and Machanavajjhala, Ashwin and Rogers, Jennie},
  journal={Proceedings of the VLDB Endowment},
  volume={12},
  number={3},
  pages={307--320},
  year={2018},
  publisher={VLDB Endowment}
}

@article{bater2020saqe,
  title={SAQE: practical privacy-preserving approximate query processing for data federations},
  author={Bater, Johes and Park, Yongjoo and He, Xi and Wang, Xiao and Rogers, Jennie},
  journal={Proceedings of the VLDB Endowment},
  volume={13},
  number={12},
  pages={2691--2705},
  year={2020},
  publisher={VLDB Endowment}
}

@inproceedings {Poddar2021Senate,
title = {Senate: A Maliciously-Secure {MPC} Platform for Collaborative Analytics},
  author={Poddar, Rishabh and Kalra, Sukrit and Yanai, Avishay and Deng, Ryan and Popa, Raluca Ada and Hellerstein, Joseph M},
booktitle = {30th {USENIX} Security Symposium ({USENIX} Security 21)},
year = {2021},
address = {Vancouver, B.C.},
url = {https://www.usenix.org/conference/usenixsecurity21/presentation/poddar},
publisher = {{USENIX} Association},
month = aug,
}

@inproceedings{Volgushev2019Conclave,
  author    = {Nikolaj Volgushev and
               Malte Schwarzkopf and
               Ben Getchell and
               Mayank Varia and
               Andrei Lapets and
               Azer Bestavros},
  editor    = {George Candea and
               Robbert van Renesse and
               Christof Fetzer},
  title     = {Conclave: secure multi-party computation on big data},
  booktitle = {Proceedings of the Fourteenth EuroSys Conference 2019, Dresden, Germany,
               March 25-28, 2019},
  pages     = {3:1--3:18},
  publisher = {{ACM}},
  year      = {2019},
  url       = {https://doi.org/10.1145/3302424.3303982},
  doi       = {10.1145/3302424.3303982},
  timestamp = {Thu, 28 Mar 2019 13:37:53 +0100},
  biburl    = {https://dblp.org/rec/conf/eurosys/VolgushevSGVLB19.bib},
  bibsource = {dblp computer science bibliography, https://dblp.org}
}

@article{Krastnikov2020Efficient,
  author    = {Simeon Krastnikov and
               Florian Kerschbaum and
               Douglas Stebila},
  title     = {Efficient Oblivious Database Joins},
  journal   = {Proc. {VLDB} Endow.},
  volume    = {13},
  number    = {11},
  pages     = {2132--2145},
  year      = {2020},
  url       = {http://www.vldb.org/pvldb/vol13/p2132-krastnikov.pdf},
  timestamp = {Wed, 12 Aug 2020 22:10:58 +0200},
  biburl    = {https://dblp.org/rec/journals/pvldb/KrastnikovKS20.bib},
  bibsource = {dblp computer science bibliography, https://dblp.org}
}

@inproceedings{opaque,
 author = {Wenting Zheng and Ankur Dave and Jethro G. Beekman and Raluca Ada Popa and Joseph E. Gonzalez and Ion Stoica},
 title = {Opaque: An Oblivious and Encrypted Distributed Analytics Platform},
 booktitle = {Proceedings of the 14\textsuperscript{th} USENIX Symposium on Networked Systems Design and Implementation (NSDI)},
 year = {2017},
 isbn = {978-1-931971-37-9},
 address = {Boston, Massachusetts, USA},
 pages = {283--298},
 url = {https://www.usenix.org/conference/nsdi17/technical-sessions/presentation/zheng},
}

@inproceedings{prio,
 author = {Henry Corrigan-Gibbs and Dan Boneh},
 title = {Prio: Private, Robust, and Scalable Computation of Aggregate Statistics},
 booktitle = {Proceedings of the 14\textsuperscript{th} USENIX Symposium on Networked Systems Design and Implementation (NSDI)},
 year = {2017},
 isbn = {978-1-931971-37-9},
 address = {Boston, Massachusetts, USA},
 pages = {259--282},
 url = {https://www.usenix.org/conference/nsdi17/technical-sessions/presentation/corrigan-gibbs},
 publisher = {{USENIX} Association},
}

@inproceedings{djoin,
 author = {Narayan, Arjun and Haeberlen, Andreas},
 title = {DJoin: Differentially Private Join Queries over Distributed Databases},
 booktitle = {Proceedings of the 10\textsuperscript{th} USENIX Conference on Operating Systems Design and Implementation (OSDI)},
 year = {2012},
 month = oct,
 isbn = {978-1-931971-96-6},
 location = {Hollywood, California, USA},
 pages = {149--162},
 numpages = {14},
 url = {http://dl.acm.org/citation.cfm?id=2387880.2387895},
 acmid = {2387895},
}

@inproceedings{mpc-sorting,
 author = {Kristj\'{a}n Valur Jónsson and Gunnar Kreitz and Misbah Uddin},
 title = {Secure multi-party sorting and applications},
 booktitle = {Proceedings of the 9\textsuperscript{th} International Conference on Applied Cryptography and Network Security (ACNS)},
 year = {2011},
 month = jun,
 location = {Nerja (Malaga), Spain},
}

@inproceedings{10.1145/3460120.3484560,
author = {Araki, Toshinori and Furukawa, Jun and Ohara, Kazuma and Pinkas, Benny and Rosemarin, Hanan and Tsuchida, Hikaru},
title = {Secure Graph Analysis at Scale},
year = {2021},
isbn = {9781450384544},
publisher = {Association for Computing Machinery},
address = {New York, NY, USA},
url = {https://doi.org/10.1145/3460120.3484560},
doi = {10.1145/3460120.3484560},
abstract = {We present a highly-scalable secure computation of graph algorithms, which hides all information about the topology of the graph or other input values associated with nodes or edges. The setting is where all nodes and edges of the graph are secret-shared between multiple servers, and a secure computation protocol is run between these servers. While the method is general, we demonstrate it in a 3-server setting with an honest majority, with either semi-honest security or full security. A major technical contribution of our work is replacing the usage of secure sort protocols with secure shuffles, which are much more efficient. Full security against malicious behavior is achieved by adding an efficient verification for the shuffle operation, and computing circuits using fully secure protocols. We demonstrate the applicability of this technology by implementing two major algorithms: computing breadth-first search (BFS), which is also useful for contact tracing on private contact graphs, and computing maximal independent set (MIS). We implement both algorithms, with both semi-honest and full security, and run them within seconds on graphs of millions of elements.},
booktitle = {Proceedings of the 2021 ACM SIGSAC Conference on Computer and Communications Security},
pages = {610–629},
numpages = {20},
keywords = {MPC, oblivious shuffle, oblivious sort, secure multi-party computation},
location = {Virtual Event, Republic of Korea},
series = {CCS '21}
}

@article{student-taxes,
  author = {Dan Bogdanov and Liina Kamm and Baldur Kubo and Reimo Rebane and Ville Sokk and Riivo Talviste},
  title = {{Students and Taxes: a Privacy-Preserving Study Using Secure Computation}},
  journal = {Proceedings on Privacy Enhancing Technologies (PoPETS)},
  volume = {2016},
  number = {3},
  pages = {117--135},
  year = {2016},
  url = {http://www.degruyter.com/view/j/popets.2016.2016.issue-3/popets-2015-0019/popets-2016-0019.xml},
}

@inproceedings{ArakiFLNO16,
  author    = {Toshinori Araki and
               Jun Furukawa and
               Yehuda Lindell and
               Ariel Nof and
               Kazuma Ohara},
  title     = {High-Throughput Semi-Honest Secure Three-Party Computation with an
               Honest Majority},
  booktitle = {Proceedings of the 2016 {ACM} {SIGSAC} Conference on Computer and
               Communications Security (CCS)},
  location  = {Vienna, Austria},
  month     = oct,
  pages     = {805--817},
  year      = {2016},
  url       = {http://doi.acm.org/10.1145/2976749.2978331},
  doi       = {10.1145/2976749.2978331},
}

@article{DBLP:journals/cacm/ZahariaXWDADMRV16,
  author       = {Matei Zaharia and
                  Reynold S. Xin and
                  Patrick Wendell and
                  Tathagata Das and
                  Michael Armbrust and
                  Ankur Dave and
                  Xiangrui Meng and
                  Josh Rosen and
                  Shivaram Venkataraman and
                  Michael J. Franklin and
                  Ali Ghodsi and
                  Joseph Gonzalez and
                  Scott Shenker and
                  Ion Stoica},
  title        = {Apache Spark: a unified engine for big data processing},
  journal      = {Commun. {ACM}},
  volume       = {59},
  number       = {11},
  pages        = {56--65},
  year         = {2016},
  url          = {https://doi.org/10.1145/2934664},
  doi          = {10.1145/2934664},
  timestamp    = {Mon, 05 Feb 2024 20:21:29 +0100},
  biburl       = {https://dblp.org/rec/journals/cacm/ZahariaXWDADMRV16.bib},
  bibsource    = {dblp computer science bibliography, https://dblp.org}
}

@inproceedings{Beaver91a,
  author    = {Donald Beaver},
  title     = {Efficient Multiparty Protocols Using Circuit Randomization},
  booktitle = {Proceedings of the 11\textsuperscript{th} Annual International Cryptology
               Conference (CRYPTO)},
  location  = {Santa Barbara, California, USA},
  month     = aug,
  pages     = {420--432},
  year      = {1991},
  url       = {https://doi.org/10.1007/3-540-46766-1_34},
  doi       = {10.1007/3-540-46766-1_34},
}

@inproceedings{crypten,
  author={B. Knott and S. Venkataraman and A.Y. Hannun and S. Sengupta and M. Ibrahim and L.J.P. van der Maaten},
  title={CrypTen: Secure Multi-Party Computation Meets Machine Learning},
  booktitle={Proceedings of the NeurIPS Workshop on Privacy-Preserving Machine Learning},
  year={2020},
}

@inproceedings{Batcher68,
  author    = {Kenneth E. Batcher},
  title     = {Sorting Networks and Their Applications},
  booktitle = {American Federation of Information Processing Societies: {AFIPS} Conference
               Proceedings: 1968 Spring Joint Computer Conference, Atlantic City,
               NJ, USA, 30 April - 2 May 1968},
  pages     = {307--314},
  year      = {1968},
  url       = {http://doi.acm.org/10.1145/1468075.1468121},
  doi       = {10.1145/1468075.1468121},
}

@article{shamir,
author = {Shamir, Adi},
title = {How to Share a Secret},
year = {1979},
issue_date = {Nov. 1979},
publisher = {Association for Computing Machinery},
address = {New York, NY, USA},
volume = {22},
number = {11},
issn = {0001-0782},
url = {https://doi.org/10.1145/359168.359176},
doi = {10.1145/359168.359176},
abstract = {In this paper we show how to divide data D into n pieces in such a way that D is easily reconstructable from any k pieces, but even complete knowledge of k - 1 pieces reveals absolutely no information about D. This technique enables the construction of robust key management schemes for cryptographic systems that can function securely and reliably even when misfortunes destroy half the pieces and security breaches expose all but one of the remaining pieces.},
journal = {Commun. ACM},
month = nov,
pages = {612-613},
numpages = {2},
keywords = {interpolation, key management, cryptography}
}

@inproceedings{aby,
  author    = {Daniel Demmler and
               Thomas Schneider and
               Michael Zohner},
  title     = {{ABY} - {A} Framework for Efficient Mixed-Protocol Secure Two-Party
               Computation},
  booktitle = {22nd Annual Network and Distributed System Security Symposium, {NDSS}
               2015, San Diego, California, USA, February 8-11, 2015},
  publisher = {The Internet Society},
  year      = {2015},
  url       = {https://www.ndss-symposium.org/ndss2015/aby---framework-efficient-mixed-protocol-secure-two-party-computation},
  timestamp = {Tue, 16 Jan 2018 13:23:48 +0100},
  biburl    = {https://dblp.org/rec/conf/ndss/Demmler0Z15.bib},
  bibsource = {dblp computer science bibliography, https://dblp.org}
}

@article{10.1145/1061318.1061322,
author = {Madden, Samuel R. and Franklin, Michael J. and Hellerstein, Joseph M. and Hong, Wei},
title = {TinyDB: an acquisitional query processing system for sensor networks},
year = {2005},
issue_date = {March 2005},
publisher = {Association for Computing Machinery},
address = {New York, NY, USA},
volume = {30},
number = {1},
issn = {0362-5915},
url = {https://doi.org/10.1145/1061318.1061322},
doi = {10.1145/1061318.1061322},
abstract = {We discuss the design of an acquisitional query processor for data collection in sensor networks. Acquisitional issues are those that pertain to where, when, and how often data is physically acquired (sampled) and delivered to query processing operators. By focusing on the locations and costs of acquiring data, we are able to significantly reduce power consumption over traditional passive systems that assume the a priori existence of data. We discuss simple extensions to SQL for controlling data acquisition, and show how acquisitional issues influence query optimization, dissemination, and execution. We evaluate these issues in the context of TinyDB, a distributed query processor for smart sensor devices, and show how acquisitional techniques can provide significant reductions in power consumption on our sensor devices.},
journal = {ACM Trans. Database Syst.},
month = mar,
pages = {122–173},
numpages = {52},
keywords = {sensor networks, data acquisition, Query processing}
}

@misc{mozilla,
  author = {{Mozilla}},
  title = {{Next steps in privacy-preserving telemetry with Prio}},
  howpublished = "\url{https://blog.mozilla.org/security/2019/06/06/next-steps-in-privacy-preserving-telemetry-with-prio/}",
  year = {2019},
  note = "[Online; accessed September 2025]"
}

@misc{google-sec-agg,
  author = {{Apple and Google}},
  title = {{Exposure Notification Privacy-preserving Analytics (ENPA)}},
  howpublished = "\url{https://covid19-static.cdn-apple.com/applications/covid19/current/static/contact-tracing/pdf/ENPA_White_Paper.pdf}",
  year = {2021},
  note = "[Online; accessed April 2025]"
}

@inproceedings {secrecy,
author = {John Liagouris and Vasiliki Kalavri and Muhammad Faisal and Mayank Varia},
title = {{SECRECY}: Secure collaborative analytics in untrusted clouds},
booktitle = {20th USENIX Symposium on Networked Systems Design and Implementation (NSDI 23)},
year = {2023},
isbn = {978-1-939133-33-5},
address = {Boston, MA},
pages = {1031--1056},
url = {https://www.usenix.org/conference/nsdi23/presentation/liagouris},
publisher = {USENIX Association},
month = apr
}

@inproceedings{fantastic4,
  author    = {Anders P. K. Dalskov and
               Daniel Escudero and
               Marcel Keller},
  title     = {Fantastic Four: Honest-Majority Four-Party Secure Computation With
               Malicious Security},
  booktitle = {{USENIX} Security Symposium},
  pages     = {2183--2200},
  publisher = {{USENIX} Association},
  year      = {2021}
}

@misc{tpc-h,
  author = {{Transaction Processing Performance Council}},
  title = {{TPC-H Benchmark Specification}},
  howpublished = {https://www.tpc.org/tpch/},
  year = {2024},
  note = "[Online; accessed September 2024]"
}

@article{10.1145/3387108,
author = {Lindell, Yehuda},
title = {Secure multiparty computation},
year = {2020},
issue_date = {January 2021},
publisher = {Association for Computing Machinery},
address = {New York, NY, USA},
volume = {64},
number = {1},
issn = {0001-0782},
url = {https://doi.org/10.1145/3387108},
doi = {10.1145/3387108},
abstract = {MPC has moved from theoretical study to real-world usage. How is it doing?},
journal = {Commun. ACM},
month = {dec},
pages = {86–96},
numpages = {11}
}

@inproceedings{DBLP:conf/ccs/GrubbsLMP18,
  author    = {Paul Grubbs and
               Marie{-}Sarah Lacharit{\'{e}} and
               Brice Minaud and
               Kenneth G. Paterson},
  title     = {Pump up the Volume: Practical Database Reconstruction from Volume
               Leakage on Range Queries},
  booktitle = {{CCS}},
  pages     = {315--331},
  publisher = {{ACM}},
  year      = {2018}
}

@inproceedings{DBLP:conf/eurosys/DaveLPGS20,
  author    = {Ankur Dave and
               Chester Leung and
               Raluca Ada Popa and
               Joseph E. Gonzalez and
               Ion Stoica},
  title     = {Oblivious coopetitive analytics using hardware enclaves},
  booktitle = {EuroSys},
  pages     = {39:1--39:17},
  publisher = {{ACM}},
  year      = {2020}
}

@inproceedings {tva,
author = {Muhammad Faisal and Jerry Zhang and John Liagouris and Vasiliki Kalavri and Mayank Varia},
title = {{TVA}: A multi-party computation system for secure and expressive time series analytics},
booktitle = {32nd USENIX Security Symposium (USENIX Security 23)},
year = {2023},
isbn = {978-1-939133-37-3},
address = {Anaheim, CA},
pages = {5395--5412},
url = {https://www.usenix.org/conference/usenixsecurity23/presentation/faisal},
publisher = {USENIX Association},
month = aug
}

@misc{libote,
    author = {Peter Rindal and Lance Roy},
    title = {{libOTe: an efficient, portable, and easy to use Oblivious Transfer Library}},
    howpublished = {\url{https://github.com/osu-crypto/libOTe}},
    year = {{Last access: September 2024}}
}

@misc{libsodium,
    author = {The libsodium Community},
    title = {{libsodium: A modern, portable, easy to use crypto library}},
    howpublished = {\url{https://libsodium.org/}},
    year = "2025",
    note = "[Online; accessed September 2025]"
}

@inproceedings{shuffle-then-sort,
  author    = {Koki Hamada and
               Ryo Kikuchi and
               Dai Ikarashi and
               Koji Chida and
               Katsumi Takahashi},
  title     = {Practically Efficient Multi-party Sorting Protocols from Comparison
               Sort Algorithms},
  booktitle = {{ICISC}},
  series    = {Lecture Notes in Computer Science},
  volume    = {7839},
  pages     = {202--216},
  publisher = {Springer},
  year      = {2012}
}

@inproceedings{A_blanton2012private,
  author       = {Marina Blanton and
                  Everaldo Aguiar},
  title        = {Private and oblivious set and multiset operations},
  booktitle    = {AsiaCCS},
  pages        = {40--41},
  publisher    = {{ACM}},
  year         = {2012}
}

@inproceedings{B_goodrich2011data,
  author       = {Michael T. Goodrich},
  title        = {Data-oblivious external-memory algorithms for the compaction, selection,
                  and sorting of outsourced data},
  booktitle    = {{SPAA}},
  pages        = {379--388},
  publisher    = {{ACM}},
  year         = {2011}
}

@inproceedings{kaviani2024flock,
  title={Flock: A Framework for Deploying On-Demand Distributed Trust},
  author={Kaviani, Darya and Tan, Sijun and Kannan, Pravein Govindan and Poda, Raluca Ada},
  booktitle={USENIX Symposium on Operating Systems Design and Implementation},
  year={2024}
}

@incollection{keahey2020lessons, 
title={Lessons Learned from the Chameleon Testbed}, 
author={Kate Keahey and Jason Anderson and Zhuo Zhen and Pierre Riteau and Paul Ruth and Dan Stanzione and Mert Cevik and Jacob Colleran and Haryadi S. Gunawi and Cody Hammock and Joe Mambretti and Alexander Barnes and Fran\c{c}ois Halbach and Alex Rocha and Joe Stubbs}, booktitle={Proceedings of the 2020 USENIX Annual Technical Conference (USENIX ATC '20)}, 
publisher={USENIX Association}, 
month={July}, 
year={2020}}

@inproceedings{duplyakin2019design,
  title={The design and operation of $\{$CloudLab$\}$},
  author={Duplyakin, Dmitry and Ricci, Robert and Maricq, Aleksander and Wong, Gary and Duerig, Jonathon and Eide, Eric and Stoller, Leigh and Hibler, Mike and Johnson, David and Webb, Kirk and others},
  booktitle={2019 USENIX annual technical conference (USENIX ATC 19)},
  pages={1--14},
  year={2019}
}

@inproceedings{keller2020mp,
  title={{MP-SPDZ}: A versatile framework for multi-party computation},
  author={Keller, Marcel},
  booktitle={Proceedings of the 2020 ACM SIGSAC conference on computer and communications security},
  pages={1575--1590},
  year={2020}
}

@misc{cryptoeprint:2025/183,
      author = {Apostolos Mavrogiannakis and Xian Wang and Ioannis Demertzis and Dimitrios Papadopoulos and Minos Garofalakis},
      title = {{OBLIVIATOR}: Oblivious Parallel Joins and other Operators in Shared Memory Environments},
      howpublished = {Cryptology {ePrint} Archive, Paper 2025/183},
      year = {2025},
      url = {https://eprint.iacr.org/2025/183}
}

@INPROCEEDINGS{waldo,
  author={Dauterman, Emma and Rathee, Mayank and Popa, Raluca Ada and Stoica, Ion},
  booktitle={2022 IEEE Symposium on Security and Privacy (SP)}, 
  title={Waldo: A Private Time-Series Database from Function Secret Sharing}, 
  year={2022},
  volume={},
  number={},
  pages={2450-2468},
  keywords={Privacy;Databases;Filtering;Aggregates;Random access memory;Bandwidth;Data breach;privacy;database},
  doi={10.1109/SP46214.2022.9833611}}

@misc{bwwc-privacy,
author = {{Boston Women's Workforce Council}},
title = {Data Privacy: Ensuring Secure and Private Data Analysis},
year = {2024},
howpublished = {https://thebwwc.org/mpc},
}

@inproceedings{Badrinarayanan2022Secret,
author = {Badrinarayanan, Saikrishna and Das, Sourav and Garimella, Gayathri and Raghuraman, Srinivasan and Rindal, Peter},
title = {Secret-Shared Joins with Multiplicity from Aggregation Trees},
year = {2022},
isbn = {9781450394505},
publisher = {Association for Computing Machinery},
address = {New York, NY, USA},
url = {https://doi.org/10.1145/3548606.3560670},
doi = {10.1145/3548606.3560670},
booktitle = {Proceedings of the 2022 ACM SIGSAC Conference on Computer and Communications Security},
pages = {209–222},
numpages = {14},
keywords = {secure joins, secure aggregation, private set intersection},
location = {Los Angeles, CA, USA},
series = {CCS '22}
}

@inproceedings{Asharov2023Secure,
author = {Asharov, Gilad and Hamada, Koki and Kikuchi, Ryo and Nof, Ariel and Pinkas, Benny and Tomida, Junichi},
title = {Secure Statistical Analysis on Multiple Datasets: Join and Group-By},
year = {2023},
isbn = {9798400700507},
publisher = {Association for Computing Machinery},
address = {New York, NY, USA},
url = {https://doi.org/10.1145/3576915.3623119},
doi = {10.1145/3576915.3623119},
booktitle = {Proceedings of the 2023 ACM SIGSAC Conference on Computer and Communications Security},
pages = {3298–3312},
numpages = {15},
keywords = {group-by, honest majority, join, multiparty computation, privacy-preserving protocols},
location = {Copenhagen, Denmark},
series = {CCS '23}
}

@inproceedings{Mohassel2020Fast,
author = {Mohassel, Payman and Rindal, Peter and Rosulek, Mike},
title = {Fast Database Joins and PSI for Secret Shared Data},
year = {2020},
isbn = {9781450370899},
publisher = {Association for Computing Machinery},
address = {New York, NY, USA},
url = {https://doi.org/10.1145/3372297.3423358},
doi = {10.1145/3372297.3423358},
booktitle = {Proceedings of the 2020 ACM SIGSAC Conference on Computer and Communications Security},
pages = {1271–1287},
numpages = {17},
location = {Virtual Event, USA},
series = {CCS '20}
}

@article{fang2024secretflow,
  title={SecretFlow-SCQL: A Secure Collaborative Query Platform},
  author={Fang, Wenjing and Cao, Shunde and Hua, Guojin and Ma, Junming and Yu, Yongqiang and Huang, Qunshan and Feng, Jun and Tan, Jin and Zan, Xiaopeng and Duan, Pu and others},
  journal={Proceedings of the VLDB Endowment},
  volume={17},
  number={12},
  pages={3987--4000},
  year={2024},
  publisher={VLDB Endowment}
}

@article{peng2024mapcomp,
      title={MapComp: A Secure View-based Collaborative Analytics Framework for Join-Group-Aggregation}, 
      author={Xinyu Peng and Feng Han and Li Peng and Weiran Liu and Zheng Yan and Kai Kang and Xinyuan Zhang and Guoxing Wei and Jianling Sun and Jinfei Liu},
      year={2024},
      eprint={2408.01246},
      archivePrefix={arXiv},
      primaryClass={cs.CR},
      url={https://arxiv.org/abs/2408.01246}, 
      journal={arXiv preprint 2408.01246}
}

@article{luo2024secure,
      title={Secure Query Processing with Linear Complexity}, 
      author={Qiyao Luo and Yilei Wang and Wei Dong and Ke Yi},
      year={2024},
      eprint={2403.13492},
      archivePrefix={arXiv},
      primaryClass={cs.CR},
      url={https://arxiv.org/abs/2403.13492}, 
      journal={arXiv preprint 2403.13492}
}

@inproceedings{rathee2024private,
  title={Private Analytics via Streaming, Sketching, and Silently Verifiable Proofs},
  author={Rathee, Mayank and Zhang, Yuwen and Corrigan-Gibbs, Henry and Popa, Raluca Ada},
  booktitle={2024 IEEE Symposium on Security and Privacy (SP)},
  pages={194--194},
  year={2024},
  organization={IEEE Computer Society}
}

@INPROCEEDINGS{Gray1996Data,
  author={Gray, J. and Bosworth, A. and Lyaman, A. and Pirahesh, H.},
  booktitle={Proceedings of the Twelfth International Conference on Data Engineering}, 
  title={Data cube: a relational aggregation operator generalizing GROUP-BY, CROSS-TAB, and SUB-TOTALS}, 
  year={1996},
  volume={},
  number={},
  pages={152-159},
  keywords={Aggregates;Data analysis;Displays;Data mining;Data visualization;Area measurement;Time measurement;Temperature measurement;Histograms;Visual databases},
  doi={10.1109/ICDE.1996.492099}}

@inproceedings{ahi22-radixsort,
    author = {Asharov, Gilad and Hamada, Koki and Ikarashi, Dai and Kikuchi, Ryo and Nof, Ariel and Pinkas, Benny and Takahashi, Katsumi and Tomida, Junichi},
    title = {Efficient Secure Three-Party Sorting with Applications to Data Analysis and Heavy Hitters},
    year = {2022},
    isbn = {9781450394505},
    publisher = {Association for Computing Machinery},
    address = {New York, NY, USA},
    url = {https://doi.org/10.1145/3548606.3560691},
    doi = {10.1145/3548606.3560691},
    booktitle = {Proceedings of the 2022 ACM SIGSAC Conference on Computer and Communications Security},
    pages = {125–138},
    numpages = {14},
    keywords = {honest majority, secure computation, sorting},
    location = {Los Angeles, CA, USA},
    series = {CCS '22}
}

@article{mergeshuffle,
  title={Mergeshuffle: a very fast, parallel random permutation algorithm},
  author={Bacher, Axel and Bodini, Olivier and Hollender, Alexandros and Lumbroso, J{\'e}r{\'e}mie},
  journal={arXiv preprint 1508.03167},
  year={2015}
}

@misc{permutation-correlations,
      author = {Stanislav Peceny and Srinivasan Raghuraman and Peter Rindal and Harshal Shah},
      title = {Efficient Permutation Correlations and Batched Random Access for Two-Party Computation},
      howpublished = {Cryptology {ePrint} Archive, Paper 2024/547},
      year = {2024},
      url = {https://eprint.iacr.org/2024/547}
}

@InProceedings{sharemind-radixsort,
  author="Bogdanov, Dan
  and Laur, Sven
  and Talviste, Riivo",
  editor="Bernsmed, Karin
  and Fischer-H{\"u}bner, Simone",
  title="A Practical Analysis of Oblivious Sorting Algorithms for Secure Multi-party Computation",
  booktitle="Secure IT Systems",
  year="2014",
  publisher="Springer International Publishing",
  address="Cham",
  pages="59--74",
  isbn="978-3-319-11599-3"
}

@article{large-deviations-quicksort,
  author = {McDiarmid, C.J.H. and Hayward, R.B.},
  title = {Large Deviations for Quicksort},
  year = {1996},
  issue_date = {Nov. 1996},
  publisher = {Academic Press, Inc.},
  address = {USA},
  volume = {21},
  number = {3},
  issn = {0196-6774},
  url = {https://doi.org/10.1006/jagm.1996.0055},
  doi = {10.1006/jagm.1996.0055},
  journal = {J. Algorithms},
  month = nov,
  pages = {476–507},
  numpages = {32}
}

@misc{sqlite,
    author = {{SQLite}},
    title = {SQLite SQL database engine},
    howpublished = {https://sqlite.org/},
    year = 2025,
    note = "[Online; accessed September 2025]"
}

@book{fisher-yates-shuffle,
  title     = {Statistical Tables for Biological, Agricultural and Medical Research},
  author    = {Fisher, Ronald Aylmer and Yates, Frank},
  publisher = {Oliver and Boyd},
  year = {1938}
}

@article{jesus2014survey,
  title={A survey of distributed data aggregation algorithms},
  author={Jesus, Paulo and Baquero, Carlos and Almeida, Paulo S{\'e}rgio},
  journal={IEEE Communications Surveys \& Tutorials},
  volume={17},
  number={1},
  pages={381--404},
  year={2014},
  publisher={IEEE}
}

@inproceedings{yannakakis1981algorithms,
  title={Algorithms for acyclic database schemes},
  author={Yannakakis, Mihalis},
  booktitle={VLDB},
  volume={81},
  pages={82--94},
  year={1981}
}

@inproceedings{10.1007/978-3-030-84252-9_13,
author = {Liu, Fukang and Isobe, Takanori and Meier, Willi},
title = {Cryptanalysis of Full LowMC and LowMC-M with Algebraic Techniques},
year = {2021},
isbn = {978-3-030-84251-2},
publisher = {Springer-Verlag},
address = {Berlin, Heidelberg},
url = {https://doi.org/10.1007/978-3-030-84252-9_13},
doi = {10.1007/978-3-030-84252-9_13},
abstract = {In this paper, we revisit the difference enumeration technique for LowMC and develop new algebraic techniques to achieve efficient key-recovery attacks. In the original difference enumeration attack framework, an inevitable step is to precompute and store a set of intermediate state differences for efficient checking via the binary search. Our first observation is that Bar-On et al.’s general algebraic technique developed for SPNs with partial nonlinear layers can be utilized to fulfill the same task, which can make the memory complexity negligible as there is no need to store a huge set of state differences any more. Benefiting from this technique, we could significantly improve the attacks on LowMC when the block size is much larger than the key size and even break LowMC with such a kind of parameter. On the other hand, with our new key-recovery technique, we could significantly improve the time to retrieve the full key if given only a single pair of input and output messages together with the difference trail that they take, which was stated as an interesting question by Rechberger et al. at ToSC 2018. Combining both techniques, with only 2 chosen plaintexts, we could break 4 rounds of LowMC adopting a full S-Box layer with block size of 129, 192 and 255 bits, respectively, which are the 3 recommended parameters for Picnic3, an alternative third-round candidate in NIST’s Post-Quantum Cryptography competition. We have to emphasize that our attacks do not indicate that Picnic3 is broken as the Picnic use-case is very different and an attacker cannot even freely choose 2 plaintexts to encrypt for a concrete LowMC instance. However, such parameters are deemed as secure in the latest LowMC. Moreover, much more rounds of seven instances of the backdoor cipher LowMC-M as proposed by Peyrin and Wang in CRYPTO 2020 can be broken without finding the backdoor by making full use of the allowed 264 data. The above mentioned attacks are all achieved with negligible memory.},
booktitle = {Advances in Cryptology – CRYPTO 2021: 41st Annual International Cryptology Conference, CRYPTO 2021, Virtual Event, August 16–20, 2021, Proceedings, Part III},
pages = {368–401},
numpages = {34},
keywords = {LowMC, LowMC-M, Linearization, Key recovery, Negligible memory}
}

@inproceedings{DBLP:conf/crypto/DamgardN03,
  author       = {Ivan Damg{\aa}rd and
                  Jesper Buus Nielsen},
  title        = {Universally Composable Efficient Multiparty Computation from Threshold
                  Homomorphic Encryption},
  booktitle    = {{CRYPTO}},
  series       = {Lecture Notes in Computer Science},
  volume       = {2729},
  pages        = {247--264},
  publisher    = {Springer},
  year         = {2003}
}

@inproceedings {sefspu,
author = {Junming Ma and Yancheng Zheng and Jun Feng and Derun Zhao and Haoqi Wu and Wenjing Fang and Jin Tan and Chaofan Yu and Benyu Zhang and Lei Wang},
title = {{SecretFlow-SPU}: A Performant and {User-Friendly} Framework for {Privacy-Preserving} Machine Learning},
booktitle = {2023 USENIX Annual Technical Conference (USENIX ATC 23)},
year = {2023},
isbn = {978-1-939133-35-9},
address = {Boston, MA},
pages = {17--33},
url = {https://www.usenix.org/conference/atc23/presentation/ma},
publisher = {USENIX Association},
month = jul
}

@inproceedings{DBLP:conf/eurosp/KamaraKMSTY22,
  author       = {Seny Kamara and
                  Abdelkarim Kati and
                  Tarik Moataz and
                  Thomas Schneider and
                  Amos Treiber and
                  Michael Yonli},
  title        = {SoK: Cryptanalysis of Encrypted Search with {LEAKER} - {A} framework
                  for LEakage AttacK Evaluation on Real-world data},
  booktitle    = {EuroS{\&}P},
  pages        = {90--108},
  publisher    = {{IEEE}},
  year         = {2022}
}

@article{DBLP:journals/popets/KamaraKMDPT24,
  author       = {Seny Kamara and
                  Abdelkarim Kati and
                  Tarik Moataz and
                  Jamie DeMaria and
                  Andrew Park and
                  Amos Treiber},
  title        = {{MAPLE:} MArkov Process Leakage attacks on Encrypted Search},
  journal      = {Proc. Priv. Enhancing Technol.},
  volume       = {2024},
  number       = {1},
  pages        = {430--446},
  year         = {2024}
}

@inproceedings{DBLP:conf/ndss/BlackstoneKM20,
  author       = {Laura Blackstone and
                  Seny Kamara and
                  Tarik Moataz},
  title        = {Revisiting Leakage Abuse Attacks},
  booktitle    = {{NDSS}},
  publisher    = {The Internet Society},
  year         = {2020}
}

@inproceedings{DBLP:conf/crypto/EscuderoGKRS20,
  author       = {Daniel Escudero and
                  Satrajit Ghosh and
                  Marcel Keller and
                  Rahul Rachuri and
                  Peter Scholl},
  title        = {Improved Primitives for {MPC} over Mixed Arithmetic-Binary Circuits},
  booktitle    = {{CRYPTO} {(2)}},
  series       = {Lecture Notes in Computer Science},
  volume       = {12171},
  pages        = {823--852},
  publisher    = {Springer},
  year         = {2020}
}

@inproceedings{aprr24-oprf,
  author = {Alamati, Navid and Policharla, Guru-Vamsi and Raghuraman, Srinivasan and Rindal, Peter},
  title = {Improved Alternating-Moduli PRFs and Post-quantum Signatures},
  year = {2024},
  isbn = {978-3-031-68396-1},
  publisher = {Springer-Verlag},
  address = {Berlin, Heidelberg},
  url = {https://doi.org/10.1007/978-3-031-68397-8_9},
  doi = {10.1007/978-3-031-68397-8_9},
  booktitle = {Advances in Cryptology – CRYPTO 2024: 44th Annual International Cryptology Conference, CRYPTO 2024, Santa Barbara, CA, USA, August 18–22, 2024, Proceedings, Part VIII},
  pages = {274–308},
  numpages = {35},
  location = {Santa Barbara, CA, USA}
}

@article{hillis1986data,
  title={Data parallel algorithms},
  author={Hillis, W Daniel and Steele Jr, Guy L},
  journal={Communications of the ACM},
  volume={29},
  number={12},
  pages={1170--1183},
  year={1986},
  publisher={ACM New York, NY, USA}
}

@article{pairwise,
  title={The pairwise sorting network},
  author={Parberry, Ian},
  journal={Parallel Processing Letters},
  volume={2},
  number={02n03},
  pages={205--211},
  year={1992},
  publisher={World Scientific}
}

@inproceedings{gilboa,
  title={Two party RSA key generation},
  author={Gilboa, Niv},
  booktitle={Annual International Cryptology Conference},
  pages={116--129},
  year={1999},
  organization={Springer}
}

@inproceedings{doerner2025,
  title={From OT to OLE with Subquadratic Communication},
  author={Doerner, Jack and Haitner, Iftach and Ishai, Yuval and Makriyannis, Nikolaos},
  booktitle={Proceedings of the 2025 ACM SIGSAC Conference on Computer and Communications Security},
  pages={2249--2263},
  year={2025}
}

@inproceedings{liu2025bitgc,
  title={BitGC: garbled circuits with 1 bit per gate},
  author={Liu, Hanlin and Wang, Xiao and Yang, Kang and Yu, Yu},
  booktitle={Annual International Conference on the Theory and Applications of Cryptographic Techniques},
  pages={437--466},
  year={2025},
  organization={Springer}
}

@article{brentkung,
  title={A Regular Layout for Parallel Adders},
  author={Brent, RP and Kung, HT},
  journal={IEEE Transactions on Computers},
  volume={31},
  number={3},
  pages={260--264},
  year={1982},
  publisher={IEEE Computer Society Washington, DC, USA}
}

@inproceedings{bs26,
  title={When Trying to Catch Cheaters Breaks the MPC: Breaking and Fixing Delayed Consistency Checks in Trident, Fantastic Four, SWIFT, and Quad},
  author={Br{\"u}ggemann, Andreas and Schneider, Thomas},
  booktitle={Annual International Conference on the Theory and Applications of Cryptographic Techniques},
  pages={457--487},
  year={2026},
  organization={Springer}
}

@misc{cryptoeprint:2026/1152,
      author = {Eli Baum},
      title = {To Catch a Cheater: Some notes on the malicious (in)security of {ORQ}},
      howpublished = {Cryptology {ePrint} Archive, Paper 2026/1152},
      year = {2026},
      url = {https://eprint.iacr.org/2026/1152}
}

@inproceedings{orq-conf,
author = {Baum, Eli and Buxbaum, Sam and Mathai, Nitin and Faisal, Muhammad and Kalavri, Vasiliki and Varia, Mayank and Liagouris, John},
title = {ORQ: Complex Analytics on Private Data with Strong Security Guarantees},
year = {2025},
isbn = {9798400718700},
publisher = {Association for Computing Machinery},
address = {New York, NY, USA},
url = {https://doi.org/10.1145/3731569.3764833},
doi = {10.1145/3731569.3764833},
abstract = {We present Orq, a system that enables collaborative analysis of large private datasets using cryptographically secure multi-party computation (MPC). Orq protects data against semi-honest or malicious parties and can efficiently evaluate relational queries with multi-way joins and aggregations that have been considered notoriously expensive under MPC. To do so, Orq eliminates the quadratic cost of secure joins by leveraging the fact that, in practice, the structure of many real queries allows us to join records and apply the aggregations "on the fly" while keeping the result size bounded. On the system side, Orq contributes generic oblivious operators, a data-parallel vectorized query engine, a communication layer that amortizes MPC network costs, and a dataflow API for expressing relational analytics—all built from the ground up.We evaluate Orq in LAN and WAN deployments on a diverse set of workloads, including complex queries with multiple joins and custom aggregations. When compared to state-of-the-art solutions, Orq significantly reduces MPC execution times and can process one order of magnitude larger datasets. For our most challenging workload, the full TPC-H benchmark, we report results entirely under MPC with Scale Factor 10—a scale that had previously been achieved only with information leakage or the use of trusted compute.},
booktitle = {Proceedings of the ACM SIGOPS 31st Symposium on Operating Systems Principles},
pages = {802–833},
numpages = {32},
keywords = {secure analytics, multi-party computation},
location = {Lotte Hotel World, Seoul, Republic of Korea},
series = {SOSP '25}
}

\appendix
\section*{Appendix}

\tocs{
    In the following sections, we provide additional details on our oblivious shuffling (Appendix \ref{apdx:shuffle}), radixsort (Appendix \ref{sec:radixsort-analysis}), quicksort (Appendix \ref{sec:bound-quicksort}), and join primitives (including arguments of correctness and more information on the trimming heuristic; Appendix \ref{apdx:join}). Appendix~\ref{apdx:bw} provides detailed bandwidth measurements for \ours and the systems we compare against. Appendices are not included in the peer-review process.
}

\section{Oblivious Shuffle}\label{apdx:shuffle}

This section describes a set of primitives related to obliviously shuffling a secret-shared vector. Existing works typically discuss such primitives for a fixed number of parties and specific MPC protocol (e.g., \cite{ahi22-radixsort,permutation-correlations}).
Our primary contribution in this section is
a single stack of primitives that can be used across a variety of MPC protocols and threat models. We also provide novel algorithms for obliviously inverting elementwise permutations and converting elementwise permutations between arithmetic and boolean sharings.

\subsection{Preliminaries}

A permutation is a bijective mapping from $[n]$ to $[n]$, where $[n]$ represents the set $\{1, \dots, n\}$. We denote the composition of permutations $\pi$ and $\rho$ by $\pi \circ \rho(\cdot) = \pi(\rho(\cdot))$. We represent permutations as index maps: starting from a data array $\vec{x}$, if $\pi_i=j$ then $\pi(\vec{x})$ maps the value $x_i$ to position $j$. As a special case, a \textit{sorting permutation} for an input vector $\vec{x}$ is a permutation $\sigma$ such that $\sigma(\vec{x})$ is sorted. We use $\mathds{I}$ to denote the identity permutation $\mathds{I} = (1, \dots, n)$. We use $\share{x}$ to denote that a value $x$ is secret-shared, and we do not distinguish in notation between an arithmetic and boolean sharing.

Like Asharov et al. \cite{ahi22-radixsort}, in this work we consider two different forms of secret-sharing for permutations:

\begin{description}

    \item[Elementwise Permutation.] A vector of secret-shared indices, denoted $\share{ \pi } = (\share{ \pi_1 }, \dots, \share{ \pi_n })$. The elements can either be arithmetic or boolean secret-sharings.

    \item[Sharded Permutation.] A composition of random permutations, denoted $\langle \pi \rangle = \pi_n \circ \cdots \circ \pi_1$. 

\end{description}

We reproduce an important fact about permutations from Observation 2.4 of Asharov et al. \cite{ahi22-radixsort}.

\begin{fact} \label{perm-fact}
    Let $\pi$ and $\sigma$ be permutations over the set $[m]$ and let $\vec{\sigma} = (\sigma(1), \dots, \sigma(m))$ be the vector of destinations of the permutation $\sigma$. Then $\pi(\vec{\sigma}) = \sigma \circ \pi^{-1}([m])$, i.e. a vector of destinations for the permutation $\sigma \circ \pi^{-1}$.
\end{fact}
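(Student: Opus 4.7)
The plan is to reduce the claimed identity to a position-by-position comparison, using only the index-map convention for permutations fixed just before the Fact. Because both sides are vectors of length $m$, it suffices to show that for every $j \in [m]$, the $j$-th entry of $\pi(\vec{\sigma})$ equals the $j$-th entry of $\sigma \circ \pi^{-1}([m])$.

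First, I would unfold what it means to apply $\pi$ to an arbitrary vector $\vec{x}$. In the stated convention, $\pi_i = j$ means that $x_i$ is placed in position $j$ of $\pi(\vec{x})$; rewriting with $j$ as the free index, this is exactly $\pi(\vec{x})_j = x_{\pi^{-1}(j)}$. Specializing to $\vec{x} = \vec{\sigma}$, whose $i$-th entry is $\sigma_i = \sigma(i)$, yields $\pi(\vec{\sigma})_j = \sigma(\pi^{-1}(j))$ for each $j \in [m]$.

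Second, I would expand the right-hand side directly from the definition of the ``vector of destinations'' used in the statement. The vector $\sigma \circ \pi^{-1}([m])$ has $j$-th entry $(\sigma \circ \pi^{-1})(j) = \sigma(\pi^{-1}(j))$. The two sides therefore agree in every coordinate, which completes the argument.

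The main obstacle I anticipate is purely notational: the index-map convention is easy to confuse with the alternate convention where a permutation acts by $\pi(\vec{x})_i = x_{\pi(i)}$, which would flip the inverse and destroy the identity. To head this off, I would state the rule $\pi(\vec{x})_j = x_{\pi^{-1}(j)}$ explicitly at the start of the proof and never rewrite $\pi$ acting on a vector without invoking it; everything else is just substitution and reading off definitions.
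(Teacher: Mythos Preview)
Your argument is correct. The index-map convention from the preceding paragraph indeed gives $\pi(\vec{x})_j = x_{\pi^{-1}(j)}$, and specializing to $\vec{x} = \vec{\sigma}$ immediately yields $\sigma(\pi^{-1}(j))$ in each coordinate, matching the destination vector of $\sigma \circ \pi^{-1}$.

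Note that the paper does not actually prove this Fact: it is imported verbatim as Observation~2.4 of Asharov et al.\ and stated without justification. Your coordinate-wise unfolding is therefore not a paraphrase of anything in the paper but rather a self-contained elementary proof that the paper omits. Your explicit flagging of the convention $\pi(\vec{x})_j = x_{\pi^{-1}(j)}$ (versus the alternative $x_{\pi(j)}$) is the only substantive step and is handled cleanly.
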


\subsection{Local permutations}

Many of the oblivious shuffling primitives that will be discussed in the following sections involve locally generating and locally applying random permutations as subroutines. In this section, we discuss the algorithms used for generating and applying random local permutations and our methods for parallelizing these operations.

\paragraph{Generating random local permutations.}
To generate local random permutations, we use the Fisher-Yates shuffle algorithm \cite{fisher-yates-shuffle}, a single-threaded algorithm. While there exist algorithms for generating random permutations in parallel (such as MergeShuffle \cite{mergeshuffle}), we typically need to generate many random permutations at once. As a result, we parallelize the generation by distributing a batch of permutations among all cores, so that each permutation is generated in a single-threaded fashion.

We may additionally wish to generate identical random permutations among multiple parties. This can be achieved by having each party generate the permutation locally using a pseudorandom generator (PRG) with the same seed.

\paragraph{Applying random local permutations.}
Unlike generation, we apply random local permutations one at a time, so we need to parallelize each individual local permutation application. To apply a local permutation, we want each element $x_i$ of a vector $\vec{x}$ to be placed at index $\pi_i$ in the output vector $\vec{y}$. We can easily divide $\vec{x}$ and $\pi$ into contiguous blocks and give one to each thread. Since we must access random elements in $\vec{y}$, we cannot give each thread a contiguous block of $\vec{y}$. However, we can give each thread full write access to $\vec{y}$ and mathematically guarantee that no element will be written to more than once.

\subsection{Sharded permutation protocols}
\label{sub:sharded-permutation}

In this section, we describe protocols to generate and apply sharded permutations, which we denote as \texttt{genSharded}- \texttt{Perm} and \applyperm, respectively. These permutations apply to all three MPC protocols used in \ours. We explain separately our protocols in the honest-majority and dishonest-majority settings.

\paragraph{Honest majority.}
In the three-party setting with replicated secret-sharing, we use the methods for generating and applying sharded permutations by Asharov et al. \cite{ahi22-radixsort}, which we restate below for completeness. A sharded permutation $\langle \pi \rangle = \pi_3 \circ \pi_2 \circ \pi_1$ can be expressed as a replicated secret sharing
$$\langle \pi \rangle = ((\pi_{1}, \pi_{2}), (\pi_{2}, \pi_{3}), (\pi_{3}, \pi_{1})).$$

\noindent That is, each party $P_i$ holds two permutations: one in common with $P_{i+1}$ and one in common with $P_{i-1}$. To generate such a sharing, each pair of parties generates a random local permutation using their common PRG seed. However, there is one permutation that $P_i$ does not know, so $\pi$ remains secret.

Later when the parties call \applyperm to apply a sharded permutation, each pair of parties locally applies their common permutation and reshares the permuted result to the excluded party. First, $P_1$ and $P_3$ permute under $\pi_1$ and reshare to $P_2$, and then so on for $\pi_2$ and $\pi_3$.

We generalize the protocol to work with any replicated secret-sharing scheme in the honest-majority setting, with an emphasis on the four-party Fantastic Four~\cite{fantastic4} protocol. The three-party protocol proceeds in a sequence of rounds, where in each round a subset of the parties --- which we will call a \textit{shuffle group} --- locally permutes the vector and reshares it to the remaining party.
We generalize this idea of shuffle groups and design protocols in which each round contains a local permutation application and resharing by a single shuffle group. For security, we require that shuffle groups obey two properties:
\begin{enumerate}
    \item Each shuffle group can collectively hold a sharing of the input data. As a result, the size of a shuffle group is greater than $T$, which is why this approach is limited to the honest-majority setting.
    \item There must exist at least one shuffle group containing no corrupted parties, so that the composed permutation $\pi$ is unknown to the adversary.
\end{enumerate}

Using the Fantastic Four protocol as a concrete example \cite{fantastic4}, here is a set of shuffle groups in the semi-honest setting: $\mathcal{G} = \{\{P_0, P_1\}, \{P_2, P_3\}\}$.
Since there is only $T = 1$ corrupted party, each shuffle group possesses at least one copy of all secret shares, and there is a shuffle group with no corrupted parties.

There are two ways to extend this idea to the malicious-secure four-party setting. One trivial approach is to combine the above idea with the generic compiler from semi-honest to malicious security for the $\texttt{reshare}$ functionality proposed by Asharov et al. \cite{ahi22-radixsort}.
However, we take a different approach that provides ease of implementation and consistency with the other functionalities in the Fantastic Four protocol \cite{fantastic4}.
We use four shuffle groups of three parties each, so that each share is contained twice in each group, allowing for redundancy in resharing the value to the excluded party:
$$\mathcal{G} = \{\{P_0, P_1, P_2\}, \{P_1, P_2, P_3\}, \{P_2, P_3, P_0\}, \{P_3, P_0, P_1\}\}.$$

\noindent We can then base the malicious security of the oblivious sharded permutation application on the black-box security of the $\texttt{INP}$ protocol described in Fantastic Four \cite{fantastic4}. Specifically, the receiving party receives either two copies of each share or one copy and its hash. In either case, a single corrupted party cannot corrupt both values received, so cheating will be detected with overwhelming probability.

\paragraph{Dishonest majority.}
In the two-party setting with one dishonest party, we cannot use shuffle groups and therefore instead adopt the framework of Peceny et al.~\cite{permutation-correlations} for generating and applying sharded permutations.
Their approach is based on a \textit{permutation correlation}, which is a pair of tuples, one held by each party:
$(A, B_0), (B_1, \pi)$ such that $\pi(A) = B_0 + B_1.$
Here, $A, B_0, B_1 \in \mathbb{F}^n$ and $\pi$ is a random permutation over $n$ elements. $B_0$ and $B_1$ can be either arithmetic or boolean secret-shares, with the addition operator defined correspondingly.

To generate a sharded permutation, it suffices to generate two permutation correlations in a preprocessing phase, one with each party as the sender. Peceny et al. \cite{permutation-correlations} \camera{propose} two protocols for generating permutation correlations, both based on oblivious pseudorandom functions \camera{(OPRFs)}, with one of them additionally using pseudorandom correlation generators to achieve communication sublinear in the length of the elements (as opposed to the length of the permutations).

\camera{To apply a sharded permutation obliviously, we use the protocol $\Pi_{\mathrm{Comp-Perm}}$ from Peceny et al. \cite{permutation-correlations}. We make use of their library's OPRF implementation \cite{aprr24-oprf,permutation-correlations} and construct the rest of the permutation correlations from scratch. Similar ideas can be used to build \applyinvperm that applies the inverse of a sharded permutation. Since the output of the OPRF is a boolean secret sharing, we generate all permutation correlations with a boolean sharing and convert to an arithmetic sharing where needed using a boolean-to-arithmetic conversion protocol.}

\subsection{Shuffling framework}
\label{sub:shuffle-framework}

Our framework provides generic interfaces to the functionalities \genperm, \\ \applyperm, and \texttt{applyInverseShardedPerm} that call the protocols to generate and apply (respectively) a sharded permutation as described in the previous section depending on the desired MPC protocol and threat model.
In this section, we describe our abstract interface and the protocols for oblivious shuffling primitives. A detailed analysis of the complexity of each protocol in each setting can be found in Table \ref{tab:shuffle}.

\paragraph{The Permutation Manager abstraction.}
We now describe the $\texttt{PermutationManager}$, our abstraction for generating sharded permutations in a setting-agnostic manner. The goal is to make all differences in the generation and application of sharded permutations invisible to the higher level shuffle protocols. The $\texttt{PermutationManager}$ exposes two functions:

\begin{itemize}
    \item $\genperm\texttt{<T>(size, enc)}$
    \item $\genpermpair\texttt{<T1,T2>(size, enc1, enc2)}$
\end{itemize}

The function \genperm returns a random sharded permutation over $\texttt{size}$ elements of type $\texttt{T}$ (e.g., $\texttt{int32}$), where the elements have encoding $\texttt{enc}$ (either an arithmetic or boolean sharing).
The function \genpermpair returns two random sharded permutations over $\texttt{size}$ elements such that they correspond to the same random permutation. This is necessary if we wish to permute multiple columns of data according to the same sharded permutation, as is the case, for example, when applying or composing elementwise permutations. The two vectors have types $\texttt{T1}$ and $\texttt{T2}$ and encodings $\texttt{enc1}$ and $\texttt{enc2}$.
Both functions allow for batched, parallel generation of many sharded permutations. Since all computation is data-independent, our framework generates sufficiently-many sharded permutations in preprocessing; when called, these functions fetch an already-generated permutation.

In the honest-majority setting, sharded permutations are identical for all input types, regardless of the bitwidth or sharing type of the vectors to be permuted. To reserve a pair of sharded permutations which represent the same permutation, we simply generate a single sharded permutation $\langle \pi \rangle$ and return the pair $(\langle \pi \rangle, \langle \pi \rangle)$. The need to distinguish between a single sharded permutation and a pair of sharded permutations only arises in the dishonest-majority setting, as the underlying permutation correlations are type-dependent and cannot be securely reused. 

\paragraph{Applying and composing permutations.}
As described in Section \ref{sub:sharded-permutation},
our framework provides methods that apply a sharded permutation and its inverse.
It also provides methods for oblivious shuffling, permutation composition, and applying an elementwise permutation.

Our oblivious shuffle protocol $\texttt{shuffle}$ simply generates and applies a sharded permutation, as shown in Protocol \ref{shuffle}.

\begin{algorithm}[ht]
    \caption{$\Pi_{\mathrm{Shuffle}}: \texttt{shuffle}$}
    \label{shuffle}
    \SetKwInOut{Input}{input}
    \SetKwInOut{Output}{output}
    \Input{$\share{x}$}
    \Output{$\share{\pi(x)}$ for random $\pi$}
    $\langle \pi \rangle \leftarrow \genperm\texttt{<type}(x)\texttt{>}(\texttt{size}(x), \texttt{enc}(x))$ \\
    $\share{\pi(x)} \leftarrow \texttt{applyShardedPerm}(\share{x}, \langle \pi \rangle)$ \\
    \Return{$\share{\pi(x)}$}
\end{algorithm}

Next, Protocol \ref{apply-elementwise-perm} shows $\texttt{applyElementwisePerm}$, which generalizes protocol 4.2 of Asharov et al.~\cite{ahi22-radixsort}.
The main idea here (which we use also in subsequent protocols) is that if we obliviously shuffle the elementwise permutation, then it is safe to open the shuffled vector in line \ref{line:open-perm} while revealing nothing about the unshuffled vector.

\begin{algorithm}[ht]
    \caption{$\Pi_{\mathrm{ApplyElem}}: \texttt{applyElementwisePerm}$}
    \label{apply-elementwise-perm}
    \SetKwInOut{Input}{input}
    \SetKwInOut{Output}{output}
    \Input{$\share{x}, \share{\rho}$}
    \Output{$\share{\rho(x)}$}
    $\tau_1, \tau_2 \leftarrow \texttt{type}(x), \texttt{type}(\rho)$ \\
    $\epsilon_1, \epsilon_2 \leftarrow \texttt{enc}(x), \texttt{enc}(\rho)$ \\
    $\langle \pi_1 \rangle, \langle \pi_2 \rangle \leftarrow \genpermpair\texttt{<}\tau_1, \tau_2 \texttt{>}(\texttt{size}(x), \epsilon_1, \epsilon_2)$ \\
    $\share{\pi(x)} \leftarrow \texttt{applyShardedPerm}(\share{x}, \langle \pi_1 \rangle)$ \\
    $\share{\pi(\rho)} \leftarrow \texttt{applyShardedPerm}(\share{\rho}, \langle \pi_2 \rangle)$ \\
    $\pi(\rho) \leftarrow \texttt{open}(\share{\pi(\rho)})$ \label{line:open-perm} \\
    $\share{\rho(x)} \leftarrow \texttt{localApplyPerm}(\share{\pi(x)}, \pi(\rho))$ \\
    \Return{$\share{\rho(x)}$}
\end{algorithm}

Finally, Protocol \ref{compose-perm} shows $\texttt{composePerms}$, our method to compose two permutations $\share{\sigma}$ and $\share{\rho}$ that is based on protocol 4.3 of Asharov et al. \cite{ahi22-radixsort}.
For simplicity, we describe the case where both permutations have the same encoding (either arithmetic or boolean).
Otherwise, we can convert one encoding using the conversion protocol described next. Alternatively, we could allow the input elementwise permutations to have different sharing types, but we would then have to pay the additional cost of generating a pair of sharded permutations rather than a single sharded permutation.

\begin{algorithm}[ht]
    \caption{$\Pi_{\mathrm{Compose}}: \texttt{composePerms}$}
    \label{compose-perm}
    \SetKwInOut{Input}{input}
    \SetKwInOut{Output}{output}
    \Input{$\share{\sigma}, \share{\rho}$}
    \Output{$\share{\rho \circ \sigma}$}
    $\langle \pi \rangle \leftarrow \genperm\texttt{<type}(\sigma)\texttt{>}(\texttt{size}(\sigma), \texttt{enc}(\sigma))$ \\
    $\share{\pi(\sigma)} \leftarrow \texttt{applyShardedPerm}(\share{\sigma}, \langle \pi \rangle)$ \\
    $\pi(\sigma) \leftarrow \texttt{open}(\share{\pi(\sigma)})$ \\
    $\share{\pi \circ \sigma^{-1}(\rho)} \leftarrow \texttt{localApplyPerm}(\share{\rho}, (\pi(\sigma))^{-1})$ \\
    $\share{\rho \circ \sigma} \leftarrow \texttt{applyInverseShardedPerm}(\share{\pi \circ \sigma^{-1}(\rho)}, \langle \pi \rangle)$ \\
    \Return{$\share{\rho \circ \sigma}$}
\end{algorithm}

\begin{table*}[h!]
    \centering
    \begin{tabularx}{\textwidth}{lCCCCCC} 
    \toprule
     & \multicolumn{2}{c}{\textbf{2PC}} & \multicolumn{2}{c}{\textbf{3PC}} & \multicolumn{2}{c}{\textbf{4PC}} \\
    \cmidrule(lr){2-3} \cmidrule(lr){4-5} \cmidrule(lr){6-7}
    \textbf{Primitive} & \textbf{Comm.} & \textbf{Rounds} & \textbf{Comm.} & \textbf{Rounds} & \textbf{Comm.} & \textbf{Rounds} \\
    \midrule

    \texttt{genSharded} & $\mathrm{preproc.}$ & $\mathrm{preproc.}$ & - & - & - & - \\

    \texttt{applySharded} & $2 \ell n$ & $2$ & $6 \ell n$ & $3$ & $24 \ell n$ & $4$ \\

    \texttt{shuffle} & $2 \ell n$ & $2$ & $6 \ell n$ & $3$ & $24 \ell n$ & $4$ \\

    \texttt{applyElementwise} & $2 \ell n + 3 \ell_{\sigma} n$ & $5$ & $6 \ell n + 7 \ell_{\sigma} n$ & $7$ & $24 \ell n + 25 \ell_{\sigma} n$ & $9$ \\

    \texttt{compose} & $5 \ell_{\sigma} n$ & $5$ & $13 \ell_{\sigma} n$ & $7$ & $49 \ell_{\sigma} n$ & $9$ \\

    \texttt{invertElementwise} & $5 \ell_{\sigma} n$ & $5$ & $13 \ell_{\sigma} n$ & $7$ & $49 \ell_{\sigma} n$ & $9$ \\

    \texttt{convertElementwise} & $5 \ell_{\sigma} n$ & $5$ & $13 \ell_{\sigma} n$ & $7$ & $49 \ell_{\sigma} n$ & $9$ \\
    \bottomrule
    \end{tabularx}
    \caption{Communication and round complexities for 2PC, 3PC, and 4PC shuffling primitives. Here, $n$ denotes the number of elements in the input vector, $\ell$ denotes the bitwidth of the elements, and $\ell_{\sigma}$ denotes the bitwidth of a permutation ($\ell_{\sigma} = 32$ in our system).}
    \label{tab:shuffle}
\end{table*}

\paragraph{Novel shuffle primitives.}
We propose two novel protocols related to oblivious shuffling. The first converts an elementwise permutation from either an arithmetic sharing to a boolean sharing or from a boolean sharing to an arithmetic sharing. The second inverts an elementwise permutation.

To convert between arithmetic and boolean encodings, one trivial approach is to use arithmetic to boolean share conversions for each element of the permutation. \camera{In the dishonest-majority setting, we use this trivial approach.}
However, \camera{in the honest-majority setting,} we can take advantage of the additional structure imposed by permutations by recognizing that we already know every value contained in the plaintext vector, but not the ordering.
Specifically, we can open a shuffled version of the vector, re-share the opened vector under the desired type, and obliviously apply the inverse of the shuffle.
This approach, shown in Protocol \ref{convert-elementwise-perm}, is faster than the trivial protocol in the honest-majority.

\begin{algorithm}[ht]
    \caption{$\Pi_{\mathrm{Conv}}: \texttt{convertElementwisePerm}$}
    \label{convert-elementwise-perm}
    \SetKwInOut{Input}{input}
    \SetKwInOut{Output}{output}
    \Input{$\share{x}_T$, $T \in \{A:0, B:1\}$}
    \Output{$\share{x}_{1-T}$}
    $\tau \leftarrow \texttt{type}(x)$ \\
    $\langle \pi_1 \rangle, \langle \pi_2 \rangle \leftarrow \genperm\texttt{<}\tau, \tau\texttt{>}(x, T, 1-T)$ \\
    $\share{\pi(x)}_T \leftarrow \texttt{applyShardedPerm}(\share{x}_T, \langle \pi_1 \rangle)$ \\
    $\pi(x) \leftarrow \texttt{reveal}(\share{\pi(x)}_T)$ \\
    $\share{\pi(x)}_{1-T} \leftarrow \texttt{secretShare}(\pi(x), 1-T)$ \\
    $\share{x}_{1-T} \leftarrow \texttt{applyInverseShardedPerm}(\share{\pi(x)}_{1-T}, \langle \pi_2 \rangle)$ \\
    \Return{$\share{x}_{1-T}$}
\end{algorithm}

We now describe a protocol for inverting an elementwise permutation. Whereas in $\texttt{applyInverseShardedPerm}$ the inverse was not computed directly but applied to a vector, here we wish to compute the inverse so it can be composed with other elementwise permutations.
The protocol $\Pi_{\mathrm{Inv}}$ is simple: to invert an elementwise permutation, we simply obliviously apply it to the identity permutation. Let $\share{\pi}$ be the permutation we wish to invert. We create an identity vector $\mathds{I} = (1, \dots, n)$ and secret-share it. We then obliviously apply $\share{\pi}$ to $\share{\mathds{I}}$. The permuted identity vector is $\pi^{-1}$. By Fact \ref{perm-fact}, $\pi(\mathds{I}) = \mathds{I} \circ \pi^{-1} = \pi^{-1}$.
We provide a formal description in Protocol \ref{invert-elementwise-perm}, and we discuss how to use such an inversion protocol to extract the sorting permutation from a sorting protocol in section \ref{sec:wrapper}.

\begin{algorithm}[ht]
    \caption{$\Pi_{\mathrm{Inv}}: \texttt{invertElementwisePerm}$}
    \label{invert-elementwise-perm}
    \SetKwInOut{Input}{input}
    \SetKwInOut{Output}{output}
    \Input{$\share{\pi}$}
    \Output{$\share{\pi^{-1}}$}
    $\share{\pi^{-1}} \leftarrow \texttt{applyElementwisePerm}(\share{\mathds{I}}, \share{\pi})$ \\
    \Return{$\share{\pi^{-1}}$}
\end{algorithm}

\section{Radixsort analysis} \label{sec:radixsort-analysis}

Our radixsort is similar to the protocol of Bogdanov et al. \cite{sharemind-radixsort}, except we make use of recent advances in oblivious shuffling discussed in \S \ref{sub:shuffle-framework}.
For reasonable bitwidths like $\ell = 32$, our protocol can be seen as an optimization of Asharov et al. \cite{ahi22-radixsort} that achieves a significant improvement in round complexity and a mild improvement in communication complexity.

\paragraph{Comparison.}
The key difference between our works is that, whereas Asharov et al.\ runs \texttt{composePerms} after sorting each bit, we apply the permutation for each bit to a larger vector. While this increases the communication required for the permutation application step, it does not add additional rounds and it allows us to eliminate the compose step and therefore reduce the round complexity. We demonstrate analytically in Table \ref{tab:radixsort} and empirically in Figure \ref{fig:ahi22-comparison} that our protocol typically outperforms Asharov et al. \cite{ahi22-radixsort} for both $\ell = 32$ and $\ell = 64$ bits.
Because the codebase of Asharov et al. \cite{ahi22-radixsort} is proprietary, we provide our own reimplementation of their protocol that we use in the Figure \ref{fig:ahi22-comparison} benchmarks.

This may seem contradictory to the claims of Asharov et al. In fact, they claim that it is precisely the permutation composition step that leads to their improvement in performance over Bogdanov et al. The distinction is that the protocol of Asharov et al.\ has constant factors that scale better than Bogdanov et al. (and better than our protocol) as $\ell$ grows without bound, whereas our analysis focuses on the values $\ell = 32$ and $64$ that we use in our shuffling and sorting frameworks.

\paragraph{Relationship to Bogdanov et al. \cite{sharemind-radixsort}.}
Here we show that by instantiating the components of the Bogdanov et al. protocol with the more efficient variants proposed in recent works, we arrive at our protocol.
Concretely, there are three differences in our protocols.
First, their work converts the bit to sort into an arithmetic sharing and subsequently computes the $\share{ord}$ vector, whereas our work and Asharov et al.\ do the same using the $\texttt{genBitPerm}$ subprotocol.
Second, they apply the $\share{ord}$ permutation to the input vector using an oblivious shuffle, open, and local application, which we do with our equivalent $\texttt{applyElementwisePerm}$ protocol (Protocol 4.2 of Asharov et al.\ in the three-party setting).
Finally, the input padding and permutation extraction procedure described in Section 5 of Bogdanov et al. is equivalent to our procedure in \S \ref{sec:wrapper} for padding and permutation extraction, plus our protocol is generalized to include descending-order sorting.

\paragraph{Performance analysis.}
In the remainder of this section and in Table \ref{tab:radixsort}, we compare the costs of our protocol and Asharov et al. \cite{ahi22-radixsort}, specifically in the three-party setting as that is the only one they support.

\emph{Common elements.}
Both protocols invoke $\texttt{genBitPerm}$ $\ell$ times. Each invocation incurs $n$ calls to $\texttt{b2abit}$ and $n$ multiplications that cost $3 \ell_{\sigma} n$ and $\ell_{\sigma} n$ bits of communication, and $3$ and $1$ rounds of communication, respectively,
with elements of size $\ell_{\sigma}$ since $\texttt{genBitPerm}$ computes a permutation. In total, each protocol communicates $4 \ell_{\sigma} n$ bits over $4$ rounds for each call to $\texttt{genBitPerm}$, amounting to $4 \ell \cdot \ell_{\sigma} n$ bits of communication and $4 \ell$ rounds in total.

\emph{Costs of Asharov et al.}
There are two additional costs in Asharov et al.~\cite{ahi22-radixsort}: $\ell - 1$ calls to $\texttt{applyElementwisePerm}$ and $\ell - 1$ calls to $\texttt{composePerms}$.
First, each call to \\\texttt{applyElementwisePerm} operates over a single bit; using Table \ref{tab:shuffle}, we see that the total cost is $(\ell - 1) (6 n + 7 \ell_{\sigma} n)$ bits of communication over $7 (\ell - 1)$ rounds.
Second, the calls to $\texttt{composePerms}$ collectively cost $13 (\ell - 1) \ell_{\sigma} n$ bits of communication and $7 (\ell - 1)$ rounds of communication.
When adding the cost of the $\ell$ calls to $\texttt{genBitPerm}$, the total cost of the protocol is
$$24 \ell \cdot \ell_{\sigma} n - 
20 \ell_{\sigma} n +
6 (\ell - 1) n$$
bits of communication over $18 \ell - 14$ rounds.

\emph{Our costs.}
In our protocol, we eliminate the permutation compositions and instead apply the permutations to a vector of bitwidth $\ell + \ell_{\sigma}$. We make:
\begin{itemize}
    \item $\ell - 1$ calls to $\texttt{applyElementwisePerm}$, each costing $6 (\ell + \ell_{\sigma}) n + 7 \ell_{\sigma} n$ bits of communication and $7$ rounds.
    \item One call to each of $\texttt{convertElementwisePerm}$ and $\texttt{invertElementwisePerm}$, adding $26 \ell_{\sigma} n$ bits of communication and $14$ rounds of communication.
\end{itemize}
Therefore, in total, our protocol has a cost of
$$
17 \ell \cdot \ell_{\sigma} n +
13 \ell_{\sigma} n +
6 \ell^2 n
- 6 \ell n
$$
bits of communication and $11 \ell + 7$ rounds of communication.

\begin{figure}[t]
    \begin{subfigure}[b]{\columnwidth}
        \centering
        \includegraphics[width=0.8\columnwidth]{figs/ahi22-comp.png}
        \label{fig:ahi22-comparison-32b}
    \end{subfigure}
    \caption{Comparison of our radixsort protocol with Asharov et al. \cite{ahi22-radixsort} for (a) $\ell = 32$ and (b) $\ell = 64$. All data points are the average of $3$ runs and are run with $34$ threads in the 3-party setting. WAN data points are collected in a WAN environment with 20ms latency. 64-bit radixsort runs out of memory for $2^{27}$ input rows. Our hybrid protocol wins in all scenarios by up to $1.44\times$.}
    \label{fig:ahi22-comparison}
\end{figure}

\begin{table*}[h!]
    \centering
    \begin{tabular*}{\textwidth}{@{\extracolsep{\fill}} >{\small\itshape}p{0.3\textwidth} p{0.35\textwidth} p{0.35\textwidth}}
    \toprule
    & \textbf{Asharov et al. \cite{ahi22-radixsort}} & \textbf{Ours} \\
    \midrule
    Asymptotic Comm. & $O(\ell n \log n)$ & $O(\ell^2 n)$ \\

    Concrete Comm. & $[\ell_\sigma(24 \ell - 20) + 6 (\ell - 1)] n$ &  $[\ell_{\sigma} (17 \ell + 13) + 6 \ell(\ell - 1)]n$ \\

    Asymptotic Rounds & $O(\ell)$ & $O(\ell)$ \\

    Concrete Rounds & $18 \ell - 14$ & $11 \ell + 7$ \\

    $\ell = 1$ Comm. & $128 n \quad (4$ rounds$)$ & $960 n \quad (18$ rounds$)$ \\

    $\ell = 32$ Comm. & $24122 n \quad (562$ rounds$)$ & $23776 n \quad (359$ rounds$)$ \\

    $\ell = 64$ Comm. & $48890 n \quad (1138$ rounds$)$ & $59424 n \quad (711$ rounds$)$ \\
    \bottomrule
    \end{tabular*}
    \caption{Cost analysis of the state-of-the-art radixsort protocol by Asharov et al. \cite{ahi22-radixsort} and our hybrid radixsort protocol.}
    \label{tab:radixsort}
\end{table*}

\paragraph{Discussion.}
In Table \ref{tab:radixsort}, we describe the costs both as a function of $\ell$ and $\ell_{\sigma}$ and for three values of the bitwidth. Our protocol has the better concrete round complexity, scaling as $11 \ell$ rather than $18 \ell$ (although they are asymptotically equivalent).
For communication complexity, because $\ell$ and $\log n$ are often similar, the two protocols are typically asymptotically equivalent.
Concretely, our protocol has the smaller constant factor for the $\ell \cdot \ell_{\sigma}$ term, which for typical values of $\ell$ close to $\ell_{\sigma}$ (e.g., $\ell = \ell_{\sigma} = 32)$ means our protocol will require less communication. However, only our protocol has an $\ell^2$ term, so it is a poor choice when $\ell \gg \ell_{\sigma}$.

We discuss a few specific examples that are shown in Table~\ref{tab:radixsort} for the setting $\ell_\sigma = 32$ (since we never sort more than $4$ billion elements due to speed constraints).
\begin{itemize}
    \item If $\ell = 32$ as well, then our protocol saves a modest $1.4\%$ improvement in total communication and a significant $36\%$ improvement in communication rounds. In practice, the improved round complexity leads to substantially better performance, which we demonstrate empirically in both the LAN and WAN settings in Figure \ref{fig:ahi22-comparison}.
    \item If $\ell = 64$, then our protocol requires $22 \%$ more communication but decreases the number of rounds by $37 \%$. Once again, Figure \ref{fig:ahi22-comparison} shows empirically that \ours performs better in both the LAN and WAN settings when $\ell = 64$, although by a slimmer margin than with $\ell = 32$.
    \item If $\ell = 1$ then the constant additive overhead of input padding and permutation extraction makes our protocol inferior to Asharov et al.
\end{itemize}

\section{Bounding preprocessing for Quicksort} \label{sec:bound-quicksort}

In the two-party SH-DM setting, we need to preprocess Beaver triples for the quicksort comparisons. Quicksort involves a nondeterministic number of comparisons, but we would like to probabilistically bound the running time so we can generate sufficient triples ahead of tim (generating triples in a large batch is substantially faster than generate multiple smaller batches). For $n$ input elements, we generate Beaver triples for $2 n \lg n$ comparisons, which is sufficient approximately $99.9\%$ of the time. We make use of the work of McDiarmid et al. \cite{large-deviations-quicksort}, which discusses this problem in the case of a random pivot selection. Since we shuffle the input list, our method of pivot selection chooses a random element, so the analysis applies.

The expected number of comparisons $q_n$ for quicksort with uniformly random pivot selection as $n \to \infty$ is
\begin{align*}
    q_n &= 2n \ln n - (4 - 2 \gamma)n + 2 \ln n + O(1) \\
    & \leq 2n \ln n = (\ln 4)\;  n \lg n\leq 1.39 n \lg n.
\end{align*}
This says nothing about how concentrated the distribution is around this expectation.  Let $Q_n$ be the random variable representing the actual number of comparisons required for a given run of quicksort. McDiarmid et al. \cite{large-deviations-quicksort} discuss the probability
$p = \Pr\left[\left|Q_n/q_n - 1\right| > \varepsilon\right]$
describing the concentration.
Here, $(\varepsilon+1)$ is the \textit{multiplicative} overhead over the expectation $q_n$. We want to solve for $\varepsilon$ as a function of both the input size $n$ and the probability $p$. We set the probability to a constant $p = 2^{-10}$. This is significantly larger than a typical statistical failure, because if we ``fail,'' we can simply generate more triples in the online phase with no impact on security and an acceptable performance penalty. The chosen value of $p$ makes our generated randomness sufficient for $99.9\%$ of executions.

Theorem 1 of McDiarmid et al. \cite{large-deviations-quicksort} gives the following expression of the above probability.
$$p = n^{-2\varepsilon(\ln(\ln(n)) - \ln(\frac{1}{\varepsilon}) + O(\ln^{(3)}(n)))}
\leq n^{-2\varepsilon(\ln(\ln(n)) - \ln(\frac{1}{\varepsilon}))}$$
To write $\varepsilon$ as a function of $p$, we can take the natural log of both sides of this inequality. Even so, it is not feasible to calculate 
an analytic solution to this equation due to the presence of both $\varepsilon$ and $\ln(1/\varepsilon)$ terms.
To simplify the equation, we use the bound $\varepsilon \geq 0.43$ because $1.39 \cdot 1.43 \approx 2$. We simplify the equation using this value.

$$\varepsilon \leq -\frac{\ln(p)}{2 \ln(n)(\ln(\ln(n)) - \ln(1 / 0.43))}$$

For $p = 2^{-10}$, $\varepsilon \leq 0.43$ for all $n \geq 1300$, meaning the chosen value of $\varepsilon = 0.43$ is suitable for all $n \geq 1300$. To offset the fact that values of $n < 1300$ are not covered by this case (however unlikely it may be for us to need a sort that small in practice), we handle this case separately with an additive buffer of $10,000$ triples whenever the input size is below $n = 2000$. This buffer is loose but sufficient, and it will not have a noticeable impact on performance for any reasonable use of the library.
\section{Analysis of the Join Operator}\label{apdx:join}

In this appendix, we address the security of our join operator and then describe its correctness in detail.

\paragraph{Security.}
Security comes naturally from our oblivious building blocks: since we never open any secret-shared data, and make no assumptions about the true cardinality of input or output tables, any execution of the join operator on tables $L_1, R_1$ is identically distributed to any other execution on similarly-sized tables $L_2,R_2$; $|L_1|=|L_2|$ and $|R_1|=|R_2|$. Note that here we refer to the observable size of the table in memory (both rows and columns) and not the number of valid rows (which would be the ``size'' of the table in the conventional sense). For example, even if $L_2$ and $R_2$ consisted entirely of (secret-shared) zero values, an execution of the join protocol would be indistinguishable from an execution on real values.

In more detail, we can analyze the security of the join operator using the arithmetic black-box model (e.g., \cite[\S 2.1]{DBLP:conf/crypto/EscuderoGKRS20}), which provides an ideal functionality abstraction of all primitive MPC protocols for secret-sharing and reconstructing secrets, performing arithmetic and boolean operations, and converting between arithmetic and boolean shares.
Through inspection of Protocols \ref{alg:prefixagg}, \ref{alg:sorting-wrapper}, and \ref{alg:inner-join}, we can see that the join operators in \ours make use of MPC primitives in a black-box way. Concretely, the join-aggregation in Protocol \ref{alg:inner-join} only uses oblivious logical operators (e.g., \texttt{AND} and \texttt{NOT}), (in)equality comparisons to find distinct keys, and calls to underlying methods like \protoPrefixAgg{}, \textsc{TableSort}, and Concat.
These methods, in turn, are shown in Protocols \ref{alg:prefixagg}--\ref{alg:inner-join} to rely exclusively on oblivious comparisons and logical operations along with the oblivious shuffling and sorting protocols defined in Appendix \ref{apdx:shuffle} and \S \ref{sec:sort}. Finally, inspection of all shuffling and sorting protocols (in Protocols \ref{shuffle}--\ref{sorting-wrapper}) reveals that they only rely on oblivious arithmetic and boolean operators.
Importantly, none of these algorithms opens sensitive data; the only openings are on permuted data (Protocols \ref{apply-elementwise-perm}--\ref{compose-perm}) or control flow data (Protocol \ref{iterative-quicksort}) that have been carefully constructed so that the distribution of openings is independent of the data. As a result, \ours join operators inherit the security guarantees of the underlying MPC protocols.

\paragraph{Correctness.}
In the remainder of this section, we argue the correctness of the inner join operation and its variants. We will use the following important fact when reasoning about joins:

\begin{fact} \label{fact:never-invalidate}
    Invalid rows are never revalidated.
\end{fact}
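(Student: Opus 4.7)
The plan is to do a case analysis on every site in Protocol~\ref{alg:inner-join} (and the variants of \S\ref{sec:generalization}) where a row's validity bit $V$ is written, and to show in each case that a $0$ can never be overwritten by a $1$.

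First I would enumerate the explicit validity updates. For inner join, line~\ref{line:valid} computes $V \gets V_{LR} \land \neg \textsc{Distinct}(O, K_a)$; the left-outer, right-outer, anti-join, and full-outer variants replace the second conjunct but keep the form $V \gets V_{LR} \land Y$. So on any row with $V_{LR} = 0$ the AND forces the new $V$ to $0$. The ``Finalize'' step is similarly trivial: it sorts by $V$ and drops rows but never mutates a surviving $V$ value.

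The main obstacle is the $V$-propagation invocation of \protoPrefixAgg{} on line~\ref{line:prefixagg}, whose update rule unconditionally copies $r_i.V$ into $r_{i+d}.V$ whenever the two rows share the sort key $K_s = (V_{LR}, K, T_{id})$. To rule out a $1$ being copied over a $0$, I would prove the structural invariant that, inside each $K_s$-group, the $V$ column is monotonically non-increasing from left to right, both at the start and after every iteration of the network. For the initial monotonicity, since $V_{LR}$ is a component of $K_s$ every row in a $K_s$-group shares the same $V_{LR}$; and since \textsc{Distinct} on $K_a = (V_{LR}, K)$ marks exactly one row per $K_a$-group, namely its leftmost, and $K_a$ is a prefix of $K_s$, that leftmost row must also be the leftmost row of whichever $K_s$-subgroup contains it. Hence the only possible $V$-patterns inside a $K_s$-group at entry to \protoPrefixAgg{} are $(0,\dots,0)$, $(1,\dots,1)$, and $(0,1,\dots,1)$, all non-increasing.

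For the inductive step I would invoke the Hillis--Steele parallel semantics: each iteration reads old values and writes new ones in parallel, and since $i < i+d$ every intra-group write copies from a strictly earlier position. Combined with the inductive hypothesis, what is copied rightward is no larger than what is already there, so no $0$ is overwritten by a $1$; inter-group writes do nothing because the mask is $0$. Finally, the output $V$ of Protocol~\ref{alg:inner-join} becomes the $V_{LR}$ of any downstream operator, so the argument composes, giving Fact~\ref{fact:never-invalidate} for every join type in \S\ref{sec:generalization}--\S\ref{sec:duplicates}.
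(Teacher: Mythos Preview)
Your approach is considerably more careful than the paper's. The paper dispatches this fact in a single sentence: it is ``guaranteed by only operating on the valid column using logical AND,'' so that $0 \land x = 0$ always. That is a high-level invariant rather than a site-by-site argument, and it quietly elides the one place you correctly flag as the real obstacle: the $V$-propagation inside \protoPrefixAgg{} updates $V$ via a \texttt{Mux} (a copy), not via $\land$. So your route is genuinely different: you trade the paper's concise global invariant---which is not literally true at the level of individual writes---for a structural analysis of the unique operator where that invariant does not apply syntactically. What you gain is an honest proof; what the paper gains is brevity.

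Two small points to tighten. First, ``non-increasing'' should read ``non-decreasing'' throughout: the pattern $(0,1,\dots,1)$ and the inequality you use (``what is copied rightward is no larger than what is already there,'' i.e.\ $V_i \le V_{i+d}$) both describe non-decreasing sequences. Second, bare non-decreasingness is \emph{not} preserved by an arbitrary Hillis--Steele copy step (e.g.\ $(0,1,1,1,1)$ at $d=2$ would yield $(0,1,0,1,1)$). The invariant you actually need is the stronger block form $0^a 1^b$, which all three of your initial patterns satisfy and which \emph{is} preserved because the number of leading zeros doubles each round, keeping $a \ge d$ at every step $d$. With that strengthening, your induction closes.
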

This fact is guaranteed by only operating on the valid column using logical AND. Thus, rows may move from valid to invalid ($\mathtt{Valid} \wedge 0 \rightarrow \mathtt{Invalid}$) but once invalidated can never be revalidated ($\forall x:\mathtt{Invalid} \wedge x \rightarrow \mathtt{Invalid}$). Rows are only validated when creating a \textit{fresh} table.

\begin{theorem}[Informal]
    The inner join protocol is correct.
\end{theorem}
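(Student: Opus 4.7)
The plan is to trace the state of table $O$ after each step of Protocol \ref{alg:inner-join} and verify that the final valid rows constitute exactly the inner join $L \bowtie_K R$. I would first state the target precisely: the output should consist of one row per pair $(l, r)$ with $l \in L$, $r \in R$, $l.V_{LR} = r.V_{LR} = 1$, and $l.K = r.K$, where the row contains $r$'s columns extended with $l.C$. Because $L.K$ is unique among valid rows in $L$, each valid $r$ with a match contributes exactly one row.

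Next I would analyze each step in sequence. After Step 1, $O$ contains the tagged concatenation with $T_{id} = 0$ for $L$ rows and $T_{id} = 1$ for $R$ rows. After the sort in Step 2 on $K_s = \{V_{LR}, K, T_{id}\}$, rows cluster by $(V_{LR}, K)$; within any valid cluster an $L$ row (if present) precedes all $R$ rows, and by uniqueness of $L.K$ there is at most one such $L$ row per cluster. Invalid input rows (with $V_{LR} = 0$) form clusters disjoint from valid ones and never interact with the join output. \textsc{Distinct} on $K_a = \{V_{LR}, K\}$ marks the first row of each cluster, so line \ref{line:valid} invalidates that row. In Step 3, the first \protoPrefixAgg{} call (key $K_a$, function $f_{\mathrm{copy}}$) propagates the $C$ values of the first row of each cluster downward; whenever a cluster has a matching $L$ row, its $C$ values correctly copy into every $R$ row of the cluster. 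The second call (key $K_s$, function $f_{\mathrm{copy}}$) propagates $V$ within the finer $(V_{LR}, K, T_{id})$ subgroups. The crucial observation here is that the first row of the $R$-subgroup inside a cluster was invalidated by \textsc{Distinct} if and only if that cluster contained no $L$ row, since otherwise the $L$ row (with smaller $T_{id}$) is the first of the cluster and absorbs the \textsc{Distinct} bit. Hence unmatched $R$ rows inherit $V = 0$ by propagation, while matched $R$ rows retain $V = V_{LR} = 1$. The third call applies the user-specified aggregation over the same groups. Finally, Step 4 sorts by $V$ in descending order and trims, retaining exactly the set of matched $R$ rows with their copied $C$ values.

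The main obstacle I expect is justifying the second \protoPrefixAgg{} call, specifically the claim that the invalidation bit placed by \textsc{Distinct} correctly ``belongs'' to the $R$-subgroup precisely when no $L$-match exists. This requires appealing both to the total order imposed by sorting on $T_{id}$ as the tiebreaker and to the uniqueness of $L.K$ in the clusters, and must also account for the semantics of the Hillis--Steele style butterfly in \protoPrefixAgg{} where an invalidating value propagates in $O(\log n)$ parallel steps across an entire subgroup. A clean way to handle this is to argue inductively on the butterfly depth, using Fact~\ref{fact:never-invalidate} to ensure that the propagation is monotone and cannot accidentally revalidate a matched $R$ row. The remaining arguments (correctness of $f_{\mathrm{copy}}$ propagation for $C$, non-interference of invalid input rows, and soundness of trimming) follow routinely once this core invariant is established.
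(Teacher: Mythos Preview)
Your proposal is correct and tracks the paper's proof closely: both hinge on the observation that after sorting on $K_s = (V_{LR}, K, T_{id})$, \textsc{Distinct} on $K_a$ lands on the $L$ row when one is present and on the first $R$ row otherwise, and that propagating $V$ with key $K_s$ (which includes $T_{id}$) confines the invalidation to the correct subgroup. The paper organizes this via an explicit four-case analysis of possible group configurations (including two cases with duplicate $L$ keys that lie outside the stated uniqueness assumption but clarify the protocol's behavior for semi-join and \textsc{Count}(\textsc{Distinct})), whereas your step-by-step trace is cleaner for the core unique-key case; neither buys anything the other lacks.

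One small correction to your plan: invoking Fact~\ref{fact:never-invalidate} inside an induction on the butterfly depth is misplaced. That fact concerns line~\ref{line:valid}, where $V$ is updated via logical \texttt{AND}; by contrast, \protoPrefixAgg{} overwrites $V$ via \texttt{Mux} with $f_{\mathrm{copy}}$, which is not an \texttt{AND} and is not a priori monotone. You do not need monotonicity here at all---the paper treats the correctness of \protoPrefixAgg{} as a separate lemma (its Claim~2), and once you know that $f_{\mathrm{copy}}$ writes the first-in-group value throughout the group, the fact that $T_{id} \in K_s$ immediately prevents the $L$ row's $V=0$ from crossing into the $R$-subgroup. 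So rather than re-proving the butterfly inside the join argument, just invoke \protoPrefixAgg{} correctness as a black box.
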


\begin{proof}[Proof Sketch]

Define the key and valid columns of each input table as $L.k, L.V; R.k, R.V$, respectively. Without loss of generality we only consider a single abstract data column, $L.a$ and $R.b$, in each of the two tables. Assume the left table $L$ has unique keys --- each distinct value of $k$ occurs at most once --- and $R$ has an arbitrary key distribution. Note that primary-key foreign-key relations are a strict subset of unique-key many-key relations, since keys on the right in a one-to-many relation need not exist on the left. Finally, while we discuss the case of a single key $k$ below, our protocol transparently handles compound keys, $k_1\ ||\ k_2\ ||\ \dots$, exactly as if all such key columns were concatenated into one.

The first step of the protocol is to concatenate the two tables. We concatenate by merging the schemas of the two tables and copying each table's rows into the appropriate columns. All other values are zero (such as within left-table rows but under right-table columns). We additionally append a $T_{id}$ column, which is $0$ for left-table rows and $1$ for right-table rows, and sort by $V\ ||\ k\ ||\ T_{id}$. Any invalid rows in either table would remain invalid after concatenation and thus be sorted to the top of the table. The resulting table $O=\textsc{Concat}(L, R)$ may have the form:

\begin{table}[h!]
    \centering
    \begin{tabular}{|c|cc|cc|}
        \toprule
        $O.k$ & $O.a$ & $O.b$ & $O.V$ & $O.T_{id}$ \\ \hline
        - & - & - & 0 & - \\
        $k_1$ & $L.a_1$ & 0 & 1 & 0 \\
        $k_2$ & $L.a_2$ & 0 & 1 & 0 \\
        $k_2$ & 0 & $R.b_1$ & 1 & 1 \\ 
        $k_2$ & 0 & $R.b_2$ & 1 & 1 \\ 
        $k_2$ & 0 & $R.b_3$ & 1 & 1 \\ 
        $k_3$ & 0 & $R.b_4$ & 1 & 1 \\
        \bottomrule
    \end{tabular}
\end{table}

Next we discuss the two main functions of our inner join algorithm: (a) identifying matching rows, according to the join keys, and (b) copying data into the result of the join.

We first turn to the identification of matching rows. This step takes the form of invalidating rows which do not belong to the output of the join. In the case of an inner join, we only keep those (valid) rows from the right for which their key matches an associated unique key on the left:
\begin{align*}
O.V_i := O.V_i &\wedge [O.T_{id,i} = 1]&\text{valid row from }R\\
               & \wedge [\exists j : O.k_j = O.k_i &\text{exists matching key}\\
               & \wedge O.T_{id,j} = 0 &\text{from left}\\
               & \wedge O.V_j = 1] &\text{which is valid}
\end{align*}
Due to previously having sorted on $V||k||T_{id}$, we know that if any such $O.k_j$ exists, it will occur as the first position of this $k$-group. If not, a row from the right appears as the first position of this $k$-group, and no rows with this key are in the output of the join. We use the \textsc{Distinct} operator to mark the first row of each group with a $1$. Then, the join output is formed by updating a temporary valid bit, $O.V_o := O.V \land \neg\textsc{Distinct}(V,k)$. That is, $O.V_o$ invalidates any valid rows which are the \textit{first} of their $(V\!,k)$ group.\footnote{We cannot directly update $O.V$, because it will later be used as an aggregation key, during the execution of which we propagate changes to $O.V_o$. However, changing aggregation keys during execution breaks correctness, so we keep a temporary column.}

For each group four possibilities exist, shown in the tables below.


\begin{table}[h!]
\centering
\begin{floatrow}[2]
\ttabbox{%
    \begin{tabular}{|c|c|c|}
        \hline
        $O.k$ & $\neg\textsc{Dist}$ & $O.V$ \\ \hline\hline
        \dots  & \dots & 0   \\
        \dots  & \dots & \dots \\ \hline
        $L.k$    & 0     & 1   \\
        $L.k$    & 1     & 1   \\
        $L.k$    & 1     & 1   \\
        $L.k$    & 1     & 1   \\ \hline
    \end{tabular}
}{\caption*{Case I}}
\ttabbox{%
    \begin{tabular}{|c|c|c|}
        \hline
        $O.k$ & $\neg\textsc{Dist}$ & $O.V$ \\ \hline\hline
        \dots & \dots & 0   \\
        \dots & \dots & \dots \\ \hline
        $L.k$   & 0     & 1    \\
        $L.k$   & 1     & 1    \\ \hline
        $R.k$   & 1     & 1    \\
        $R.k$   & 1     & 1    \\ \hline
    \end{tabular}
}{\caption*{Case II}}
\end{floatrow}
\vspace{16pt}
\begin{floatrow}[2]
\ttabbox{%
    \begin{tabular}{|c|c|c|}
        \hline
        $O.k$ & $\neg\textsc{Dist}$ & $O.V$ \\ \hline\hline
        \dots & \dots & 0   \\
        \dots & \dots & \dots \\ \hline
        $L.k$   & 0     & 1    \\ \hline
        $R.k$   & 1     & 1    \\
        $R.k$   & 1     & 1    \\
        $R.k$   & 1     & 1    \\ \hline
    \end{tabular}
}{\caption*{Case III}}
\ttabbox{%
    \begin{tabular}{|c|c|c|}
        \hline
        $O.k$ & $\neg\textsc{Dist}$ & $O.V$ \\ \hline
        \dots& \dots & 0   \\
        \dots& \dots & \dots \\\hline
        $R.k$  & 0     & 1    \\
        $R.k$  & 1     & 1    \\
        $R.k$  & 1     & 1    \\
        $R.k$  & 1     & 1    \\ \hline
    \end{tabular}
}{\caption*{Case IV}}
\end{floatrow}
\end{table}

\paragraph{Case I} There are duplicate keys on the left, and no matching keys on the right. The validity update procedure marks the first row from the left invalid, and aggregation (below) will mark the entire group invalid. This represents (possibly duplicated) primary keys with no foreign key rows.

\paragraph{Case II} There are duplicate rows on the left, and greater than zero matching keys on the right. The first left row is marked invalid, and aggregation similarly invalidates the entire left group. The right group remains valid.

\paragraph{Case III} This is another matching case, but with exactly one primary key on the left. The single row from the left is invalidated, but the aggregation makes no changes. Case III is frequently observed in realistic workloads, where our join operator operates on an explicit primary-key foreign-key relation.

\paragraph{Case IV} There are no matching rows on the left. Now we invalidate the \textit{first row from the right}, and the subsequent aggregation invalidates \textit{all rows on the right}. With no key on the left, this key is not be part of the output table.
\\

In Cases I and II, there are multiple matching ``primary keys''; we operate as if only the first existed. We tend to avoid this mode of operation, since it does not follow the semantics of pk-fk relations, but it is useful in some environments, such as when performing semi-join. It also allows us to natively perform \verb|COUNT(DISTINCT(...))| operations over many-to-many  joins (with duplicates on both sides).

After the invalid rows are marked, a final aggregation is applied (as part of the copy step, below) to similarly invalidate all rows in the group. This aggregation is a special case of the aggregation step outlined below, as the aggregation keys here are $V\ ||\ k\ ||\ T_{id}$. This allows for an invalidated left row to invalidate all other left rows, and an invalidated right row to invalidate all other right rows, but invalidation will never cross the table boundary.

The size of the output of a one-to-many inner join is bounded by the size of the right table: assume every key on the right is present exactly once on the left. Then the output of the join is exactly the right table. Because our protocol is oblivious, it must be input-independent, even in the worst case. Thus our output must always exactly the size of the right table; we use the valid column to mark rows not actually present in the join output.

The cardinality equivalence means it is most efficient to build the output of the join within the rows of the right table. As such, all data from the right is included in the output by default. Data from the left table, on the other hand, must be explicitly copied. This operation is performed with a \textit{group-by aggregation} and implemented via \protoPrefixAgg{}. In practice, we have noticed that this default behavior also matches the semantics of many queries, so few \texttt{copy} aggregations are required.

Groups are defined as $V\ ||\ k$ (note the exclusion of the table ID column). The aggregation performed is a simple \texttt{Mux}, where the selection bit denotes whether values belong to the same group. We perform a reverse aggregation, which has the effect of aggregating \textit{down}, and copying the first row of each multiplex group into all other rows of that group. Where a row on the left exists, it is the first row, due to the final sort by $T_{id}$; should multiple left rows exist, we only copy the value from the first. All left rows will be invalidated (Case II, above). The join operator does not support, at this stage, many-to-many-key joins. However, with a prior aggregation, or under certain conditions (e.g., the join performs a $\textsc{Distinct}(\cdot)$ aggregation), our operator still applies. See Section \ref{sec:join} and \ref{sec:duplicates} for more details.\end{proof}

\paragraph{Unifying join and aggregation.} The validity-update procedure of join looks very similar to aggregations performed over joined tables. For improved efficiency, we would like to combine these two operations into a single unified control flow, which will approximately halve the number of operations required.

\begin{theorem}[Informal]
    Aggregations can be performed in the same control flow as the join operator provided the following conditions are met:
    \begin{enumerate}
        \item The aggregation keys are the same as the join keys.
        \item No aggregation key is also an aggregation output.
        \item No aggregation input is also an aggregation output for a different aggregation.
    \end{enumerate}
\end{theorem}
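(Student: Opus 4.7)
The plan is to show that when a user aggregation is appended to the list of functions passed to \protoPrefixAgg{} inside Protocol~\ref{alg:inner-join}, the resulting single-pass butterfly network simultaneously produces (i) the correct join output on the valid rows and (ii) the correct per-group aggregation value in each user-supplied output column. I would structure the argument in three steps, leveraging the correctness of the basic join already proved and the structure of the aggregation network in Protocol~\ref{alg:prefixagg}.

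First, I would open one iteration of the aggregation network at depth $d$ and observe that the comparison bit $b = (r_i.K == r_{i+d}.K)$ on line~\ref{line:bit_b} is shared across every function applied in that iteration. Under condition~(1), the aggregation keys coincide with the join keys, so the bit $b$ that gates the join's internal $f_{\mathrm{copy}}$ updates (value propagation from $L$ and validity propagation) also correctly gates the user aggregation: it is $1$ exactly when the two rows belong to the same $K$-group, which is the same partition used by the join. Hence every user aggregation sees precisely the groups defined by the sorted, validity-marked table produced in Step~2 of Protocol~\ref{alg:inner-join}.

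Second, I would argue by induction on the butterfly depth that after iteration $d$, every aggregation output column $G_f$ contains the correct prefix aggregation of $f$ over the run of length at most $2d$ within each group. Condition~(2) is used to keep the grouping key columns immutable throughout the network, so the bit $b$ remains meaningful at every iteration. Condition~(3) is the true independence requirement: because no aggregation's input column is another aggregation's output column, the in-place \texttt{Mux} on each $G_f$ at iteration $d$ reads only values that reflect the state at the start of that iteration, so the usual associativity argument for \protoPrefixAgg{} applies column by column. Self-decomposability of each $f$, already required by Protocol~\ref{alg:prefixagg}, then lets the partial aggregates combine correctly. Finally, the join's own internal $f_{\mathrm{copy}}$ calls write only to join-internal columns that condition~(3) forbids user aggregations from reading, so the two families of updates do not contaminate each other, and the \textsc{Finalize} step removes only invalid rows and thus preserves the user aggregation values on the surviving rows.

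The main obstacle is the simultaneity argument in the inductive step: in a serial presentation we would apply each $f$ in turn, but here all updates within one iteration happen in parallel on the same vector. Formalizing this requires showing that for every column $G_f$ the values read at iteration $d$ are those left at the end of iteration $d-1$, which in turn reduces to checking that the three conditions together rule out both read-after-write hazards (condition~3) and key-corruption hazards (condition~2), and that the key alignment (condition~1) guarantees a single shared $b$ is semantically correct for every aggregation. Once this invariant is verified, correctness of the fused operator reduces to two already-established facts: the correctness of Protocol~\ref{alg:prefixagg} applied separately to each column, and the correctness of the basic inner join established earlier in this appendix.
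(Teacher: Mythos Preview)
Your proposal is correct and follows the same underlying idea as the paper: the three conditions guarantee that the aggregations (and the join's internal $f_{\mathrm{copy}}$ calls) are mutually non-interfering, so a single fused pass of \protoPrefixAgg{} is equivalent to running them sequentially. The paper's own argument is in fact much terser than yours---it only states the intuition that the conditions ``preclude any interaction between individual aggregations, so they can be evaluated in parallel,'' without the explicit induction on butterfly depth or the read-after-write/key-corruption hazard framing you give; your version is a reasonable expansion of that same reasoning.
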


The intuition behind this theorem is that the conditions preclude any interaction between individual aggregations, so they can be evaluated in parallel: \camera{calling $\protoPrefixAgg{}(f_1, \dots)$, $\protoPrefixAgg{}(f_2, \dots)$, $\protoPrefixAgg{}(f_3, \dots)$ is equivalent to}
$$
\protoPrefixAgg{}(\{f_1, f_2, f_3\}, \dots)
$$
If any of the conditions are not true, these calls are \textit{not} independent (and may have side effects), so sequential calls to the operator are required.

\paragraph{Handling other scenarios.} Our operator only provides correct semantics for the case where there are at most unique keys on the left. To handle duplicate keys, we must first pre-aggregate over the left key. This requires an additional call to \protoPrefixAgg{}, after which the aggregated table has unique keys and aggregated values. Then, our operator can be applied. We note that this collapsing is only valid for self-decomposable aggregation functions.

This general technique is also applicable, as alluded to above, if one of the join or aggregation keys are a prefix of the other. (i.e., $K_j = K_a || K^*$ or $K_a = K_j || K^*$). Then, we sort on the longest combined key ($\mathrm{argmax}(|K_j|, |K_a|)$) and perform the join normally. If the aggregation keys are a prefix of the join keys, we need to handle the possible interspersed rows from the left ($T_{id}=0$): \protoPrefixAgg{} requires all rows in a group to be adjacent. We can accomplish this by masking rows from the left from an identity element for the given aggregation (e.g. for \texttt{sum}, 0; for \texttt{min}, $\infty$.) Then, while the join occurs over $K_j=K_a || K^*$, we can still aggregate over $K_a$ by ignoring all $K^*$ keys and taking advantage of the masked identity elements.

Finally, if the aggregation is applied to columns from both the left and the right (as in the Secure Yannakakis query, \cite{Wang2021Secure}) we must first break the aggregation function apart, apply either of the techniques above, and post-aggregate over the joined table. In the Secure Yannakakis example, we have:
\begin{lstlisting}[label={syan},language=SQL]
SELECT T.class, SUM(S.cost * (1 - R.coinsurance))
FROM R, S, T
WHERE R.person=S.person AND S.disease=T.disease
GROUP BY T.class;
\end{lstlisting}
We observe that we can first pre-aggregate the quantity \verb|1-R.coinsurance| per \texttt{person}, join $R\bowtie S$ on \texttt{person}, compute the product with \texttt{cost}, and then post-aggregate by summing over \texttt{disease} and \texttt{class}. This rewriting applies because the \texttt{SUM} function is self-decomposable.

\paragraph{Custom Aggregatons.} \ours users can also define custom aggregation functions, as in the following example that computes a product:
\begin{lstlisting}[label={prod},language=C++]
template<typename A>
A prod(const A& a, const A& b) {
    return a * b;  // User-defined logic here
}
\end{lstlisting}
\ours will call $\texttt{Mux}(g, a, \texttt{prod}(a, b))$ within the aggregation control flow, where $\texttt g$ denotes whether the elements belong to the same group and \texttt a, \texttt b are slices of the vector to be aggregated. For each pair of elements in the same group ($g_i=1$), the aggregation function \texttt{prod} is applied; otherwise, no update occurs. For the example above, this will result in obliviously multiplying all elements of each group.

\paragraph{Unique-key joins.} An additional optimization is possible in the case of unique keys on both sides of the join. (We assume this information is included in the tables' public schema.) In this case, the call to \protoPrefixAgg{} is unnecessary: since each row on the left has at most one match on the right, the call to \textsc{Distinct} is sufficient for computing the join; the output of is then bounded by $\min(|L|,|R|)$ rather than just $|R|$. Additionally, in this context, aggregating over the join keys is a nonsensical operation, so we omit it. This operation is now effectively just a PSI protocol; many more optimizations are possible. However, we only use this optimized join algorithm for the comparison with SecretFlow, since the only join implementation they provide is a PSI-join which requires unique keys.

\subsection{Correctness of other types of joins}

\paragraph{Outer joins.} Outer joins work similarly to an inner join but with different validation rules. That is,
\begin{itemize}
    \item A \textbf{left-outer join} is an inner join, plus all rows from the left.
    \item A \textbf{right-outer join} is an inner join, plus all rows from the right.
    \item A \textbf{full-outer join} contains all rows from both input tables, and is effectively just a concatenation with aggregation.
\end{itemize}

Full-outer join is the simplest case. Since we take all rows, there is nothing new to invalidate. We copy the valid bits from each input table and apply them to the joined output, which is just a concatenation $L||R$. (We may also sort, so that user-defined aggregations can be applied, if necessary.)

For a right-outer join, we do not invalidate any rows from the right. But by the semantics of our join operator, we only include new columns from the left if explicitly specified by the user with a \texttt{copy} aggregation. Any rows \textit{originally} from the left table should always be removed. This validation rule, therefore, is also quite simple: $O.V \leftarrow O.V_{LR}\wedge O.T_{id}$, where rows from the left have $T_{id}=0$.

\tocs{To compute a left outer join, we can compute the inner join but should not invalidate any left-table rows. The desired relation is shown in Table \ref{tab:loj}. We invalidate rows from $R$ which are the first of their group. When this occurs, it means no such row from $L$ exists, so this row is not in the output relation. To only keep \textit{unique} (rather than all) rows from $L$, the second clause could instead be replaced with $O.T_{id} \oplus \textsc{Distinct}(K)$.}

\begin{table}[h!]
    \centering
    \begin{tabular}{ccc||c}
    $V_{LR}$ & $T_{id}$ & \textsc{Distinct} & $V$ \\\hline
    0 & X & X &  0 \\
    1 & 0 & 0 &  1 \\
    1 & 0 & 1 &  1 \\
    1 & 1 & 0 &  1 \\
    1 & 1 & 1 &  0 \\
    \end{tabular}
    \caption{Left outer join validity-update rule $O.V\gets O.V_{LR}\wedge \neg (O.T_{id}~\wedge~\textsc{Distinct}(K))$}
    \label{tab:loj}
\end{table}

\paragraph{Semi-join.} The semi-join $L\ltimes R$ returns all rows in $L$ which match a row in $R$. Observe that we can reframe this operator into one we have already seen by swapping the tables: which operator returns all rows in $R$ which match a row in $L$? This is the inner join $R\bowtie L$. So, to implement $L\ltimes R$, we simply call $R\bowtie L$, and then project the result columns to maintain the semantics of semi-join (namely, only return columns in $L$). This final operation requires no extra work because our inner-join operator, by default, returns all data in columns from the ``right'' (here, $L$). We acknowledge that this transformation of semi-join to inner join is not correct in general, but rather a byproduct of the way our oblivious control flow is implemented for one-to-many joins.

To see why this rewriting works, consider what rows are \textit{removed} from $R \bowtie L$ by the validity-update procedure of our inner join protocol: all rows from $R$, and all rows from $L$ which do not have a matching ``primary key'' in $R$. Of course, in a semi-join, table $R$ is not part of the output relation, so we should always remove all rows from $R$. And, if a row in $L$ does not have a match in $R$, then it should be removed from the output, by the definition of semi-join.

\paragraph{Anti-join.} The inverse of semi-join is anti-join, $L\rhd R$, which \textit{removes} from $L$ all rows matching one in $R$. We apply a similar transformation: we need an operator which will ignore all rows from $R$, and keep rows from $L$ which do not match any in $R$. A partial answer is the \textit{right-outer join} from above, again applied in the opposite direction, $R\rightouterjoin L$. This would keep all rows in the ``right'' (here, $L$), but it would also incorrectly keep rows from $L$ which matched one from $R$.

The solution here is to apply a special aggregation over the valid bit\footnote{Specifically, this is just a \texttt{copy} aggregation, which already has special handling in order to copy rows from the left to right.} which copies the valid bit of the \textit{first} row of a group to all other rows in the group. We do not include $O.T_{id}$ as an aggregation key. Thus, we invalidate all rows for which the first row in that group was previously invalidated by the (right outer) join. Since this join will invalidate all ``left'' ($=R$) rows, any group for which \textit{a row exists in $R$} will have its \textit{first} row invalidated by the join, and in the subsequent aggregation, all such rows from this group will be invalidated. This correctly implements anti-join; to match the semantics of the operator, we again project to the columns on the left. Thanks to our unified join-aggregation control flow, this validation update is performed at the same time as the join, and incurs the same cost as a copy aggregation (i.e., $O(n \log n)$ multiplex operations) in a normal join. Again, we acknowledge that this rewriting of anti-join to right-outer join is not correct in relational algebra, but a convenient optimization in our setting.

One side effect of this rewriting (for both semi-join and anti-join) is that our implementations of these operators transparently handles duplicates in $R$, by Cases I and II, above. Both semi-join and anti-join only consider the \textit{existence}, in table $R$, of keys in $L$, so we need no special handling for duplicate keys.

\subsection{Proof Sketch of \protoPrefixAgg{} correctness}\label{apdx:agg}

\camera{In this section, we provide an argument for why the \protoPrefixAgg{} protocol achieves correctness. As an illustrative example, consider a table with $n = 2^\ell$ rows and three columns: a key column $K$, input column $A$, and output column $G$. We use these parameters for convenience but emphasize that the ideas generalize to any database. Additionally, while the aggregation network requires that the \textit{full} table size is a power of two, each key can have arbitrary cardinality. If the number of input rows is not a power of two, we pad the table with dummy (invalid) rows and remove them after the protocol terminates.}

\camera{The first step of the algorithm is to copy column $A$ into column $G$ (Line 1), since the aggregation function is applied in place. From this point, we only consider column $G$.}

\begin{claim}
    \camera{Values across distinct keys are never aggregated.}
\end{claim}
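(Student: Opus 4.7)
The plan is to prove the claim by induction on the doubling distance $d$ of the outer loop in \protoPrefixAgg{}, maintaining an invariant on column $G$. Specifically, I would show that at the start of every iteration, for each index $j$, the value $r_j.G$ is equal to $f$ applied (in some order) to the $A$-values of a subset of rows that all share the same key as $r_j$.

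First I would invoke the preconditions of \protoPrefixAgg{}: the input table is sorted on the grouping key column $K$, so rows with equal keys form contiguous blocks. This yields a key-sortedness lemma that I will use repeatedly: whenever $r_i.K = r_{i+d}.K$, every intermediate row $r_i, r_{i+1}, \ldots, r_{i+d}$ must share that same key $k$. Next I would handle the base case, which is immediate since line~1 initializes $r_j.G \gets r_j.A$ for every $j$, so each cell is trivially an aggregation over the singleton $\{r_j\}$, whose sole member has key $r_j.K$.

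For the inductive step, I would consider a pair $(r_i, r_{i+d})$ processed at distance $d$ and case-split on $b = [r_i.K = r_{i+d}.K]$. If $b=0$, the update $r_{i+d}.G \gets \texttt{Mux}(b, r_{i+d}.G, g)$ returns $r_{i+d}.G$ unchanged, so the invariant persists at index $i+d$; in particular no $A$-value with a different key ever enters $r_{i+d}.G$. If $b=1$, the key-sortedness lemma forces $r_i$ and $r_{i+d}$ to share a common key $k$; by the inductive hypothesis, both $r_i.G$ and $r_{i+d}.G$ are aggregations over $A$-values drawn exclusively from rows with key $k$. Hence $g = f(r_i.G, r_{i+d}.G)$ only combines values from key-$k$ rows, and writing $g$ into $r_{i+d}.G$ (whose own key is $k$) preserves the invariant. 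Indices not touched at this iteration are unchanged, so the invariant globally survives.

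The main obstacle is less about any single step and more about stating the invariant at the right strength. Because \protoPrefixAgg{} follows a Hillis--Steele-style scan rather than a Blelloch-style one, the exact subset of $A$-values contributing to each $r_j.G$ at a given iteration is nontrivial and would need care to characterize for a full correctness proof (i.e., that each group's final aggregate equals $f$ over the whole group, which also requires commutativity/associativity of $f$). For the narrow claim at hand, however, it suffices to carry the weaker set-membership invariant above, since the key-equality bit $b$ together with sortedness is exactly what prevents any cross-group contribution from ever being committed through the \texttt{Mux}.
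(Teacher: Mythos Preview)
Your proof is correct and rests on the same core observation as the paper: when $b=0$ the \texttt{Mux} leaves $r_{i+d}.G$ unchanged, so no cross-key value is ever committed. The paper's argument is essentially just that single observation stated for an arbitrary $(i,d)$, leaving the induction implicit; you make the induction explicit via the per-cell invariant, which is the more rigorous presentation.

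Two minor remarks. First, your key-sortedness lemma is not actually needed for this claim: once $b=1$ you already have $r_i.K=r_{i+d}.K$, and the inductive hypothesis alone tells you both $G$-cells draw only from that key---the status of intermediate rows is irrelevant here. (Sortedness matters for the companion correctness claim, as you note, but not for non-mixing.) Second, you correctly identify that the harder full-correctness statement would require tracking the exact contributing index set and using associativity/commutativity of $f$; the paper handles that separately, so your weaker set-membership invariant is exactly the right strength for the present claim.
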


\camera{Take an arbitrary row $i$ and distance $d$. Assume the keys in rows $(i, i+d)$ do not match. Then $b \gets 0$ (Line 4) and we update $G_{i+d} \gets \mathtt{Mux}(0, G_{i+d}, g) = G_{i+d}$ (Line 6), so this row is not modified. Therefore, we only aggregate rows with matching keys.}

\begin{claim}
    \camera{The algorithm correctly aggregates groups.}
\end{claim}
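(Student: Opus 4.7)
The plan is to view Protocol~\ref{alg:prefixagg} as a Hillis--Steele-style parallel prefix scan in which every update is gated by the key-equality bit $b$ from Claim 1. Since that claim already rules out any cross-group aggregation, I can fix an arbitrary maximal run $[\ell, r]$ of consecutive rows sharing a common key and reason about it in isolation. The goal then reduces to showing that after the loop terminates, $G_r = f(A_\ell, \dots, A_r)$.

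The key invariant I would prove by induction on the iteration index $t$ (with distance $d = 2^t$) is:
$$
G_j^{(t)} \;=\; f\!\left(A_{\max(\ell,\, j - 2^{t+1} + 1)},\ \dots,\ A_j\right) \qquad \text{for every } j \in [\ell, r].
$$
For the base case $t = 0$ ($d = 1$), the update at $j = \ell$ is either skipped (when $\ell = 0$) or suppressed by Claim 1 (because $K_{\ell - 1} \neq K_\ell$, using that the table is sorted so $[\ell, r]$ is truly maximal), so $G_\ell = A_\ell$; for $j > \ell$ the update produces $G_j = f(A_{j-1}, A_j)$, matching the invariant. For the inductive step, if $j - d < \ell$ the update is again suppressed and one verifies that the invariant window collapses to $[\ell, j]$ both before and after the iteration, leaving $G_j$ unchanged and still consistent; if $j - d \geq \ell$ the keys agree and $G_j$ is overwritten with $f(G_{j-d}^{(t-1)}, G_j^{(t-1)})$, which by the inductive hypothesis equals $f$ applied to the two adjacent aggregates over the index ranges $[\max(\ell, j - 2^{t+1} + 1),\ j - d]$ and $[j - d + 1,\ j]$. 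Associativity of $f$ (inherited from self-decomposability, per \S\ref{sec:composition}) then glues these into a single aggregate over $[\max(\ell, j - 2^{t+1} + 1),\ j]$, closing the induction. Taking $t$ large enough that $2^{t+1}$ exceeds the group length, the window covers all of $[\ell, j]$, and in particular $G_r = f(A_\ell, \dots, A_r)$.

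The main obstacle will be the boundary subcase where the scan window has just reached or overshot the group start $\ell$: there the left operand's window is already clipped at $\ell$ while the right operand still holds a full length-$d$ block, so the bookkeeping has to confirm that the two (possibly asymmetric) ranges are exactly adjacent, together cover $[\ell, j]$, and leave no gap or overlap before associativity finishes the merge. A secondary subtlety is read/write semantics: the induction implicitly uses the parallel-update convention in which every read in round $t$ sees the values from the end of round $t - 1$, so the argument relies on each iteration being effectively double-buffered, which the vectorized implementation in \ours provides.
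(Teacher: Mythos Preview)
Your proof is correct and more rigorous than the paper's own sketch. The paper argues only about the \emph{final} row $j$ of each group and reasons via a binary-tree visualization: the contribution of $A_{j-h}$ reaches $G_j$ along the path encoded by the binary representation of $h$, so each input is incorporated exactly once, and self-decomposability then collapses the nested applications of $f$. Your approach instead maintains an explicit prefix-window invariant for \emph{every} row of the group and inducts on the round index, which is strictly stronger (it yields the full running aggregate at each position, not just the last entry) and makes the boundary bookkeeping explicit rather than implicit in the tree picture. You also surface the double-buffering assumption that the paper leaves to the ``vectorized'' annotation. The tree argument is shorter and gives nice intuition for why each input contributes once; your invariant is easier to verify line by line and generalizes immediately if one later wants the full prefix output rather than a single per-group value.
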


\camera{\protoPrefixAgg{} expects an input table sorted on the key column. This puts all rows of a given key-group next to each other. Assume row $j$ is the final row of some arbitrary group $k$ of size $m\leq n$. Then, we want to argue that the final value $G_j = f(G_{j-m}, \dots, G_j)$.}

\camera{In the final round of the protocol, $G_j$ is updated with the value $n/2$ indices earlier (assume for the sake of argument that $m > n / 2$), $G_j \gets f(G_{j-n/2}, G_j)$. In the prior round, each of these two values were updated with their own value along with the values $n/4$ indices prior. The algorithm proceeds in this manner, taking the next power of two in each round. In the first round we have a distance $d=1$ so adjacent elements are aggregated.}

\camera{This procedure is easy to visualize if we consider this aggregation structure as a tree, where each “prior index” is a left branch, and the “same index” is a right branch. \protoPrefixAgg{} builds a binary tree, where the path through the tree from $G_j$ to $G_{j-h}$, for arbitrary $h \leq m$, corresponds to the binary representation of $h$. Each leaf node has a single unique path to $G_j$, since each offset $h$ has a unique binary representation. Thus, each input value is aggregated once and in order of its appearance in the original list. Since the aggregation function $f$ is self-decomposable, we have that $f(f(X), f(Y)) = f(X, Y)$, so by induction we can show that the final result is $f(G_{j-m}, \dots, G_j)$.} \qed

\subsection{Trimming the output of Join operations}\label{apdx:trim}

In Section \ref{sec:join}, we discussed bounding the outputs of joins to the size of the right table. Here, we elaborate on the heuristic used to automatically govern this trim operation. It is important that, due to the obliviousness of our operators, the ``size'' of tables discussed in this section refers to the secret-shared table size, \textit{not} the number of valid rows (which is not known at runtime).

First, we provide intuition by way of an example. Consider a join between two asymmetric tables, $|L|=10$ and $|R|=10^6$. After join, we have a larger table of size $|O|=10+10^6$. Clearly it is not worth trimming: we must perform valid-bit sort on a table of size $|O|\approx|R|=10^6$, but are only able to remove $|L|=10$ elements. However, if a join is more symmetric, say $|L|=10^6$ and $|R|=2\cdot10^6$, trimming is much more promising. While we sort on $|O|=3\cdot 10^6$ elements, we can then remove $1/3$ of the output rows.

Trimming brings significant improvements in composed queries where we also perform aggregations on the output of the join. Since the odd-even aggregation algorithm requires power-of-two-sized inputs, trimming frequently allows us to ``drop'' below the next power of two, leading to significant speedups in query execution. In the analysis below, we optimize for such composed operation, assuming a join will be followed by additional joins with other tables.

To establish the heuristic, we define the abstract costs of each of our major operators. Let $J(n)$, $V(n)$, $S(n)$, and $A(n)$ be the respective costs of join, valid-bit sort, full sort, and aggregation over $n$ rows (we abstract away aggregation functions at this phase). 

In this analysis, we will say we first join two tables of size $L$ and $R$, with $N:=L+R$ the combined size. A subsequent join is performed with a table of size $T$. Intuitively, we want to trim if the additional cost (valid-bit sorting over $N$ rows) is less than the future savings. Expressed formally, when $J(N + T) > V(N) + J(R + T)$, we should trim.

First, we observe that joins are superlinear: $\forall n,m>0:J(n+m)\geq J(n)+J(m)$. This follows from both the superlinearity of both sorting and aggregation.
\begin{align*}
J(N+T)&>J(L)+J(R+T)\\
J(L)+J(R+T)&>V(N)+J(R+T)\\
J(L)&>V(N)\implies\text{trim}
\end{align*}
That is, we should trim when the cost of performing a join over the trimmed rows would be more expensive than performing a valid-bit sort over the entire table.

We now decompose each operator:
\begin{itemize}
    \item Valid-bit sorting is implemented with radixsort so is approximately linear, $V(n+m)\approx V(n)+V(m)$.
    \item Joins consist of a two single-bit sorts (valid column plus $T_{id}$), a regular sort, and an aggregation; $J(n)\approx 2V(n)+S(n)+A(n)$.
\end{itemize}

Define $\alpha:=R/L$. For most applications involving unique-key joins, $\alpha>1$. Then,
\begin{align*}
V(N)&<J(L)\implies\text{trim}\\
V(L)+V(R)&<2V(L)+S(L)+A(L)\\
(\alpha - 1)V(L) &< S(L) + A(L)
\end{align*}

It is difficult to reason, in general, about the cost of aggregation, $A(\cdot)$, since this cost depends heavily on (possibly user-defined) aggregation functions. However, it can be bounded from below considering only the control flow of the aggregation operator itself. Let $\omega$ be the bitwidth of values in the table. Then both aggregation and quicksort perform at least $n\lg(n)$ comparisons, which, under MPC, each require $\log \omega$ operations. Valid-bit sorting, on the other hand, performs $6n$ $\omega$-bit permutation operations, each of which we can approximate as costing $N$ multiplications, where $N$ is the number of parties required by the MPC protocol. Inserting these costs into the above, we see that:
\begin{align*}
(\alpha - 1)6NL &< 2 L\lg(L) \lg(\omega)\implies\text{trim}\\
3\alpha N &< \lg(L)\lg(\omega)
\end{align*}

For our common use cases, $\omega=128$ bits (due to input padding) and $N\in\{2,3,4\}$. The table below gives some example sizes, and confirms our intuitions about only trimming when we can remove sufficiently many rows from the left table.

\begin{center}
\begin{tabular}{rrr}
$L$  & $\alpha$ for $N=3\text{PC}$ & Trim when $R<\dots$ \\ \hline
100  & 5.2  & 516 \\
10k  & 10.3 & 103k \\
1M   & 15.5 & 15.5M \\
100M & 20.7 & 2.07B \\
\end{tabular}
\end{center}

\tocs{A similar analysis can be performed for radixsort, giving a slightly different heuristic which nonetheless makes similar concrete decisions on real queries.}

\section{Bandwidth Measurements}\label{apdx:bw}

In Table \ref{tab:bw-tpch}, we provide the bandwidth usage for each query used in the paper. Since TPC-H queries have a wide variety of input sizes (nearly an order-of-magnitude spread at a fixed Scale Factor), we report bandwidth usage in kilobytes per row. While these results have been collected at SF10, query execution scales only slightly superlinearly, so the values below provide a good bandwidth estimate across all query sizes reported in \ours. Measurements are also normalized per party; some protocols are not perfectly symmetric, so we divide the \textit{total communication} by the number of computing parties (2 in SH-DM, 3 in SH-HM, and 4 in Mal-HM, respectively). We observe that SH-DM has just under twice the communication complexity of SH-HM, and Mal-HM about three times. \tocsCam{The fix for the attack by Br\"uggemann et al.~\cite{bs26} has minimal communication overhead (consisting of only $O(\log n)$ constant-bandwidth checks), so is not expected to noticeably change the results for Mal-HM.}
Results for other experiments (\S\ref{sec:compare-queries}) are provided in Tables \ref{tab:bw-secrecy}, \ref{tab:bw-sf}, \ref{tab:bw-sf-sort}, and \ref{tab:bw-mpspdz-sort}.

\begin{table}[b]
\centering
\makebox[\linewidth][c]{%
\begin{tabular}{r|rrr@{\hspace{8mm}}r|rrr}
Query & SH-DM & SH-HM & Mal-HM & Query & SH-DM & SH-HM & Mal-HM \\ \hline\hline
TPC-H Q1 & 33.1  & 18.5 &  51.4  & Aspirin      & 39.9 & 21.2 & 61.2 \\
Q2       & 65.0  & 35.2 & 100.2  & Comorbidity  & 42.5 & 23.5 & 65.8 \\
Q3       & 42.2  & 23.6 &  65.4  & Credit       & 26.6 & 14.0 & 40.6 \\
Q4       & 44.1  & 24.3 &  68.5  & SYan         & 42.2 & 23.9 & 65.6 \\
Q5       & 122.5 & 66.4 & 188.7  & Patients     & 22.3 & 12.0 & 34.5 \\
Q6       &  0.2  &  0.3 &   0.6  & Market Share & 17.8 & 10.0 & 27.3 \\
Q7       & 111.9 & 61.0 & 173.4  & Password     & 26.9 & 14.8 & 40.8 \\
Q8       & 120.2 & 65.5 & 186.6  & C. Diff      & 27.4 & 15.0 & 41.6 \\
Q9       & 113.4 & 62.5 & 176.1  & Secrecy Q2   & 32.7 & 18.4 & 50.6 \\
Q10      & 73.6  & 40.5 & 114.7  & & & & \\
Q11      & 41.8  & 23.4 &  65.7  & & & & \\
Q12      & 46.9  & 25.5 &  72.0  & & & & \\
Q13      & 45.6  & 25.6 &  70.4  & & & & \\
Q14      & 18.0  &  9.9 &  27.9  & & & & \\
Q15      & 54.1  & 30.0 &  83.6  & & & & \\
Q16      & 94.2  & 51.3 & 144.3  & & & & \\
Q17      & 53.0  & 29.4 &  82.8  & & & & \\
Q18      & 86.6  & 47.5 & 133.3  & & & & \\
Q19      & 21.2  & 11.5 &  32.8  & & & & \\
Q20      & 82.1  & 45.8 & 127.0  & & & & \\
Q21      & 160.2 & 87.0 & 245.8  & & & & \\
Q22      & 16.6  &  9.3 &  26.2  & & & & \\
\end{tabular}
}
\vspace{2mm}
\caption{Bandwidth (KB) per row and computing party, for all queries and protocols used in \ours.}
\label{tab:bw-tpch}
\end{table}

\begin{table*}[b]
\centering
\makebox[\linewidth][c]{%
    \begin{tabular}{r|rr|rr}
    Input Size & SecretFlow: SBK & \ours: RS (64b) & SecretFlow: SBKvalid & \ours: RS (32b) \\ \hline\hline
    100k & 921.4 & 517.6 & 282.0 & 210.0 \\
    1M & 9\,212.4 & 5\,176.0 & 3\,212.2 & 2\,100.0 \\
    10M & 92\,118.9 & 51\,760.0 & 37\,568.5 & 21\,000.0 \\
    \end{tabular}
}
    \vspace{2mm}
    \caption{Total bandwidth (MB) for SecretFlow and \ours in the experiment of Fig.~\ref{fig:secretflow-comparison}. \ours's optimized radixsort achieves between $1.34\times$ and $1.79\times$ lower bandwidth than the SecretFlow algorithms.}
    \label{tab:bw-sf-sort}
\end{table*}

\begin{table}
    \centering
    \begin{tabular}{r|rr}
    Query       & Secrecy & \ours   \\ \hline\hline
    TPCH-Q6     & 0.3     & 0.3   \\
    Password    & 30.9    & 10.7  \\
    Credit      & 39.2    & 9.3   \\
    Comorbidity & 56.4    & 23.7  \\
    C. Diff     & 63.1    & 12.1  \\
    Aspirin     & 2\,820    & 10.7  \\
    TPCH-Q4     & 4\,825    & 16.5  \\
    TPCH-Q13    & 1\,712    & 20.0  \\
    \end{tabular}
    \captionof{table}{Bandwidth (KB) per row and party for the queries used in Fig.~\ref{fig:secrecy-comparison} (left). Secrecy's high bandwidth is due to its $O(n^2)$ join operator and $O(n\log^2 n)$ sorting algorithm (bitonic sort).}
    \label{tab:bw-secrecy}
\end{table}

\begin{table}
    \centering
    \begin{tabular}{r|rr}
    Query & SecretFlow & \ours  \\ \hline\hline
    S1    &  <1     & 14       \\
    S2    &  <1     & 28       \\
    S3    &  88     & 7\,677    \\
    S4    &  286    & 7\,735      \\
    S5    &  2\,835   & 29\,834    \\
    \end{tabular}
    \captionof{table}{Bandwidth (bytes) per row and party for the queries used in Fig.~\ref{fig:secrecy-comparison} (right). SecretFlow's low bandwidth is due to its join operator, which leaks matching rows to parties and allows them to perform most subsequent operations locally (i.e., without communication).}
    \label{tab:bw-sf}
\end{table}

\begin{table*}
    \centering
    \begin{tabularx}{0.8\textwidth}{X|rr|rr|rr}
    & \multicolumn{2}{c|}{SH-DM (2PC)} & \multicolumn{2}{c|}{SH-HM (3PC)} & \multicolumn{2}{c}{Mal-HM (4PC)} \\
    $k$ & MP-SPDZ & \ours & MP-SPDZ & \ours & MP-SPDZ & \ours \\
    \cline{1-7}\noalign{\vskip 2pt}
    \cline{1-7}
    10 & 165.3          & 5.3       & 18.7       & 11.9         & 272.0       & 40.3 \\
    11 & 363.7          & 10.6      & 37.4       & 23.9         & 600.4       & 80.6 \\
    12 & 793.4          & 21.2      & 74.8       & 47.8         &  1\,315.3   & 161.3 \\
    13 & 1\,719.0       & 42.4      & 149.6      & 95.6         & 2\,860.5    & 322.7 \\
    14 & 3\,702.2       & 84.8      & 299.2      & 191.2        & 6\,182.2    & 645.5 \\
    15 & 7\,933.0       & 169.6     & 598.4      & 382.4        & 13\,288.1   & 1\,291.0 \\
    16 & 16\,922.9      & 339.2     & 1\,196.7   & 764.9        & 28\,424.6   & 2\,582.1 \\
    17 & 35\,959.8      & 678.4     & 2\,393.5   & 1\,529.8     & 60\,547.2   & 5\,164.2 \\
    18 & 76\,147.6      & 1\,356.8  & 4\,787.1   & 3\,059.7     & 128\,492.2  & 10\,328.4 \\
    19 & 160\,751.0     & 2\,713.7  & 9\,574.2   & 6\,119.4     &  271\,780.4 & 20\,656.9 \\
    20 & 338\,414.0     & 5\,427.4  & 19\,148.4  & 12\,238.9    & (OOM)       & \\
    21 & 710\,650.0     & 10\,854.8 & 38\,296.8  & 24\,477.9    &             & \\
    22 & (Crash)        &           & 76\,593.9  & 48\,955.9    &             & \\
    23 &                &           & 153\,187.5 & 97\,911.8    &             & \\
    24 &                &           & 306\,375.0 & 195\,823.6   &             & \\
    25 &                &           & (OOM)      &              &             & 
    \end{tabularx}
    \vspace{2mm}
    \caption{Total bandwidth (MB) for MP-SPDZ and \ours in the experiments of Fig.~\ref{fig:mpspdz-comparison} (oblivious radixsort on $2^k$ rows). \ours's optimized radixsort achieves between $1.57\times$ and $65.5\times$ lower bandwidth than the MP-SPDZ algorithm. MP-SPDZ crashes at $2^{22}$ in SH-DM and runs out of memory at $2^{25}$ in SH-HM and $2^{20}$ in Mal-HM.}
    \label{tab:bw-mpspdz-sort}
\end{table*}

\end{document}